\providecommand{\U}[1]{\protect\rule{.1in}{.1in}}
\newtheorem{thm}{Theorem}
\newtheorem{cor}{Corollary}
\newtheorem{rem}{Remark}
\newenvironment{proof}[1][Proof]{\noindent\textbf{#1.} }{\ \rule{0.5em}{0.5em}}
\normalsize\setlength{\parskip}{\baselineskip}
\numberwithin{equation}{section}
\begin{document}

\title{Risk Modelling on Liquidations with L\'{e}vy Processes
%Expected Discounted Penalty Function at Liquidation under the Spectrally Negative L\'{e}vy Risk Processes\\ or\\
%Liquidation Risk Modelling with L\'{e}vy Processes
%\thanks{Corresponding author:
%Aili Zhang.
%}
}

\author{Aili Zhang\thanks{School of Statistics and Mathematics, Nanjing Audit University, Nanjing 211815, China. E-mail: zhangailiwh@126.com.},\,\,\,\,\, Ping Chen\thanks{Department of Economics, The University of Melbourne, Parkville,
Victoria 3010, Australia. E-mail: pche@unimelb.edu.au},\,\,\,\,\, Shuanming Li\thanks{Department of Economics, The University of Melbourne, Parkville,
Victoria 3010, Australia. E-mail: shli@unimelb.edu.au}\,\,\,\,\, and\,\,\,\,\,Wenyuan Wang\thanks{Corresponding author. School of Mathematical Sciences, Xiamen University, Xiamen, Fujian, China. E-mail: wwywang@xmu.edu.cn}}

\maketitle
\vspace{-.25in}
\begin{abstract}
It has been decades since the academic world of ruin theory defined the insolvency of an insurance company as the time when its surplus falls below zero. This simplification, however, needs careful adaptions to imitate the real-world liquidation process. Inspired by Broadie et al. (2007) and Li et al. (2020), this paper uses a three-barrier model to describe the financial stress towards bankruptcy of an insurance company. The financial status of the insurer is divided into solvent, insolvent and liquidated three states, where the insurer's surplus process at the state of solvent and insolvent is modelled by two spectrally negative L\'{e}vy processes, which have been taken as good candidates to model insurance risks. We provide a rigorous definition of the time of liquidation ruin in this three-barrier model. By adopting the techniques of excursions in the fluctuation theory, we study the joint distribution of the time of liquidation, the surplus at liquidation and the historical high of the surplus until liquidation, which generalizes the known results on the classical expected discounted penalty function in Gerber and Shiu (1998). The results have semi-explicit expressions in terms of the scale functions and the L\'{e}vy triplets associated with the two underlying L\'{e}vy processes. The special case when the two underlying L\'{e}vy processes coincide with each other is also studied, where our results are expressed compactly via only the scale functions. The corresponding results have good consistency with the existing literatures on Parisian ruin with (or without) a lower barrier in Landriault et al. (2014), Baurdoux et al. (2016) and Frostig and Keren-Pinhasik (2019).
%{\color{red}The results have semi-explicit expressions in terms of the scale functions and the L\'{e}vy triplets associated with the two underlying L\'{e}vy processes, and have semi-explicit expressions only in terms of the scale functions for the special case when the two underlying L\'{e}vy processes coincide with each other, in which case our results have good consistency with the existing literature on Parisian ruin with(out) a lower barrier in Landriault et al. (2014), Baurdoux et al. (2016) and Frostig and Keren-Pinhasik (2019).}
Besides, numerical examples are provided to illustrate the underlying features of liquidation ruin.

\end{abstract}\vspace{-.15in}

\textbf{Keywords:} Spectrally negative L\'{e}vy process, Liquidation time, Expected discounted penalty function, Discounted joint probability density, Liquidation probability.

2000 Mathematical Subject Classification:  60G51,\,\,91B30,\,\,60G40\vspace{-.2in}

\section{Introduction}\vspace{-.2in}

In the classical risk theory, the ultimate ruin occurs if an insurance company has no sufficient assets to meet its liabilities. That is, when the surplus ever falls to zero or below, the insurance company is defined as insolvent. In this somewhat vague statement, the state of solvency is described by a single barrier zero of the surplus process. Under this definition, ruin theory has been an area of study for actuaries and mathematicians for many decades. However, the practical issue of solvency is far more complicated than this approximation.

In the United States, bankruptcy is governed by the \lq\lq US Bankruptcy Code\rq\rq. %Businesses and individuals seeking relief under the US Bankruptcy Code are allowed to file a petition under the Bankruptcy Code chapters 7, 9, 11, 12, 13 and 15.
To resolve the financial distress of a business whose liability is greater than its assets, two options can be taken when filing for bankruptcy: Chapter 7 and Chapter 11. A business can seek Chapter 7 bankruptcy only if it immediately ceases all operations and hence Chapter 7 is usually referred to as \lq\lq liquidation", this type of bankruptcy is a popular option for small business owners. Chapter 11 bankruptcy, on the other hand, is often referred to as \lq\lq reorganization" or \lq\lq rehabilitation" bankruptcy because the process provides the business with an opportunity to reorganize its financial structure such as debts and to try to re-emerge as a healthy organization while continuing to operate. Chapter 11, which is more expensive than Chapter 7, is typically intended for medium to large-sized businesses. According to Corbae and D'Erasmo (2017), 80\% of bankruptcies of publicly traded companies  are filed under Chapter 11, while only 20\% are Chapter 7 liquidations.

%According to the latest data released by the United States Court (Table F-2: the 12-month period ending December 31, 2019), the number of business cases filed for bankruptcy are 3625 by Chapter 7, and 1498 by Chapter 11.

%In simple words, Chapter 7 liquidation are past the stage of Chapter 11 reorganization and must sell off any un-exempt assets to pay creditors.
%Although the company's management continues to run the show on a day-to-day basis, all significant business decisions have to be approved by a bankruptcy court.

% https://www.andersonkill.com/Publication-Details/PublicationID/895

From academic point of view, it is natural and straightforward to describe Chapter 7 bankruptcy by a single barrier model of the surplus process, see for example Leland (1994), where bankruptcy is triggered when the firm's asset value falls to the debt's principal value. While Chapter 11 is a long and costly process, the grant of a chance to reorganize debts has been an attractive feature to many large corporations. Therefore, more and more researchers in finance try to imitate the process of Chapter 11 bankruptcy in their models. Broadie et al. (2007) adopted a three-barrier model to characterize the financial states of a bank: distress, Chapter 11 bankruptcy and Chapter 7 liquidation. Antill and Grenadier (2019) formulated a model in which shareholders can choose both their timing of default and the chapter of bankruptcy. They also examined how this flexibility alters the capital structure decisions of the firm.

Given the significance of Chapter 11 filing, the regulators of insurance companies also keep up with the trend aiming to provide crucial safeguards for policyholders and for the economy. A review on the changes in the U.S. insurance regulatory system can be found in Li et al. (2020). We only give a brief introduction on the current regulatory system in the U.S. and E.U. for self-completeness.

In the United States, insurance companies are not subject to the federal bankruptcy code, and their financial health is monitored by the National Association of Insurance Commissioners (NAIC) and vary by state. The primary goal is to make sure there is sufficient capital for insurers to operate and meet their obligations to policyholders and other claimants. The U.S. method of measuring whether capital is adequate is called Risk Based Capital (RBC). Based on the amount of insurance the company writes, the lines of business it writes, the assets it invests in and other measures, an absolute least amount of capital an insurer needs is calculated. If an insurer's capital dwindles, the regulators have the opportunity to intercede. The closer the actual capital gets to the risk-based minimum, the more powerful the intervention can be.

Analogous to the regulatory spirit in the U.S., the current supervisory regime for insurance companies in the European Union is also a risk-based capital regime. Since January 2016, the E.U. insurers are governed by the Solvency II regulatory regime. A three-level capital requirement system is applied: technical provision is to fulfil the obligations towards policyholders and other beneficiaries; minimum capital requirement (MCR) is a safety net and reflects a level of capital below which ultimate supervisory action would be triggered; solvency capital requirement (SCR) enables an insurance company to absorb significant unforeseen losses and that gives reasonable assurance to policyholders and beneficiaries.

To imitate this real-world process of bankruptcy, the classical definition of ruin describes Chapter 7 liquidation reasonably well by using a single barrier, but fails to describe the complex Chapter 11 reorganization. In view of the recent regulatory development described in the above, and following the pioneering work by Li et al. (2020) in the insurance sector, we use a three-barrier model to describe the financial stress towards bankruptcy of an insurance company. This paper distinguishes between {\it insolvent} and {\it bankruptcy}. A company can be insolvent without being bankrupt, but it cannot be bankrupt without being insolvent. Insolvency is the inability to pay debts when they are due, while bankruptcy is usually a final alternative with the failures of all the other attempts to clear debt.

To better reflect the meaning of the three barriers, this paper adopts a different naming system of the three barriers from Li et al. (2020). The highest barrier is called \lq\lq safety\rq\rq barrier (denoted by $c$). If the surplus process stays above this barrier, this indicates the insurer has enough buffer to settle all claims in extreme situations which means the insurer is a healthy financial institution. Otherwise, an investigation of the insurer's business is carried out by the regulator on the insurer's financial capacity to meet both its short-term and long-term liabilities. If the surplus process keeps shrinking, the intermediary barrier is called \lq\lq reorganization\rq\rq or \lq\lq rehabilitation\rq\rq barrier (denoted by $b$), which triggers the regulator's intervention in the insurer's business operation. The regulator may give a broad range of directions to the insurer with respect to the carrying on of its business, including prohibiting the insurer from issuing further policies, prohibiting it from disposing of assets, requiring it to make provisions in its accounts and requiring it to increase its paid-up capital. The regulator's rehabilitation intervention continues until the surplus climbs up to the safety barrier, or transfers to the wind-up procedure when the lowest barrier is breached. We call the lowest barrier \lq\lq liquidation\rq\rq barrier (denoted by $a$). Once an insurance company has been liquidated, it is completely dissolved and permanently ceases operations. The three-barrier system with $a<b<c$ divides the state of an insurer into {\it solvent}, {\it insolvent} and {\it liquidated} ({\it bankrupt}) three states. The detailed transition between states is presented in Section 2.

% Zhang Zhimin paper on levy as good candidate of surplus process
% https://www.mdpi.com/2227-9091/7/2/37/htm
Another feature of this paper is the insurer¡¯s surplus process is modelled by spectrally negative L\'{e}vy processes, which are stochastic processes with stationary independent increments and with sample paths of no positive jumps. L\'{e}vy processes have been taken as good candidates to model insurance risk, see Shimizu and Zhang (2019) for a review for these features of L\'{e}vy processes. The application of spectrally negative L\'{e}vy processes in risk theory can be seen in Yang and Zhang (2001), Huzak et al. (2004), Chiu and Yin (2005), Garrido and Morales (2006), Biffis and Morales (2010), Cheung et al. (2010), Yin et al. (2014), Albrecher et al. (2016), %Shimizu and Zhang (2017),
%Avram et al. (2017),
Loeffen et al. (2018), Wang et al. (2020) and Wang and Zhou (2020). Based on the time of liquidation, this paper studies an extended definition of the expected discounted penalty function expresses in terms of the q-scale functions and the L\'{e}vy triplet associated with the L\'{e}vy process. The joint distribution of the time of liquidation, the surplus at liquidation and the historical high of the surplus until liquidation is derived.

From technical point of view, our mathematical argument is based on a heuristic idea presented in Li et al. (2020) which consists of distinguishing the excursions away from the three barriers (i.e., $a$, $b$ and $c$) of the underlying diffusion surplus process. In the context of L\'evy processes, we  provide a rigorous definition of the time of liquidation ruin. With the help of the fluctuation theory for spectrally negative L\'evy processes (particularly in Loeffen et al. (2014), Wang et al. (2020), etc.) and delicate path analysis arguments, this paper derives a semi-explicit and compact expression of the extended Gerber-Shiu function at liquidation ruin, in terms of the so-called scale functions and the L\'evy triplet associated with the underlying L\'evy process. From our results, we can easily deduce the Gerber-Shiu distribution and the two-sided exit identities
at Parisian ruin which was originally obtained by Landriault et al. (2014), Baurdoux et al. (2016), and Frostig and Keren-Pinhasik (2019). In addition, compared with the fluctuation identities obtained in Landriault et al. (2014) and Frostig and Keren-Pinhasik (2019) where the spectrally negative L\'evy process is assumed to have bounded path variation, this paper unifies the results for bounded and unbounded variation.
%Landriault et al. (2014).es

%The classical ruin theory has been criticised for sacrificing realism for the sake of mathematical tractability and development.

The remaining part of this paper is organized as follows. In Section 2 we review the basics of
the spectrally negative L\'{e}vy processes, the associated scale functions, and the existing results of the exit problems. Section 3 presents the semi-explicit expression of the extended Gerber-Shiu function at the time of liquidation ruin. The application of our main results in the case of Parisian ruin is provided in Section 4. In Section 5, several numerical examples are studied to illustrate the features of our results.

 \vspace{-.2in}

  %%%%%%%%%%%%%%%%%%%%%%%%%%%%%%%%%%%%%%%%%%%%%%%%%%%%%%%%%%%%%%%%

\section{Preliminaries of the Spectrally Negative L\'evy Process}
\setcounter{section}{2}

%\subsection{The L\'{e}vy Insurance Risk Process}

In this section we review some preliminaries and fundamental results for fluctuation problems of the spectrally negative L\'{e}vy processes.
Let $X=\{X_{t};t\geq0\}$ be a spectrally negative L\'{e}vy process defined on a filtered probability space $(\Omega,\{\mathcal{F}_{t};t\geq0\},\mathbb{P})$ with the natural filtration $\{\mathcal{F}_{t};t\geq0\}$, that is, $X$ is a stochastic process with stationary and independent increments and no
positive jumps. To avoid trivialities, we exclude the case where $X$ has monotone paths.
Denote by $\mathbb{P}_{x}$ the conditional probability given $X_{0}=x$, and by $\mathbb{E}_{x}$ the associated conditional expectation. For notational convenience, we write $\mathbb{P}$ and $\mathbb{E}$ in place of $\mathbb{P}_{0}$ and $\mathbb{E}_{0}$, respectively. The Laplace transform of spectrally negative L\'{e}vy process $X$ is given by
$$\mathbb{E}\left(\mathrm{e}^{\theta X_{t}}\right)=\mathrm{e}^{t\psi(\theta)},$$ for all $\theta\geq 0$, where
$$\psi(\theta)=\gamma\theta+\frac{1}{2}\sigma^2\theta^2+\int_{0}^{\infty}
(\mathrm{e}^{\theta z}-1+\theta z\mathbf{1}_{(0,1]}(z))\upsilon(\mathrm{d}z),$$
for $\gamma\in(-\infty,\infty)$ and $\sigma\geq 0$, where $\upsilon$ is a $\sigma$-finite measure on $(0,\infty)$ such that
$$\int_{0}^{\infty}(1\wedge z^2)\upsilon(\mathrm{d}z)<\infty.$$
The measure $\upsilon$ is called the L\'{e}vy measure of $X$, and $(\gamma,\sigma,\upsilon)$ is called the L\'{e}vy triplet of $X$. Since the Laplace exponent $\psi$ is strictly convex and $\lim_{\theta\rightarrow \infty}\psi(\theta)=\infty$, there exists the right inverse of $\psi$ defined by $\Phi_{q}=\sup\{\theta\geq 0:\psi(\theta)=q\}$. We present the definition of the scale functions $W_{q}$ and $Z_{q}$ of $X$. For $q\geq0$, $W_{q}:\,[0,\infty)\rightarrow[0,\infty)$ is defined as the unique strictly increasing and continuous function on $[0,\infty)$ with Laplace transform
\begin{eqnarray}
\int_{0}^{\infty}\mathrm{e}^{-\theta x}W_{q}(x)\mathrm{d}x=\frac{1}{\psi(\theta)-q},\quad \theta>\Phi_{q}.\nonumber
\end{eqnarray}
For convenience, we extend the domain of $W_{q}(x)$ to the whole real line by setting $W_{q}(x)=0$ for $x<0$. Associated to the first scale function $W_{q}$ is the second scale function $Z_{q}$ defined by
\begin{eqnarray}
Z_{q}(x)=1+q\int_{0}^{x}W_{q}(z)\mathrm{d}z,\,\,\,\,x\in[0,\infty),\nonumber
\end{eqnarray}
with $Z_{q}(x)\equiv1$ over the negative half line $(-\infty,0)$.
For notional simplicity, we write $W=W_{0}$ and $Z=Z_{0}$.
In addition, for $p,q\geq 0$, $x>0$ and $x+w\geq 0$, we define
\begin{eqnarray}
\omega^{(q,p)}(w,x)\hspace{-0.3cm}&:=&\hspace{-0.3cm}W_{p}(w+x)-(p-q)\int_{0}^{x}W_{p}(z+w)W_{q}(x-z)\mathrm{d}z
\nonumber\\
\hspace{-0.3cm}&=&\hspace{-0.3cm}
W_{q}(x+w)+(p-q)\int_{0}^{w}W_{q}(x+w-z)W_{p}(z)\mathrm{d}z,\label{equi.def.w}
\end{eqnarray}
and
\begin{eqnarray}
\ell^{(q,p)}(w,x)\hspace{-0.3cm}&:=&\hspace{-0.3cm}Z_{p}(w+x)-(p-q)\int_{0}^{x}Z_{p}(z+w)W_{q}(x-z)\mathrm{d}z
\nonumber\\
\hspace{-0.3cm}&=&\hspace{-0.3cm}
Z_{q}(x+w)+(p-q)\int_{0}^{w}W_{q}(x+w-z)Z_{p}(z)\mathrm{d}z.\label{equi.def.z}
\end{eqnarray}
It is well known that
\begin{eqnarray}\label{s.f.l.}
\lim_{x\rightarrow\infty}\frac{W^{(q)\prime}(x)}{W_{q}(x)}=\Phi_{q} ,\quad \lim_{y\rightarrow\infty}\frac{W_{q}(x+y)}{W_{q}(y)}=\mathrm{e}^{\Phi_{q}x}.
\end{eqnarray}

The first passage times of level $w\in (-\infty,\infty)$ for the process $X$ is defined as
\begin{eqnarray}
\tau_{w}^{+}=\inf\{t\geq 0: X_{t}>w\} \,\,\, {\rm and} \,\,\, \tau_{w}^{-}=\inf\{t\geq 0: X_{t}<w\},\nonumber
\end{eqnarray}
with the convention that $\inf\emptyset=\infty$.
For $w>0$ and $x\leq w$, the two-sided exit problem is given by
\begin{eqnarray}\label{two.side.e.}
\mathbb{E}_{x}\left(\mathrm{e}^{-q\tau_{w}^{+}}\mathbf{1}_{\{\tau_{w}^{+}<\tau_{0}^{-}\}}\right)
=\frac{W_{q}(x)}{W_{q}(w)}, \quad  x\in(-\infty, w],
\end{eqnarray}
\begin{eqnarray}\label{two.side.d.}
\mathbb{E}_{x}\left(\mathrm{e}^{-q\tau_{0}^{-}}\mathbf{1}_{\{\tau_{0}^{-}<\tau_{w}^{+}\}}\right)
=Z_{q}(x)-Z_{q}(w)\frac{W_{q}(x)}{W_{q}(w)}, \quad  x\in(-\infty, w].
\end{eqnarray}
We recall the resolvent measure as follows.
For $q\geq 0, w>0$, and $x,y\in[0,w]$,
\begin{eqnarray}\label{r.m.}
\int_{0}^{\infty}\mathrm{e}^{-qt}\mathbb{P}_{x}(X_{t}\in\mathrm{d}y, t<\tau_{w}^{+}\wedge\tau_{0}^{-})\mathrm{d}t=\left(\frac{W_{q}(x)W_{q}(w-y)}
{W_{q}(w)}-W_{q}(x-y)\right)\mathrm{d}y.
\end{eqnarray}
It is seen from Corollary 3.2 and Remark 3.3 in Wang et al. (2020) that
\begin{eqnarray}\label{mar.}
\hspace{-0.3cm}&&\hspace{-0.3cm}
\mathbb{E}_{x}\left(\mathrm{e}^{-q\tau_{w}^{-}}; \,X(\tau_{w}^{-}-)\in\mathrm{d}y,
\,-X(\tau_{w}^{-})\in\mathrm{d}\theta,\,\tau_{w}^{-}<\tau_{z}^{+}\right)
\nonumber\\
\hspace{-0.3cm}&=&
\hspace{-0.3cm}
W_{q}(x-w)\,\upsilon\left(y+\mathrm{d}\theta\right)
\left[
\frac{W_{q}(z-y)}{W_{q}(z-w)}-\frac{W_{q}(x-y)}{W_{q}(x-w)}
\right]\mathbf{1}_{\{w<y\leq z\}} \mathbf{1}_{\{-w<\theta\}} \mathrm{d}y,\quad w\leq x\leq z,
\end{eqnarray}
and
\begin{eqnarray}\label{mar.1}
\hspace{-0.3cm}
\mathbb{E}_{x}\left[\mathrm{e}^{-q\tau_{w}^{-}}; \,X(\tau_{w}^{-})=w,
\,\tau_{w}^{-}< \tau_{z}^{+}\right]
=
\frac{\sigma^{2}}{2}\left[W_{q}^{\prime}(x-w)-W_{q}(x-w)\frac{W_{q}^{\prime}(z-w)}
{W_{q}(z-w)}\right],\quad w\leq x\leq z,
\end{eqnarray}
where $\sigma$ is the Gaussian coefficient associated with the L\'evy process $X$.

The three barriers in our model are denoted by $a$ (liquidation or bankruptcy), $b$ (rehabilitation or reorganization) and $c$ (safety). We require $a<b<c$ and $c>0$. The state of liquidation is an absorbing state, that is, once an insurer's surplus process hit $a$, it is completely dissolved and permanently ceases operations. When the surplus stays above $c$, the insurer is a healthy financial institution and is able to meet all its liabilities. Once the surplus process falls below $c$, the insurer is under regulator's supervision, which does not lead to the intervention of the insurer's daily operations but an investigation on the insurer's financial status will be conducted. As the situation is getting worse, once $b$ is down crossed, the regulators intervene the daily operation and hence the surplus process will follow a different spectrally negative L\'{e}vy process as given in \eqref{model} in below.

The three-barrier system with $a<b<c$ divides the state of an insurer into {\it solvent}, {\it insolvent} and {\it liquidated} ({\it bankrupt}) three states. Before the final liquidation, the insurer remains at the state of {\it solvent} or {\it insolvent}, where different spectrally negative L\'{e}vy processes are adopted, indicating that the regulators can intervene the everyday operations of the insurer and hence can alter the dynamics of the surplus process. To measure the maximum tolerable time period for the regulator to allow the insurer as staying in the state of {\it insolvent}, a \lq\lq grace period\rq\rq is granted. Then the state of {\it insolvent} refers to a period of time in which the surplus starts from a value equals to or below $b$ until the lowest barrier $a$ is hit and transfers directly to the state of {\it liquidated}, or until the up-crossing time of barrier $c$ within the grace period and transfers to the state of {\it solvent}, or until the staying period between $a$ and $c$ exceeds the granted grace period and transfers to the state of {\it liquidated}. In other words, a surplus value below $b$ indicates a poor financial position of the insurer, the state of which can only be transferred to {\it solvent} when the safety barrier $c$ can be achieved, within the tolerable grace period, otherwise, the insurer will be liquidated.

In this paper, the surplus process $U$ will be given by
\begin{eqnarray}
\label{model}
\mathrm{d}U_{t}
=\begin{cases}
\mathrm{d}X_{t},\quad\mbox{when } U_{t}\mbox{ is in solvent state},\\
\mathrm{d}\widetilde{X}_{t},\quad\mbox{when } U_{t}\mbox{ is in insolvent state},
\end{cases}
\end{eqnarray}
with $\widetilde{X}$ being another L\'evy process with L\'evy triplet
$(\widetilde{\gamma},\widetilde{\sigma},\widetilde{\upsilon})$ and Laplace exponent $\widetilde{\psi}$. Let the two scale functions $\mathbb{W}_{q}$ and $\mathbb{Z}_{q}$ associated with $\widetilde{X}$ be defined in the same manner as $W_{q}$ and $Z_{q}$ while with $\psi$ replaced by $\widetilde{\psi}$, and write $\mathbb{W}=\mathbb{W}_{0}$ and $\mathbb{Z}=\mathbb{Z}_{0}$.
In addition, denote by $\widetilde{\mathbb{P}}_{x}$ the conditional probability given $\widetilde{X}_{0}=x$, and by $\widetilde{\mathbb{E}}_{x}$ the associated conditional expectation. %For notational convenience, we write $\widetilde{\mathbb{P}}$ and $\widetilde{\mathbb{E}}$ in place of $\widetilde{\mathbb{P}}_{0}$ and $\widetilde{\mathbb{E}}_{0}$, respectively.
And, with a little abuse of notations, under $\widetilde{\mathbb{P}}_{x}$, $\tau_{w}^{+}$ ($\tau_{w}^{-}$) is taken as
the first up-crossing (down-crossing ) time of $w$ for the process $\widetilde{X}$.

Define
$$\zeta^{-}_{b,1}=\zeta^{-}_{b}:=\inf\{t\geq 0; U_{t}<b\}\,\,\,\mbox{ and }\,\,\,\zeta^{+}_{c,1}=\zeta^{+}_{c}:=\inf\{t\geq \zeta^{-}_{b,1}; U_{t}>c\},$$
respectively to be the first time that the process $U$ down-crosses the level $b$ and the first time $U$ up-crosses the level $c$ after $\zeta^{-}_{b,1}$. We recursively define two sequences of stopping times $\{\zeta^{-}_{b,k}\}_{k\geq 1}$ and $\{\zeta^{+}_{c,k}\}_{k\geq 1}$ as follows. For $k\geq 2$, let $$\zeta^{-}_{b,k}:=\inf\{t\geq \zeta^{+}_{c,k-1}; U_{t}<b\}\,\,\, \mbox{ and }\,\,\,\zeta^{+}_{c,k}:=\inf\{t\geq \zeta^{-}_{b,k}; U_{t}>c\}.$$
Then, we define $$\kappa:=\inf\{k\geq 1; \zeta^{-}_{b,k}+e_{\lambda}^{k}<\zeta^{+}_{c,k}\}$$ and let
\begin{eqnarray}
\zeta_{b, c, e_{\lambda}}:=\begin{cases}
\zeta^{-}_{b,\kappa}+e_{\lambda}^{\kappa}\,\,\,\,\,\,\mbox{ when }\,\,\,\kappa<\infty,\\
\infty,\quad\quad\quad\,\, \mbox{ when }\,\,\,\kappa=\infty,
\end{cases}\nonumber
\end{eqnarray}
where $\{e_{\lambda}^{k}\}_{k\geq 1}$ is a sequence of independent and  exponentially distributed random variables with parameter $\lambda$. We use $\{e_{\lambda}^{k}\}_{k\geq 1}$ to model the duration of grace period granted by the regulator. We also assume that $\{e_{\lambda}^{k}\}_{k\geq 1}$ is independent of $X$ and $\widetilde{X}$.

\begin{rem}
\label{grace}
A period of grace is the maximal amount of time the company is allowed to stay in the state of {\it insolvent}. It is offered to enable the company to better shape its financial condition. In practice it could range from several months to several years, see Broadie et al. (2007). The grace period could be thought of as a multiple renegotiated period, according to the creditors' interest and the regulators' policies. And hence is usually not a deterministic constant. As adopted by Li et al. (2020), we use a sequence of independent and exponentially distributed random variables to model the grace periods when the insurer is in the state of {\it insolvent}.
\end{rem}

Based on the passage times defined in the above, we introduce the time of liquidation as
\begin{eqnarray}
\label{liquidation}
T:=\zeta_{a}^{-}\wedge \zeta_{b, c, e_{\lambda}}.
\end{eqnarray}
To avoid the trivial case that liquidation occurs with probability one, we assume throughout the paper that $\psi^{\prime}(0+)>0$, that is, the positive safety loading condition holds.

\begin{figure}[!htp]
\centering{}\includegraphics[width=6.5in,height=4in]{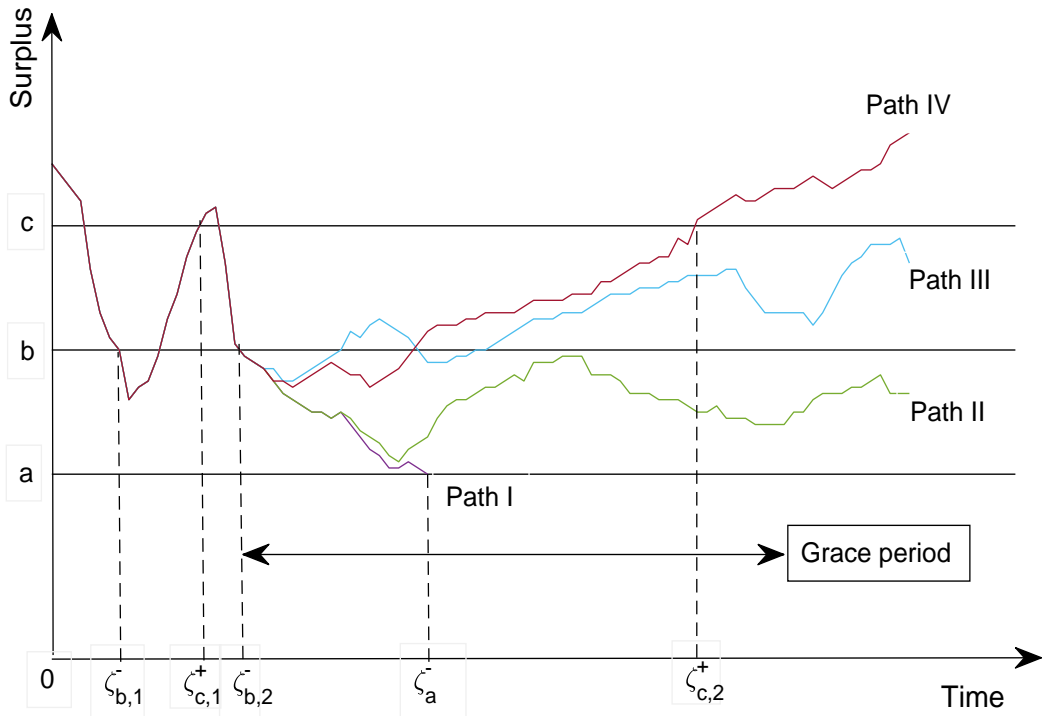}
\caption{The bankruptcy scenarios.}
\end{figure}

Figure 1 gives four possible paths of an insurer's surplus process. In the setting of spectrally negative L\'{e}vy processes, jumps are expected to occur. To avoid bewildering array of nodes in the coordinate axes, we omit the jumps to focus more on the idea of {\it solvent} and {\it insolvent} states. As sketched in Figure 1, from time 0 to time $\zeta^{-}_{b,1}$, the surplus process stays above barrier $b$ and hence the state of {\it solvent} applies. As the financial situation gets worse, the insurer transferred to {\it insolvent} state from $\zeta^{-}_{b,1}$ until the surplus recovers to the safety barrier $c$ at time $\zeta^{+}_{c,1}$, where the state of {\it solvent} resumes. At $\zeta^{-}_{b,2}$, the insurer turns to be {\it insolvent} again. We depict four possible scenarios from $\zeta^{-}_{b,2}$ of the surplus process. Path I represents the case of direct liquidation during the insolvent period with the breaching of barrier $a$. Path II and Path III never touch the liquidation barrier $a$ but have been staying below the safety barrier $c$ for so long that is unacceptable from the regulator, hence the insurer will still be concluded to liquidation. Path IV climbs up to the safety barrier $c$ successfully within the grace period, hence from $\zeta^{+}_{c,2}$ onward, the insurer stays at the state of {\it solvent} in Path IV.

\begin{rem}
\label{barrier_state}
Note that the way we define states using three barriers is different from Broadie et al. (2007), where a three-barrier model was also adopted but defined four states: liquid, equity dilution, default and liquidation. They used the three barriers directly as the division boundaries. In this paper, we only define three states of the surplus process as the possible outputs of an insurer: solvent, insolvent and liquidation. From technical point of view, four states of the surplus process is also tractable in mathematics but may lead to tedious expressions. The track of defining states in this paper is consistent with Li et al. (2020), which is practically workable in the context of insurance industry, and also produces neat mathematical results.

\end{rem}

\begin{rem}
\label{special_liquidation}
As pointed out by Li et al. (2020), the definition of liquidation time in \eqref{liquidation} includes
several existing stopping times of insolvency as special cases. For example, as the liquidation barrier $a$ goes to $-\infty$ and let the rehabilitation barrier $b$ equal to the safety barrier $c$, then the liquidation time $T$ retrieves the so-called Parisian ruin time, which is a popular topic of interest in recent years. We refer to Remark 2.1 in Li et al. (2020) for more examples and their applications.

\end{rem}

%%%%%%
%
\section{Main Results}
\setcounter{section}{3} \setcounter{equation}{0}

This section aims to solve the Gerber-Shiu function at the time of liquidation in terms of the scale functions and the L\'evy triplet associated with $X$ and $\widetilde{X}$. As consequences, expressions of the discounted joint  probability density function of the liquidation time, the surplus at liquidation and the historical high of the surplus until liquidation, the Laplace transform of the liquidation time conditional on that liquidation occurs prior to the first up-crossing time of some fixed level, as well as the probability of liquidation are also derived.

%The use of $z$......

By \eqref{mar.} and \eqref{mar.1}, we define for $q\geq 0$, $\lambda>0$, $a<b<c$, $b\leq x\leq z$ and $a\leq w\leq c$, an auxiliary function as follows
\begin{eqnarray}\label{om.w.l}
\hspace{-0.2cm}\Omega^{(q,q+\lambda)}_{\phi}(w,b,x,z)
\hspace{-0.3cm}&:=&\hspace{-0.3cm}\mathbb{E}_{x}\left[\mathrm{e}^{-q\tau_{b}^{-}}\phi_{q+\lambda}(X_{\tau_{b}^{-}}-w)\mathbf{1}_
{\{\tau_{b}^{-}<\tau_{z}^{+}\}}\right]\nonumber\\
\hspace{-0.3cm}&&\hspace{-3cm}=\frac{\sigma^2}{2}\phi_{q+\lambda}(b-w)\bigg[W_{q}^{\prime}(x-b)
-W_{q}(x-b)\frac{W_{q}^{\prime}(z-b)}{W_{q}(z-b)}\bigg]\nonumber\\
\hspace{-0.5cm}&&\hspace{-3cm}\quad\,\,+\int_{0}^{z-b}\mathrm{d}y\int_{y}^{\infty}
\phi_{q+\lambda}(y-\theta+b-w)\upsilon(\mathrm{d}\theta)\bigg[\frac{W_{q}(z-b-y)}{W_{q}(z-b)}W_{q}(x-b)
-W_{q}(x-b-y)\bigg],
\end{eqnarray}
hence by \eqref{s.f.l.} one has
\begin{eqnarray}\label{om.w.}
\Omega^{(q,q+\lambda)}_{\phi}(w,b,x)
\hspace{-0.3cm}&:=&\hspace{-0.3cm}\lim_{z\rightarrow \infty}\Omega^{(q,q+\lambda)}_{\phi}(w,b,x,z)\nonumber\\
\hspace{-0.3cm}&&\hspace{-2.5cm}=\frac{\sigma^2}{2}\phi_{q+\lambda}(b-w)\big(W_{q}^{\prime}(x-b)
-\Phi_{q}W_{q}(x-b)\big)\nonumber\\
\hspace{-0.3cm}&&\hspace{-2.5cm}\quad\,\,+\int_{0}^{\infty}\mathrm{d}y\int_{y}^{\infty}
\phi_{q+\lambda}(y-\theta+b-w)\upsilon(\mathrm{d}\theta)\big(\mathrm{e}^{-\Phi_{q}y}W_{q}(x-b)
-W_{q}(x-b-y)\big).
\end{eqnarray}
The notations $\Omega^{(q,q+\lambda)}_{\phi}(w,b,x,z)$ and $\Omega^{(q,q+\lambda)}_{\phi}(w,b,x)$ will come up with $\phi=\mathbb{W},\mathbb{W}^{\prime},\mathbb{Z}$.
In particular, when $\mathbb{W}=W$ or $\mathbb{Z}=Z$, by Lemma 2.2 and (19) of Loeffen et al. (2014), one has the following concise expression via the scale functions
\begin{eqnarray}\label{lap.t.11.}
\Omega^{(q,q+\lambda)}_{\phi}(w,b,x,z)
\hspace{-0.3cm}&=&\hspace{-0.3cm}\phi_{q+\lambda}(x-w)-\lambda\int_{b}^{x}W_{q}(x-y)\phi_{q+\lambda}(y-w)\mathrm{d}y\nonumber\\
\hspace{-0.3cm}&&\hspace{-3cm}
-\frac{W_{q}(x-b)}{W_{q}(z-b)}\left(\phi_{q+\lambda}(z-w)-\lambda\int_{b}^{z}W_{q}(z-y)\phi_{q+\lambda}(y-w)\mathrm{d}y\right),\,\,\phi=W,\,Z.
\end{eqnarray}

The following Theorem \ref{theorem2} expresses the Gerber-Shiu function at the liquidation ruin time with exponentially distributed grace periods in terms of the scale functions and the L\'evy triplet associated with the L\'evy processes $X$ and $\widetilde{X}$.
\begin{thm} \label{theorem2}
For $q,\lambda\in[0,\infty)$, $a<b<c$ and a measurable function $f$, we have
\begin{eqnarray}\label{g.s.ff}
\hspace{-0.3cm}&&\hspace{-0.3cm}
\mathbb{E}_{x}\left[\mathrm{e}^{-qT}f\big(U_{T}\big)
\mathbf{1}_{\{T<\zeta_{z}^{+}\}}\right]
\nonumber\\
\hspace{-0.3cm}&=&\hspace{-0.3cm}\frac{\widetilde{\sigma}^2}{2}f(a)\left(\Omega^{(q,q+\lambda)}_{\mathbb{W}'}(a,b,x,z)-
\Omega^{(q,q+\lambda)}_{\mathbb{W}}(a,b,x,z)\frac{\mathbb{W}_{q+\lambda}^{\prime}(c\wedge z-a)}{\mathbb{W}_{q+\lambda}(c\wedge z-a)}\right)
\nonumber\\
\hspace{-0.3cm}&&\hspace{-0.3cm}+\int_{a}^{c\wedge z}\mathrm{d}y\int_{y-a}^{\infty}f(y-\theta)\widetilde{\upsilon}(\mathrm{d}\theta)\left(
\Omega^{(q,q+\lambda)}_{\mathbb{W}}(a,b,x,z)\frac{\mathbb{W}_{q+\lambda}(c\wedge z-y)}{\mathbb{W}_{q+\lambda}(c\wedge z-a)}
-\Omega^{(q,q+\lambda)}_{\mathbb{W}}(y,b,x,z)\right)
\nonumber\\
\hspace{-0.3cm}&&\hspace{-0.3cm}
+\lambda\int_{a}^{c\wedge z}f(y)\left(\Omega^{(q,q+\lambda)}_{\mathbb{W}}(a,b,x,z)\frac{\mathbb{W}_{q+\lambda}(c\wedge z-y)}
{\mathbb{W}_{q+\lambda}(c\wedge z-a)}-\Omega^{(q,q+\lambda)}_{\mathbb{W}}(y,b,x,z)\right)\mathrm{d}y
\nonumber\\
\hspace{-0.3cm}&&\hspace{-0.3cm}
+\int_{b}^{z}\mathrm{d}y\int_{y-a}^{\infty}
f(y-\theta)\upsilon(\mathrm{d}\theta)\bigg(\frac{W_{q}(z-y)}{W_{q}(z-b)}W_{q}(x-b)
-W_{q}(x-y)\bigg)
\nonumber\\
\hspace{-0.3cm}&&\hspace{-0.3cm}
+\frac{\Omega^{(q,q+\lambda)}_{\mathbb{W}}(a,b,x,z)
\,\mathbf{1}_{\{z> c\}}}{\mathbb{W}_{q+\lambda}(c-a)
-\Omega^{(q,q+\lambda)}_{\mathbb{W}}(a,b,c,z)}\left[\frac{\widetilde{\sigma}^2}{2}f(a)\left(\Omega^{(q,q+\lambda)}_{\mathbb{W}'}(a,b,c,z)-
\Omega^{(q,q+\lambda)}_{\mathbb{W}}(a,b,c,z)\frac{\mathbb{W}_{q+\lambda}^{\prime}(c-a)}{\mathbb{W}_{q+\lambda}(c-a)}\right)\right.\nonumber\\
\hspace{-0.3cm}&&\hspace{-0.3cm}\left.+\int_{a}^{c}\mathrm{d}y\int_{y-a}^{\infty}f(y-\theta)\widetilde{\upsilon}(\mathrm{d}\theta)\left(
\Omega^{(q,q+\lambda)}_{\mathbb{W}}(a,b,c,z)\frac{\mathbb{W}_{q+\lambda}(c-y)}{\mathbb{W}_{q+\lambda}(c-a)}
-\Omega^{(q,q+\lambda)}_{\mathbb{W}}(y,b,c,z)\right)\right.
\nonumber\\
\hspace{-0.3cm}&&\hspace{-0.3cm}+\lambda\int_{a}^{c}f(y)\left(\Omega^{(q,q+\lambda)}_{\mathbb{W}}(a,b,c,z)\frac{\mathbb{W}_{q+\lambda}(c-y)}
{\mathbb{W}_{q+\lambda}(c-a)}-\Omega^{(q,q+\lambda)}_{\mathbb{W}}(y,b,c,z)\right)\mathrm{d}y\nonumber\\
\hspace{-0.3cm}&&\hspace{-0.3cm}
\left.+\int_{b}^{z}\mathrm{d}y\int_{y-a}^{\infty}
f(y-\theta)\upsilon(\mathrm{d}\theta)\bigg(\frac{W_{q}(z-y)}{W_{q}(z-b)}W_{q}(c-b)
-W_{q}(c-y)\bigg)
\right], \quad b<x\leq z.
\end{eqnarray}
\end{thm}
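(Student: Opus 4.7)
The plan is a two-stage Markov decomposition followed by a renewal argument at level $c$. Let $\Psi(x)$ denote the target on the left-hand side of \eqref{g.s.ff} and let $G(y)$ denote its insolvent-regime analogue, that is, the same expectation with $U$ started at height $y\in[a,b]$ in the insolvent state with a fresh independent grace clock $e_{\lambda}$. Since the surplus dynamics switch precisely at $\tau_{b}^{-}$, the strong Markov property gives
\[
\Psi(x)=\mathbb{E}_{x}\!\left[\mathrm{e}^{-q\tau_{b}^{-}}f(X_{\tau_{b}^{-}})\mathbf{1}_{\{\tau_{b}^{-}<\tau_{z}^{+},\,X_{\tau_{b}^{-}}<a\}}\right]+\mathbb{E}_{x}\!\left[\mathrm{e}^{-q\tau_{b}^{-}}G(X_{\tau_{b}^{-}})\mathbf{1}_{\{\tau_{b}^{-}<\tau_{z}^{+},\,a\leq X_{\tau_{b}^{-}}\leq b\}}\right],
\]
the first summand corresponding to $X$ jumping from above $b$ directly past $a$ (so $T=\tau_{b}^{-}$ and $U_{T}=X_{\tau_{b}^{-}}$) and the second to a transition into the $\widetilde X$-dynamics at $X_{\tau_{b}^{-}}$. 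The first summand is evaluated by \eqref{mar.} and reproduces exactly Piece D of the theorem.

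To compute $G(y)$ I would decompose the insolvent excursion by which of four disjoint events ends it: (a) $\widetilde X$ creeps down to $a$; (b) $\widetilde X$ jumps past $a$ from the interior of $[a,c\wedge z]$; (c) the grace clock $e_{\lambda}$ expires before either boundary is reached; and (d), only when $z>c$, $\widetilde X$ up-crosses $c$ before $a$ and before $e_{\lambda}$, triggering a renewal in the solvent regime at $c$. The identity $\mathbb{E}[\mathrm{e}^{-qt}\mathbf{1}_{\{t<e_{\lambda}\}}]=\mathrm{e}^{-(q+\lambda)t}$ absorbs the grace clock into the discount rate, so (a)--(c) are handled respectively by the $\widetilde X$-analogues of \eqref{mar.1}, \eqref{mar.}, and \eqref{r.m.} at discount level $q+\lambda$, while (d) contributes $\mathbf{1}_{\{z>c\}}\,\mathbb{W}_{q+\lambda}(y-a)\,\Psi(c)/\mathbb{W}_{q+\lambda}(c-a)$ via \eqref{two.side.e.}. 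Collecting yields the representation
\[
G(y)=\alpha\,\mathbb{W}_{q+\lambda}^{\prime}(y-a)+\beta\,\mathbb{W}_{q+\lambda}(y-a)-\int_{a}^{c\wedge z}\gamma(y^{\prime})\,\mathbb{W}_{q+\lambda}(y-y^{\prime})\,\mathrm{d}y^{\prime},
\]
with explicit scalars $\alpha,\beta$ and density $\gamma$ depending on $f,\widetilde{\sigma},\widetilde{\upsilon},\lambda$ and on the still-unknown $\Psi(c)$.

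Substituting this into $\mathbb{E}_{x}[\mathrm{e}^{-q\tau_{b}^{-}}G(X_{\tau_{b}^{-}})\mathbf{1}_{\{\tau_{b}^{-}<\tau_{z}^{+}\}}]$ and applying Fubini, each summand matches one evaluation of $\Omega_{\phi}^{(q,q+\lambda)}$ directly from its definition \eqref{om.w.l}: the $\mathbb{W}_{q+\lambda}^{\prime}(\cdot-a)$ piece gives $\Omega_{\mathbb{W}^{\prime}}^{(q,q+\lambda)}(a,b,x,z)$, the $\mathbb{W}_{q+\lambda}(\cdot-a)$ piece gives $\Omega_{\mathbb{W}}^{(q,q+\lambda)}(a,b,x,z)$, and each $\mathbb{W}_{q+\lambda}(\cdot-y^{\prime})$ piece gives $\Omega_{\mathbb{W}}^{(q,q+\lambda)}(y^{\prime},b,x,z)$; the convention $\mathbb{W}_{q+\lambda}\equiv 0$ on $(-\infty,0)$ automatically enforces $X_{\tau_{b}^{-}}\in[a,b]$. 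Regrouping reproduces Pieces A, B, C of the theorem plus a linear term $\Omega_{\mathbb{W}}^{(q,q+\lambda)}(a,b,x,z)\mathbf{1}_{\{z>c\}}\Psi(c)/\mathbb{W}_{q+\lambda}(c-a)$; specialising at $x=c$ gives a scalar linear equation whose solution is $\Psi(c)=\Psi_{0}(c)\,\mathbb{W}_{q+\lambda}(c-a)/[\mathbb{W}_{q+\lambda}(c-a)-\Omega_{\mathbb{W}}^{(q,q+\lambda)}(a,b,c,z)]$, where $\Psi_{0}(c)$ is the bracketed non-renewal part of $\Psi$ at $c$, and substituting back recovers Piece E. The main obstacle I foresee is the bookkeeping in step two: correctly identifying, for each summand of $G(y)$, the pair $(\phi,w)$ inside $\Omega_{\phi}^{(q,q+\lambda)}(w,b,x,z)$, in particular aligning the $X$-creeping atom at $y=b$ from \eqref{mar.1} with the $\phi_{q+\lambda}(b-w)$ term in \eqref{om.w.l}. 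A secondary concern is to verify that $\mathbb{W}_{q+\lambda}(c-a)>\Omega_{\mathbb{W}}^{(q,q+\lambda)}(a,b,c,z)$ so the linear equation is invertible; this follows from reading the ratio as one minus a strict sub-probability of a full renewal round.
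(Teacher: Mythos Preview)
Your proposal is correct and follows essentially the same approach as the paper: a strong Markov decomposition at $\tau_{b}^{-}$, a further decomposition of the insolvent excursion into the four outcomes (creep to $a$, jump past $a$, grace-clock expiry, and up-crossing of $c$) with the exponential clock absorbed into the discount via $\mathrm{e}^{-(q+\lambda)t}$, identification of each resulting expectation with an $\Omega_{\phi}^{(q,q+\lambda)}$ by its definition \eqref{om.w.l}, and finally the renewal equation at $x=c$ solved for $\Psi(c)$. Your treatment is slightly more uniform in that you carry $c\wedge z$ and the indicator $\mathbf{1}_{\{z>c\}}$ throughout, whereas the paper treats the cases $z>c$ and $z\le c$ in two separate passes; both your bookkeeping concern about matching $(\phi,w)$ and your invertibility concern are resolved exactly as you anticipate.
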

%%%%%%%
\begin{proof} Denote by $\hbar(x)$ the left hand side of \eqref{g.s.ff}.
%For an initial surplus less than $a$, we silently assume that liquidation occurs immediately.
Note that, if liquidation occurs, then the surplus process $U$ has to down-cross the level $b$ beforehand, otherwise liquidation would never happen. In addition, one should bear in mind that, at the moment $U$ down-crosses $b$, $U$ may or may not down-cross $a$, in the former case liquidation occurs.
Using the piecewise strong Markov property of the surplus
process $U$, one obtains
\begin{eqnarray}\label{g.s.f.}
\hbar(x)
\hspace{-0.3cm}&=&\hspace{-0.3cm}
\mathbb{E}_{x}\left[\mathrm{e}^{-q\tau^{-}_{b}}\mathbf{1}_{\{\tau^{-}_{b}<\tau_{z}^{+}\}}
\mathbf{1}_{\{\tau^{-}_{b}<\tau_{a}^{-}\}}
\widetilde{\mathbb{E}}_{X_{\tau^{-}_{b}}}\left(\mathrm{e}^
{-q(\zeta^{-}_{a}\wedge\zeta_{b,c,e_{\lambda}})}f(U_{\zeta^{-}_{a}\wedge\zeta_{b,c,e_{\lambda}}})
\mathbf{1}_{\{\zeta^{-}_{a}\wedge\zeta_{b,c,e_{\lambda}}<\zeta_{z}^{+}\}}\right)\right]
\nonumber\\
\hspace{-0.3cm}&&\hspace{-0.3cm}
+\mathbb{E}_{x}\left[\mathrm{e}^{-q\tau^{-}_{b}}\mathbf{1}_{\{\tau^{-}_{b}<\tau_{z}^{+}\}}
\mathbf{1}_{\{\tau^{-}_{b}=\tau_{a}^{-}\}}f(X_{\tau^{-}_{b}})
\right]
\nonumber\\
\hspace{-0.3cm}&=&\hspace{-0.3cm}
\mathbb{E}_{x}\left[\mathrm{e}^{-q\tau^{-}_{b}}\mathbf{1}_{\{\tau^{-}_{b}<\tau_{z}^{+}\}}
\mathbf{1}_{\{X(\tau^{-}_{b})\geq a\}}
\widetilde{\mathbb{E}}_{X_{\tau^{-}_{b}}}\left(\mathrm{e}^
{-q(\tau^{-}_{a}\wedge e_{\lambda})}f(U_{\tau^{-}_{a}\wedge e_{\lambda}})
\mathbf{1}_{\{\tau^{-}_{a}\wedge e_{\lambda}<\tau^{+}_{c}\}}\right)\right]
\nonumber\\
\hspace{-0.3cm}&&\hspace{-0.3cm}
+\mathbb{E}_{x}\left[\mathrm{e}^{-q\tau^{-}_{b}}\mathbf{1}_{\{\tau^{-}_{b}<\tau_{z}^{+}\}}
\mathbf{1}_{\{X(\tau^{-}_{b})\geq a\}}
\widetilde{\mathbb{E}}_{X_{\tau^{-}_{b}}}\left(\mathrm{e}^
{-q \tau^{+}_{c}}
\mathbf{1}_{\{\tau^{+}_{c}<\tau^{-}_{a}\wedge e_{\lambda}\}}\right)\hbar(c)\right]
\nonumber\\
\hspace{-0.3cm}&&\hspace{-0.3cm}
+\mathbb{E}_{x}\left[\mathrm{e}^{-q\tau^{-}_{b}}\mathbf{1}_{\{\tau^{-}_{b}<\tau_{z}^{+}\}}
\mathbf{1}_{\{X(\tau^{-}_{b})< a\}}f(X_{\tau^{-}_{b}})
\right],\quad
b<x\leq z, \, c\leq z.
\end{eqnarray}
It is seen that the inner expectation in the first term on the right hand side of \eqref{g.s.f.} can be rewritten as
\begin{eqnarray}\label{g.s.f.1}
\hspace{-0.3cm}&&\hspace{-0.3cm}
\widetilde{\mathbb{E}}_{X_{\tau^{-}_{b}}}\left(\mathrm{e}^
{-q(\tau^{-}_{a}\wedge e_{\lambda})}f(U_{\tau^{-}_{a}\wedge e_{\lambda}})
\mathbf{1}_{\{\tau^{-}_{a}\wedge e_{\lambda}<\tau^{+}_{c}\}}\right)
\nonumber\\
\hspace{-0.3cm}&=&\hspace{-0.3cm}
\widetilde{\mathbb{E}}_{X_{\tau^{-}_{b}}}\left(\mathrm{e}^
{-q\tau^{-}_{a}}f(\widetilde{X}_{\tau^{-}_{a}})
\mathbf{1}_{\{\tau^{-}_{a}< e_{\lambda}\wedge\tau^{+}_{c}\}}\right)
+
\widetilde{\mathbb{E}}_{X_{\tau^{-}_{b}}}\left(\mathrm{e}^
{-q e_{\lambda}}f(\widetilde{X}_{e_{\lambda}})
\mathbf{1}_{\{ e_{\lambda}<\tau^{-}_{a}\wedge\tau^{+}_{c}\}}\right)
\nonumber\\
\hspace{-0.3cm}&=&\hspace{-0.3cm}
\widetilde{\mathbb{E}}_{X_{\tau^{-}_{b}}}\left(\mathrm{e}^
{-(q+\lambda)\tau^{-}_{a}}f(\widetilde{X}_{\tau^{-}_{a}})
\mathbf{1}_{\{\tau^{-}_{a}< \tau^{+}_{c}\}}\right)
+
\widetilde{\mathbb{E}}_{X_{\tau^{-}_{b}}}\left(\mathrm{e}^
{-q e_{\lambda}}f(\widetilde{X}_{e_{\lambda}})
\mathbf{1}_{\{ e_{\lambda}<\tau^{-}_{a}\wedge\tau^{+}_{c}\}}\right),
\end{eqnarray}
where in the second equality we have used the fact that $e_{\lambda}$ is independent of the process $\widetilde{X}$. Plugging \eqref{g.s.f.1} into \eqref{g.s.f.} gives
\begin{eqnarray}\label{g.s.f.2}
\hbar(x)
\hspace{-0.3cm}&=&\hspace{-0.3cm}
\mathbb{E}_{x}\left[\mathrm{e}^{-q\tau^{-}_{b}}\mathbf{1}_{\{\tau^{-}_{b}<\tau_{z}^{+}\}}
\mathbf{1}_{\{X(\tau^{-}_{b})\geq a\}}
\widetilde{\mathbb{E}}_{X_{\tau^{-}_{b}}}\left(\mathrm{e}^
{-(q+\lambda)\tau^{-}_{a}}f(\widetilde{X}_{\tau^{-}_{a}})
\mathbf{1}_{\{\tau^{-}_{a}< \tau^{+}_{c}\}}\right)\right]
\nonumber\\
\hspace{-0.3cm}&&\hspace{-0.3cm}
+
\mathbb{E}_{x}\left[\mathrm{e}^{-q\tau^{-}_{b}}\mathbf{1}_{\{\tau^{-}_{b}<\tau_{z}^{+}\}}
\mathbf{1}_{\{X(\tau^{-}_{b})\geq a\}}
\widetilde{\mathbb{E}}_{X_{\tau^{-}_{b}}}\left(\mathrm{e}^
{-q e_{\lambda}}f(\widetilde{X}_{e_{\lambda}})
\mathbf{1}_{\{ e_{\lambda}<\tau^{-}_{a}\wedge\tau^{+}_{c}\}}\right)\right]
\nonumber\\
\hspace{-0.3cm}&&\hspace{-0.3cm}
+\mathbb{E}_{x}\left[\mathrm{e}^{-q\tau^{-}_{b}}\mathbf{1}_{\{\tau^{-}_{b}<\tau_{z}^{+}\}}
\mathbf{1}_{\{X(\tau^{-}_{b})\geq a\}}
\widetilde{\mathbb{E}}_{X_{\tau^{-}_{b}}}\left(\mathrm{e}^
{-(q+\lambda) \tau^{+}_{c}}
\mathbf{1}_{\{\tau^{+}_{c}<\tau^{-}_{a}\}}\right)\right]\hbar(c)
\nonumber\\
\hspace{-0.3cm}&&\hspace{-0.3cm}
+\mathbb{E}_{x}\left[\mathrm{e}^{-q\tau^{-}_{b}}\mathbf{1}_{\{\tau^{-}_{b}<\tau_{z}^{+}\}}
\mathbf{1}_{\{X(\tau^{-}_{b})< a\}}f(X_{\tau^{-}_{b}})
\right],\quad
b<x\leq z, \, c\leq z.
\end{eqnarray}
By \eqref{mar.}, \eqref{mar.1} and the fact that $\mathbb{W}_{q+\lambda}$ and $\mathbb{W}_{q+\lambda}^{\prime}$ vanishes on the negative half line, the first term on the right hand side of \eqref{g.s.f.2} is
\begin{eqnarray}\label{g.s.f1.}
\hspace{-0.3cm}&&\hspace{-0.3cm}
\mathbb{E}_{x}\Bigg[\mathrm{e}^{-q\tau^{-}_{b}}\mathbf{1}_{\{\tau^{-}_{b}<\tau^{+}_{z}\}}
\Bigg[\frac{{\widetilde{\sigma}}^2}{2}f(a)\left(\mathbb{W}_{q+\lambda}^{\prime}
(X_{\tau^{-}_{b}}-a)
-\mathbb{W}_{q+\lambda}(X_{\tau^{-}_{b}}-a)\frac{\mathbb{W}_{q+\lambda}^{\prime}(c-a)}{\mathbb{W}_{q+\lambda}(c-a)}\right)
\nonumber\\
\hspace{-0.3cm}&&\hspace{-0.3cm}+\int_{a}^{c}\mathrm{d}y\int_{y-a}^{\infty}
f(y-\theta)\widetilde{\upsilon}(\mathrm{d}\theta)\left(\frac{\mathbb{W}_{q+\lambda}(c-y)}{\mathbb{W}_{q+\lambda}(c-a)}
\mathbb{W}_{q+\lambda}(X_{\tau^{-}_{b}}-a)-\mathbb{W}_{q+\lambda}(X_{\tau^{-}_{b}}-y)\right)\Bigg]\Bigg]\nonumber\\
\hspace{-0.3cm}&=&\hspace{-0.3cm}
\frac{\widetilde{\sigma}^2}{2}f(a)\left(\Omega^{(q,q+\lambda)}_{\mathbb{W}'}(a,b,x,z)-\frac{\mathbb{W}_{q+\lambda}^{\prime}(c-a)}{\mathbb{W}_{q+\lambda}(c-a)}
\Omega^{(q,q+\lambda)}_{\mathbb{W}}(a,b,x,z)\right)\nonumber\\
\hspace{-0.3cm}&&\hspace{-0.3cm}+\int_{a}^{c}\mathrm{d}y\int_{y-a}^{\infty}
f(y-\theta)\widetilde{\upsilon}(\mathrm{d}\theta)\left(\frac{\mathbb{W}_{q+\lambda}(c-y)}{\mathbb{W}_{q+\lambda}(c-a)}
\Omega^{(q,q+\lambda)}_{\mathbb{W}}(a,b,x,z)-\Omega^{(q,q+\lambda)}_{\mathbb{W}}(y,b,x,z)\right).
\end{eqnarray}
By \eqref{r.m.}, the second term on the right hand side of \eqref{g.s.f.2} is
\begin{eqnarray}\label{g.s.f2.}
\hspace{-0.3cm}&&\hspace{-0.3cm}
\lambda\int_{a}^{c}f(y)\mathbb{E}_{x}\left[\mathrm{e}^{-q\tau^{-}_{b}}\left(\frac{\mathbb{W}_{q+\lambda}(X_{\tau^{-}_{b}}-a)
\mathbb{W}_{q+\lambda}(c-y)}{\mathbb{W}_{q+\lambda}(c-a)}-\mathbb{W}_{q+\lambda}(X_{\tau^{-}_{b}}-y)\right)
\mathbf{1}_{\{\tau^{-}_{b}<\tau^{+}_{z}\}}\right]\mathrm{d}y
\nonumber\\
\hspace{-0.3cm}&=&\hspace{-0.3cm}
\lambda\int_{a}^{c}f(y)\left(\frac{\mathbb{W}_{q+\lambda}(c-y)}{\mathbb{W}_{q+\lambda}(c-a)}\Omega^{(q,q+\lambda)}_{\mathbb{W}}(a,b,x,z)
-\Omega^{(q,q+\lambda)}_{\mathbb{W}}(y,b,x,z)\right)\mathrm{d}y.
\end{eqnarray}
By the space homogeneity of $\widetilde{X}$ and \eqref{two.side.e.}, one can write the third term on the right hand side of \eqref{g.s.f.2} as
\begin{eqnarray}\label{g.s.f3.}
\hspace{-0.3cm}&&\hspace{-0.3cm}
\mathbb{E}_{x}\left[\mathrm{e}^{-q\tau^{-}_{b}}\mathbf{1}_{\{\tau^{-}_{b}<\tau^{+}_{z}\}}
\mathbf{1}_{\{X(\tau^{-}_{b})\geq a\}}
\widetilde{\mathbb{E}}_{X_{\tau^{-}_{b}}-a}\left(\mathrm{e}^
{-(q+\lambda) \tau_{c-a}^{+}}\mathbf{1}_{\{\tau_{c-a}^{+}<\tau^{-}_{0}\}}\right)\right]\hbar(c)
\nonumber\\
\hspace{-0.3cm}&=&\hspace{-0.3cm}
\mathbb{E}_{x}\left[\mathrm{e}^{-q\tau^{-}_{b}}\frac{\mathbb{W}_{q+\lambda}(X_{\tau^{-}_{b}}-a)}{\mathbb{W}_{q+\lambda}(c-a)}\mathbf{1}_{\{\tau^{-}_{b}<\tau^{+}_{z}\}}\right]\hbar(c)
\nonumber\\
\hspace{-0.3cm}&=&\hspace{-0.3cm}
\frac{\Omega^{(q,q+\lambda)}_{\mathbb{W}}(a,b,x,z)}{\mathbb{W}_{q+\lambda}(c-a)}\hbar(c).
\end{eqnarray}
By \eqref{mar.} and \eqref{mar.1}, the fourth term on the right hand side of \eqref{g.s.f.2} is
\begin{eqnarray}
\label{g.s.f4.}
\hspace{-0.3cm}
\int_{0}^{z-b}\mathrm{d}y\int_{y+b-a}^{\infty}
f(y-\theta+b)\upsilon(\mathrm{d}\theta)\bigg[\frac{W_{q}(z-b-y)}{W_{q}(z-b)}W_{q}(x-b)
-W_{q}(x-b-y)\bigg]
,\quad
x\in(b,\infty).
\end{eqnarray}
Combining \eqref{g.s.f.2}, \eqref{g.s.f1.}, \eqref{g.s.f2.}, \eqref{g.s.f3.} and \eqref{g.s.f4.}
and then letting $x=c$ yields
\begin{eqnarray}\label{g.s.f.c.}
\hbar(c)
\hspace{-0.3cm}&=&\hspace{-0.3cm}
\frac{\mathbb{W}_{q+\lambda}(c-a)}{\mathbb{W}_{q+\lambda}(c-a)
-\Omega^{(q,q+\lambda)}_{\mathbb{W}}(a,b,c,z)}
\left[\frac{\widetilde{\sigma}^2}{2}f(a)\Bigg[\Omega^{(q,q+\lambda)}_{\mathbb{W}'}(a,b,c,z)-
\Omega^{(q,q+\lambda)}_{\mathbb{W}}(a,b,c,z)\frac{\mathbb{W}_{q+\lambda}^{\prime}(c-a)}{\mathbb{W}_{q+\lambda}(c-a)}\Bigg]\right.\nonumber\\
\hspace{-0.3cm}&&\hspace{-0.3cm}\left.+\int_{a}^{c}\mathrm{d}y\int_{y-a}^{\infty}f(y-\theta)\widetilde{\upsilon}(\mathrm{d}\theta)\left(
\Omega^{(q,q+\lambda)}_{\mathbb{W}}(a,b,c,z)\frac{\mathbb{W}_{q+\lambda}(c-y)}{\mathbb{W}_{q+\lambda}(c-a)}
-\Omega^{(q,q+\lambda)}_{\mathbb{W}}(y,b,c,z)\right)\right.\nonumber\\
\hspace{-0.3cm}&&\hspace{-0.3cm}\left.+\lambda\int_{a}^{c}f(y)\left(\Omega^{(q,q+\lambda)}_{\mathbb{W}}(a,b,c,z)\frac{\mathbb{W}_{q+\lambda}(c-y)}
{\mathbb{W}_{q+\lambda}(c-a)}-\Omega^{(q,q+\lambda)}_{\mathbb{W}}(y,b,c,z)\right)\mathrm{d}y \right.\nonumber\\
\hspace{-0.3cm}&&\hspace{-0.3cm}\left.+\int_{b}^{z}\mathrm{d}y\int_{y-a}^{\infty}
f(y-\theta)\upsilon(\mathrm{d}\theta)\bigg(\frac{W_{q}(z-y)}{W_{q}(z-b)}W_{q}(c-b)
-W_{q}(c-y)\bigg)\right],\nonumber
\end{eqnarray}
which together with \eqref{g.s.f.2}, \eqref{g.s.f1.}, \eqref{g.s.f2.} \eqref{g.s.f3.} and \eqref{g.s.f4.} leads to \eqref{g.s.ff}.

For the case of $b<x\leq z\leq c$, by similar arguments as used above for the case that $b<x\leq z$ and $c\leq z$, one has
\begin{eqnarray}\label{g.s.f.add.}
\hbar(x)
\hspace{-0.3cm}&=&\hspace{-0.3cm}
\mathbb{E}_{x}\left[\mathrm{e}^{-q\tau^{-}_{b}}\mathbf{1}_{\{\tau^{-}_{b}<\tau_{z}^{+}\}}
\mathbf{1}_{\{X(\tau^{-}_{b})> a\}}
\widetilde{\mathbb{E}}_{X_{\tau^{-}_{b}}}\left(\mathrm{e}^
{-(q+\lambda)\tau^{-}_{a}}f(\widetilde{X}_{\tau^{-}_{a}})
\mathbf{1}_{\{\tau^{-}_{a}< \tau^{+}_{z}\}}\right)\right]
\nonumber\\
\hspace{-0.3cm}&&\hspace{-0.3cm}
+
\mathbb{E}_{x}\left[\mathrm{e}^{-q\tau^{-}_{b}}\mathbf{1}_{\{\tau^{-}_{b}<\tau_{z}^{+}\}}
\mathbf{1}_{\{X(\tau^{-}_{b})> a\}}
\widetilde{\mathbb{E}}_{X_{\tau^{-}_{b}}}\left(\mathrm{e}^
{-q e_{\lambda}}f(\widetilde{X}_{e_{\lambda}})
\mathbf{1}_{\{ e_{\lambda}<\tau^{-}_{a}\wedge\tau^{+}_{z}\}}\right)\right]
\nonumber\\
\hspace{-0.3cm}&&\hspace{-0.3cm}
+\mathbb{E}_{x}\left[\mathrm{e}^{-q\tau^{-}_{b}}\mathbf{1}_{\{\tau^{-}_{b}<\tau_{z}^{+}\}}
\mathbf{1}_{\{X(\tau^{-}_{b})\leq a\}}f(X_{\tau^{-}_{b}})
\right],\quad
b<x\leq z\leq c,\nonumber
\end{eqnarray}
which together with \eqref{r.m.}, \eqref{mar.} and \eqref{mar.1} yields \eqref{g.s.ff}.
The proof of Theorem \ref{theorem2} is complete.
\end{proof}

\medskip\medskip
Define $\overline{U}_{t}:=\sup_{0\leq s\leq t}U_{s}$, then $T<\zeta_{z}^{+}$ is equivalent to $\overline{U}_{T}< z$. Given a sequence of exponentially distributed grace periods, the following Corollary \ref{cor.2.} characterizes the joint discounted probability density function of the liquidation time, the surplus at and the running supreme of the surplus until the liquidation time.
\begin{cor}\label{cor.2.}
For $q\geq 0$, $\lambda>0$, $a<b<c$ and $b<x\leq z$, we have
\begin{eqnarray}\label{d1}
\hspace{-0.3cm}&&\hspace{-0.3cm}
\mathbb{E}_{x}\left(\mathrm{e}^{-qT}, U_{T}\in\mathrm{d}u, \overline{U}_{T}\in\mathrm{d}z \right)
\nonumber\\
\hspace{-0.3cm}&=&\hspace{-0.3cm}
\int^{c}_{a}\left(\frac{\partial\Omega^{(q,q+\lambda)}_{\mathbb{W}}(a,b,x,z)}{\partial z}\frac{\mathbb{W}_{q+\lambda}(c-y)}{\mathbb{W}_{q+\lambda}(c-a)}-\frac{\partial\Omega^{(q,q+\lambda)}_{\mathbb{W}}(y,b,x,z)}{\partial z}\right)\widetilde{\upsilon}(y-\mathrm{d}u)\mathbf{1}_{\{z>c\}}\mathrm{d}y\,\mathrm{d}z
\nonumber\\
\hspace{-0.3cm}&&\hspace{-0.3cm}
+\int^{z}_{a}\frac{\partial}{\partial z}\left(\Omega^{(q,q+\lambda)}_{\mathbb{W}}(a,b,x,z)\frac{\mathbb{W}_{q+\lambda}(z-y)}{\mathbb{W}_{q+\lambda}(z-a)}-\Omega^{(q,q+\lambda)}_{\mathbb{W}}(y,b,x,z) \right)\widetilde{\upsilon}(y-\mathrm{d}u)\mathbf{1}_{\{z\leq c\}}\mathrm{d}y\,\mathrm{d}z\nonumber\\
\hspace{-0.3cm}&&\hspace{-0.3cm}
+\Omega^{(q,q+\lambda)}_{\mathbb{W}}(a,b,x,z)\frac{\mathbb{W}_{q+\lambda}(0+)}{\mathbb{W}_{q+\lambda}(z-a)}\widetilde{\upsilon}(z-\mathrm{d}u)
\mathrm{d}z\mathbf{1}_{\{z\leq c\}}
\nonumber\\
\hspace{-0.3cm}&&\hspace{-0.3cm}
+
\frac{W_{q}(0+)}{W_{q}(z-b)}W_{q}(x-b)\upsilon(z-\mathrm{d}u)\mathrm{d}z
+\int_{b}^{z}
\frac{\partial}{\partial z}\bigg[\frac{W_{q}(z-y)}{W_{q}(z-b)}\bigg]W_{q}(x-b)
\upsilon(y-\mathrm{d}u)\mathrm{d}y\mathrm{d}z
\nonumber\\
\hspace{-0.3cm}&&\hspace{-0.3cm}
+\frac{\Omega^{(q,q+\lambda)}_{\mathbb{W}}(a,b,x,z)\mathbf{1}_{\{z>c\}}}{\mathbb{W}_{q+\lambda}(c-a)
-\Omega^{(q,q+\lambda)}_{\mathbb{W}}(a,b,c,z)}\Bigg[\int^{c}_{a}\left(\frac{\partial\Omega^{(q,q+\lambda)}_{\mathbb{W}}(a,b,c,z)}{\partial z}\frac{\mathbb{W}_{q+\lambda}(c-y)}{\mathbb{W}_{q+\lambda}(c-a)}\right.
\nonumber\\
\hspace{-0.3cm}&&\hspace{-0.3cm}
\left.-\frac{\partial\Omega^{(q,q+\lambda)}_{\mathbb{W}}(y,b,c,z)}{\partial z}\right)\widetilde{\upsilon}(y-\mathrm{d}u)\mathrm{d}y\,\mathrm{d}z
+\int_{b}^{z}
\frac{\partial}{\partial z}\bigg[\frac{W_{q}(z-y)}{W_{q}(z-b)}\bigg]W_{q}(c-b)
\upsilon(y-\mathrm{d}u)\mathrm{d}y\mathrm{d}z
\nonumber\\
\hspace{-0.3cm}&&\hspace{-0.3cm}
+\mathrm{d}z\upsilon(z-\mathrm{d}u)
\frac{W_{q}(0+)}{W_{q}(z-b)}W_{q}(c-b)
\Bigg]
+
\frac{\partial}{\partial z}\Bigg[\frac{\Omega^{(q,q+\lambda)}_{\mathbb{W}}(a,b,x,z)}{\mathbb{W}_{q+\lambda}(c-a)
-\Omega^{(q,q+\lambda)}_{\mathbb{W}}(a,b,c,z)}\Bigg]
\nonumber\\
\hspace{-0.3cm}&&\hspace{-0.3cm}
\times\Bigg[\int^{c}_{a}\left(\Omega^{(q,q+\lambda)}_{\mathbb{W}}(a,b,c,z) \frac{\mathbb{W}_{q+\lambda}(c-y)}{\mathbb{W}_{q+\lambda}(c-a)}-\Omega^{(q,q+\lambda)}_{\mathbb{W}}(y,b,c,z)\right)\widetilde{\upsilon}(y-\mathrm{d}u)\mathrm{d}y\,\mathrm{d}z
\nonumber\\
\hspace{-0.3cm}&&\hspace{-0.3cm}
+\int_{b}^{z}
\bigg(\frac{W_{q}(z-y)}{W_{q}(z-b)}W_{q}(c-b)
-W_{q}(c-y)\bigg)\upsilon(y-\mathrm{d}u)\mathrm{d}y\mathrm{d}z
\Bigg]\mathbf{1}_{\{z>c\}},\nonumber
\end{eqnarray}
for $u\in(-\infty,a)$, and
\begin{eqnarray}\label{d2}
\hspace{-0.3cm}&&\hspace{-0.3cm}
\mathbb{E}_{x}\left[\mathrm{e}^{-qT}, U_{T}\in\mathrm{d}u, \overline{U}_{T}\in\mathrm{d}z \right]\nonumber\\
\hspace{-0.3cm}&=&\hspace{-0.3cm}
\lambda\,\mathrm{d}u\,\mathrm{d}z\bigg[\frac{\frac{\partial\Omega^{(q,q+\lambda)}_{\mathbb{W}}(a,b,x,z)}{\partial z}\mathbb{W}_{q+\lambda}(c-u)\mathbf{1}_{\{z>c\}}}
{\mathbb{W}_{q+\lambda}(c-a)}
+\frac{\partial\Omega^{(q,q+\lambda)}_{\mathbb{W}}(a,b,x,z)\frac{\mathbb{W}_{q+\lambda}(z-u)}
{\mathbb{W}_{q+\lambda}(z-a)}}{\partial z}\mathbf{1}_{\{z\leq c\}}
\nonumber\\
\hspace{-0.3cm}&&\hspace{-0.3cm}
+\frac{\Omega^{(q,q+\lambda)}_{\mathbb{W}}(a,b,x,z)\mathbf{1}_{\{z>c\}}}{\mathbb{W}_{q+\lambda}(c-a)
-\Omega^{(q,q+\lambda)}_{\mathbb{W}}(a,b,c,z)}
\bigg[\frac{\partial\Omega^{(q,q+\lambda)}_{\mathbb{W}}(a,b,c,z)}{\partial z}\frac{\mathbb{W}_{q+\lambda}(c-u)}
{\mathbb{W}_{q+\lambda}(c-a)}-\frac{\partial\Omega^{(q,q+\lambda)}_{\mathbb{W}}(u,b,c,z)}{\partial z}\bigg]
\nonumber\\
\hspace{-0.3cm}&&\hspace{-0.3cm}
+\frac{\partial}{\partial z}\bigg[\frac{\Omega^{(q,q+\lambda)}_{\mathbb{W}}(a,b,x,z)}{\mathbb{W}_{q+\lambda}(c-a)
-\Omega^{(q,q+\lambda)}_{\mathbb{W}}(a,b,c,z)}\bigg]\bigg[\Omega^{(q,q+\lambda)}_{\mathbb{W}}(a,b,c,z)\frac{\mathbb{W}_{q+\lambda}(c-u)}
{\mathbb{W}_{q+\lambda}(c-a)}-\Omega^{(q,q+\lambda)}_{\mathbb{W}}(u,b,c,z)
\bigg]\mathbf{1}_{\{z>c\}}\bigg]
\nonumber\\
\hspace{-0.3cm}&&\hspace{-0.3cm}
-\lambda\,\mathrm{d}u\,\mathrm{d}z\frac{\partial\Omega^{(q,q+\lambda)}_{\mathbb{W}}(u,b,x,z)}{\partial z}
-\frac{\widetilde{\sigma}^2}{2}\delta_{a}(\mathrm{d}u)\,\mathrm{d}z\frac{\partial}{\partial z}\left(\Omega^{(q,q+\lambda)}_{\mathbb{W}}(a,b,x,z)\frac{\mathbb{W}_{q+\lambda}^{\prime}(z-a)}{\mathbb{W}_{q+\lambda}(z-a)}\right)\mathbf{1}_{\{z\leq c\}}
\nonumber\\
\hspace{-0.3cm}&&\hspace{-0.3cm}
+
\frac{\widetilde{\sigma}^2}{2}\delta_{a}(\mathrm{d}u)\,\mathrm{d}z\Bigg[\frac{\partial\Omega^{(q,q+\lambda)}_{\mathbb{W}'}(a,b,x,z)}{\partial z}-
\frac{\partial\Omega^{(q,q+\lambda)}_{\mathbb{W}}(a,b,x,z)}{\partial z}\frac{\mathbb{W}_{q+\lambda}^{\prime}(c-a)}{\mathbb{W}_{q+\lambda}(c-a)}\mathbf{1}_{\{z>c\}}
\nonumber\\
\hspace{-0.3cm}&&\hspace{-0.3cm}
+\frac{\Omega^{(q,q+\lambda)}_{\mathbb{W}}(a,b,x,z)\mathbf{1}_{\{z>c\}}}{\mathbb{W}_{q+\lambda}(c-a)
-\Omega^{(q,q+\lambda)}_{\mathbb{W}}(a,b,c,z)}\bigg[\frac{\partial\Omega^{(q,q+\lambda)}_{\mathbb{W}'}(a,b,c,z)}{\partial z}-
\frac{\partial\Omega^{(q,q+\lambda)}_{\mathbb{W}}(a,b,c,z)}{\partial z}\frac{\mathbb{W}_{q+\lambda}^{\prime}(c-a)}{\mathbb{W}_{q+\lambda}(c-a)}\bigg]
\nonumber\\
\hspace{-0.3cm}&&\hspace{-0.3cm}
+\frac{\partial}{\partial z}\bigg[\frac{\Omega^{(q,q+\lambda)}_{\mathbb{W}}(a,b,x,z)}{\mathbb{W}_{q+\lambda}(c-a)
-\Omega^{(q,q+\lambda)}_{\mathbb{W}}(a,b,c,z)}\bigg]
\bigg[\Omega^{(q,q+\lambda)}_{\mathbb{W}'}(a,b,c,z)-
\Omega^{(q,q+\lambda)}_{\mathbb{W}}(a,b,c,z)\frac{\mathbb{W}_{q+\lambda}^{\prime}(c-a)}{\mathbb{W}_{q+\lambda}(c-a)}\bigg]\mathbf{1}_{\{z>c\}}\bigg]
,\nonumber
\end{eqnarray}
for $u\in[a,c\wedge z]$.
Here $\delta_{a}(\mathrm{d}u)$ denotes the Dirac measure which assigns unit mass to the singleton $\{a\}$.
%, and
%\begin{eqnarray}\label{}
%\frac{\partial\Omega^{(q,q+\lambda)}_{\mathbb{\phi}}(a,b,c,z)}{\partial z}
%\hspace{-0.3cm}&=&\hspace{-0.3cm}
%-\frac{\sigma^2}{2}\phi_{q+\lambda}(b-a)W_{q}(x-b)\,\frac{\partial}{\partial z}\bigg[\frac{W_{q}^{\prime}(z-b)}
%{W_{q}(z-b)}\bigg]
%\nonumber\\
%\hspace{-0.3cm}&&\hspace{-3cm}+\int^{\infty}_{z-b}\phi_{q+\lambda}(z-\theta-a)\upsilon(\mathrm{d}\theta)\left(\frac{W_{q}(0+)}{W_{q}(z-b)}
%W_{q}(x-b)-W_{q}(x-z)\right)\nonumber\\
%\hspace{-0.3cm}&&\hspace{-3cm}+\int^{z-b}_{0}\mathrm{d}y\int^{\infty}_{y}\phi_{q+\lambda}(y-\theta+b-a)\upsilon(\mathrm{d}\theta)
%\,\frac{\partial}{\partial z}\bigg[\frac{W_{q}(z-b-y)}
%{W_{q}(z-b)}\bigg]W_{q}(x-b),\,\,\,\,\phi=\mathbb{W} \,\,\,\text{or}\,\,\, \mathbb{W}^{\prime}.\nonumber
%\end{eqnarray}
\end{cor}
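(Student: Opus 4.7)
The plan is to read the joint measure $\mathbb{E}_{x}\left[\mathrm{e}^{-qT},\,U_{T}\in\mathrm{d}u,\,\overline{U}_{T}\in\mathrm{d}z\right]$ directly off Theorem \ref{theorem2} by a standard test-function argument. The identity $\{T<\zeta_{z}^{+}\}=\{\overline{U}_{T}<z\}$ converts the left-hand side of \eqref{g.s.ff} into
$$\mathbb{E}_{x}\left[\mathrm{e}^{-qT}f(U_{T})\mathbf{1}_{\{\overline{U}_{T}<z\}}\right]=\int f(u)\,\mathbb{E}_{x}\!\left[\mathrm{e}^{-qT},\,U_{T}\in\mathrm{d}u,\,\overline{U}_{T}<z\right],$$
so that differentiating \eqref{g.s.ff} in $z$ yields the density in $z$ of this measure, and the density in $u$ is identified by inspecting which integrands pair with $f(\cdot)$. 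The corollary then follows by collecting terms, so no new probabilistic input is required beyond Theorem \ref{theorem2}.

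The decomposition in $u$ proceeds by matching the four types of $f$-integrals appearing on the right-hand side of \eqref{g.s.ff}. Terms of the form $\int_{a}^{c\wedge z}\!\mathrm{d}y\int_{y-a}^{\infty}f(y-\theta)\widetilde{\upsilon}(\mathrm{d}\theta)(\cdots)$ and $\int_{b}^{z}\mathrm{d}y\int_{y-a}^{\infty}f(y-\theta)\upsilon(\mathrm{d}\theta)(\cdots)$ contribute the atoms of the $\widetilde{\upsilon}$- and $\upsilon$-pieces at $u\in(-\infty,a)$: substituting $u=y-\theta$ converts $\widetilde{\upsilon}(\mathrm{d}\theta)$ into $\widetilde{\upsilon}(y-\mathrm{d}u)$, etc. Terms of the form $\lambda\int_{a}^{c\wedge z}f(y)(\cdots)\mathrm{d}y$ contribute the absolutely continuous $\lambda\,\mathrm{d}u$ piece on $[a,c\wedge z]$ (the exponential-clock case corresponding to grace-period expiry). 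Terms of the form $\frac{\widetilde{\sigma}^{2}}{2}f(a)(\cdots)$ contribute the Dirac atom $\delta_{a}(\mathrm{d}u)$, reflecting continuous downward passage of $\widetilde{X}$ through the liquidation barrier (``creeping''). The split $z\leq c$ versus $z>c$ is inherited directly from the two cases in the proof of Theorem \ref{theorem2}: when $z\leq c$, only the ``direct'' contributions from the process starting at $x$ appear, with $c$ replaced by $z$ in the relevant upper limits; when $z>c$, the additional block multiplied by $\frac{\Omega^{(q,q+\lambda)}_{\mathbb{W}}(a,b,x,z)}{\mathbb{W}_{q+\lambda}(c-a)-\Omega^{(q,q+\lambda)}_{\mathbb{W}}(a,b,c,z)}$ enters, corresponding to excursions that successfully climb back to $c$ before liquidation.

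The main obstacle is the bookkeeping in differentiating in $z$: most appearances of $z$ sit either inside $\Omega^{(q,q+\lambda)}_{\mathbb{W}}(\cdot,b,\cdot,z)$, inside the ratio $W_{q}(z-y)/W_{q}(z-b)$, or as the upper limit of an integral in $y$. Leibniz's rule therefore produces two kinds of contributions: interior $\partial/\partial z$-derivatives of the integrands (which account for the terms written in the statement with $\partial_{z}\Omega^{(q,q+\lambda)}_{\mathbb{W}}$ and $\partial_{z}\bigl[W_{q}(z-y)/W_{q}(z-b)\bigr]$), and boundary terms at $y=z$ (which account for the $W_{q}(0+)$ and $\mathbb{W}_{q+\lambda}(0+)$ factors in the $u\in(-\infty,a)$ part). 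For the $z>c$ block the quotient prefactor itself depends on $z$ only through its numerator $\Omega^{(q,q+\lambda)}_{\mathbb{W}}(a,b,x,z)$ while $c$ remains fixed, so the product rule splits it cleanly into ``derivative of the prefactor times the $\hbar(c)$-bracket'' plus ``prefactor times derivative of the bracket,'' matching the two blocks displayed in the statement. Once the Leibniz terms are sorted out and the $u$-identification above is applied to each, the claimed expressions for $u\in(-\infty,a)$ and $u\in[a,c\wedge z]$ follow term by term, completing the proof.
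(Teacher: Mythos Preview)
Your approach is essentially the same as the paper's: they set $f(w)=\mathbf{1}_{\{w\leq u\}}$ in \eqref{g.s.ff} to obtain the cumulative distribution $\mathbb{E}_{x}\bigl[\mathrm{e}^{-qT}\mathbf{1}_{\{U_{T}\leq u\}}\mathbf{1}_{\{\overline{U}_{T}<z\}}\bigr]$ (splitting into the cases $u<a$ and $u\in[a,c]$) and then differentiate first in $u$ and then in $z$, which is equivalent to your direct test-function identification of the $u$-density followed by Leibniz differentiation in $z$. One correction to your bookkeeping: the denominator $\mathbb{W}_{q+\lambda}(c-a)-\Omega^{(q,q+\lambda)}_{\mathbb{W}}(a,b,c,z)$ of the prefactor \emph{does} depend on $z$ through the last argument of $\Omega^{(q,q+\lambda)}_{\mathbb{W}}$, so the product-rule split must treat the full quotient's $z$-derivative, exactly as written in the statement of the corollary.
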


\begin{proof}
Let $f(w) = \mathbf{1}_{\{w\leq u\}}$ in (\ref{g.s.ff}) one may have
\begin{eqnarray}\label{11}
\hspace{-0.3cm}&&\hspace{-0.3cm}
\mathbb{E}_{x}\left[\mathrm{e}^{-qT}\mathbf{1}_{\{U_{T}\leq u\}}\mathbf{1}_{\{\overline{U}_{T}< z\}}\right]%=\mathbb{E}_{x}\left[\mathrm{e}^{-qT}f(U_{T})\mathbf{1}_{\{T<\tau_{z}^{+}\}}\right]
\nonumber\\
\hspace{-0.3cm}&=&\hspace{-0.3cm}\frac{\widetilde{\sigma}^2}{2}\mathbf{1}_{\{a\leq u\}}\left(\Omega^{(q,q+\lambda)}_{\mathbb{W}'}(a,b,x,z)-
\Omega^{(q,q+\lambda)}_{\mathbb{W}}(a,b,x,z)\frac{\mathbb{W}_{q+\lambda}^{\prime}(c\wedge z-a)}{\mathbb{W}_{q+\lambda}(c\wedge z-a)}\right)\nonumber\\
\hspace{-0.3cm}&&\hspace{-0.3cm}+\int_{a}^{c\wedge z}\mathrm{d}y\int_{y-a}^{\infty}\mathbf{1}_{\{y-\theta\leq u\}}\widetilde{\upsilon}(\mathrm{d}\theta)\left(
\Omega^{(q,q+\lambda)}_{\mathbb{W}}(a,b,x,z)\frac{\mathbb{W}_{q+\lambda}(c\wedge z-y)}{\mathbb{W}_{q+\lambda}(c\wedge z-a)}
-\Omega^{(q,q+\lambda)}_{\mathbb{W}}(y,b,x,z)\right)\nonumber\\
\hspace{-0.3cm}&&\hspace{-0.3cm}+\lambda\int_{a}^{c\wedge z}\mathbf{1}_{\{y\leq u\}}\left(\Omega^{(q,q+\lambda)}_{\mathbb{W}}(a,b,x,z)\frac{\mathbb{W}_{q+\lambda}(c\wedge z-y)}
{\mathbb{W}_{q+\lambda}(c\wedge z-a)}-\Omega^{(q,q+\lambda)}_{\mathbb{W}}(y,b,x,z)\right)\mathrm{d}y
\nonumber\\
\hspace{-0.3cm}&&\hspace{-0.3cm}
+\int_{b}^{z}\mathrm{d}y\int_{y-a}^{\infty}
\mathbf{1}_{\{y-\theta\leq u\}}\upsilon(\mathrm{d}\theta)\bigg(\frac{W_{q}(z-y)}{W_{q}(z-b)}W_{q}(x-b)
-W_{q}(x-y)\bigg)
\nonumber\\
\hspace{-0.3cm}&&\hspace{-0.3cm}+\frac{\Omega^{(q,q+\lambda)}_{\mathbb{W}}(a,b,x,z)\mathbf{1}_{\{z>c\}}}{\mathbb{W}_{q+\lambda}(c-a)
-\Omega^{(q,q+\lambda)}_{\mathbb{W}}(a,b,c,z)}\left[\frac{\widetilde{\sigma}^2}{2}\Bigg[\Omega^{(q,q+\lambda)}_{\mathbb{W}'}(a,b,c,z)-
\Omega^{(q,q+\lambda)}_{\mathbb{W}}(a,b,c,z)\frac{\mathbb{W}_{q+\lambda}^{\prime}(c-a)}{\mathbb{W}_{q+\lambda}(c-a)}\Bigg]\mathbf{1}_{\{a\leq u\}}\right.\nonumber\\
\hspace{-0.3cm}&&\hspace{-0.3cm}\left.+\int_{a}^{c}\mathrm{d}y\int_{y-a}^{\infty}\mathbf{1}_{\{y-\theta\leq u\}}\widetilde{\upsilon}(\mathrm{d}\theta)\left(
\Omega^{(q,q+\lambda)}_{\mathbb{W}}(a,b,c,z)\frac{\mathbb{W}_{q+\lambda}(c-y)}{\mathbb{W}_{q+\lambda}(c-a)}
-\Omega^{(q,q+\lambda)}_{\mathbb{W}}(y,b,c,z)\right)\right.\nonumber\\
\hspace{-0.3cm}&&\hspace{-0.3cm}+\lambda\int_{a}^{c}\mathbf{1}_{\{y\leq u\}}\left(\Omega^{(q,q+\lambda)}_{\mathbb{W}}(a,b,c,z)\frac{\mathbb{W}_{q+\lambda}(c-y)}
{\mathbb{W}_{q+\lambda}(c-a)}-\Omega^{(q,q+\lambda)}_{\mathbb{W}}(y,b,c,z)\right)\mathrm{d}y
\nonumber\\
\hspace{-0.3cm}&&\hspace{-0.3cm}
+\int_{b}^{z}\mathrm{d}y\int_{y-a}^{\infty}
\mathbf{1}_{\{y-\theta\leq u\}}\upsilon(\mathrm{d}\theta)\bigg(\frac{W_{q}(z-y)}{W_{q}(z-b)}W_{q}(c-b)
-W_{q}(c-y)\bigg)
\Bigg], \quad b<x\leq z.\nonumber
\end{eqnarray}
Hence one has for $u\in(-\infty,a)$
\begin{eqnarray}\label{12}
\hspace{-0.3cm}&&\hspace{-0.3cm}
\mathbb{E}_{x}\left[\mathrm{e}^{-qT}\mathbf{1}_{\{U_{T}\leq u\}}\mathbf{1}_{\{\overline{U}_{T}< z\}}\right]
=
\frac{\Omega^{(q,q+\lambda)}_{\mathbb{W}}(a,b,x,z)\mathbf{1}_{\{z>c\}}}{\mathbb{W}_{q+\lambda}(c-a)
-\Omega^{(q,q+\lambda)}_{\mathbb{W}}(a,b,c,z)}
\nonumber\\
\hspace{-0.3cm}&&\hspace{0.6cm}
\times\left[\int_{a}^{c}\mathrm{d}y\int_{y-u}^{\infty}\widetilde{\upsilon}(\mathrm{d}\theta)\left(
\Omega^{(q,q+\lambda)}_{\mathbb{W}}(a,b,c,z)\frac{\mathbb{W}_{q+\lambda}(c-y)}{\mathbb{W}_{q+\lambda}(c-a)}
-\Omega^{(q,q+\lambda)}_{\mathbb{W}}(y,b,c,z)\right)\right.
\nonumber\\
\hspace{-0.3cm}&&\hspace{0.6cm}
\left.\quad\,\,+\int_{b}^{z}\mathrm{d}y\int_{y-u}^{\infty}
\upsilon(\mathrm{d}\theta)\bigg(\frac{W_{q}(z-y)}{W_{q}(z-b)}W_{q}(c-b)
-W_{q}(c-y)\bigg)\right]
\nonumber\\
\hspace{-0.3cm}&&\hspace{0.6cm}+
\int_{a}^{c\wedge z}\mathrm{d}y\int_{y-u}^{\infty}\widetilde{\upsilon}(\mathrm{d}\theta)\left(
\Omega^{(q,q+\lambda)}_{\mathbb{W}}(a,b,x,z)\frac{\mathbb{W}_{q+\lambda}(c\wedge z-y)}{\mathbb{W}_{q+\lambda}(c\wedge z-a)}
-\Omega^{(q,q+\lambda)}_{\mathbb{W}}(y,b,x,z)\right)
\nonumber\\
\hspace{-0.3cm}&&\hspace{0.6cm}
+\int_{b}^{z}\mathrm{d}y\int_{y-u}^{\infty}
\upsilon(\mathrm{d}\theta)\bigg(\frac{W_{q}(z-y)}{W_{q}(z-b)}W_{q}(x-b)
-W_{q}(x-y)\bigg), \quad b<x\leq z.
\end{eqnarray}
Meanwhile, for $u\in[a,c]$ one has
\begin{eqnarray}\label{13}
\hspace{-0.3cm}&&\hspace{-0.3cm}
\mathbb{E}_{x}\left[\mathrm{e}^{-qT}\mathbf{1}_{\{U_{T}\leq u\}}\mathbf{1}_{\{\overline{U}_{T}< z\}}\right]
\nonumber\\
\hspace{-0.3cm}&=&\hspace{-0.3cm}\frac{\widetilde{\sigma}^2}{2}\left(\Omega^{(q,q+\lambda)}_{\mathbb{W}'}(a,b,x,z)-
\Omega^{(q,q+\lambda)}_{\mathbb{W}}(a,b,x,z)\frac{\mathbb{W}_{q+\lambda}^{\prime}(c\wedge z-a)}{\mathbb{W}_{q+\lambda}(c\wedge z-a)}\right)\nonumber\\
\hspace{-0.3cm}&&\hspace{-0.3cm}+\int_{a}^{c\wedge z}\mathrm{d}y\int_{y-a}^{\infty}\widetilde{\upsilon}(\mathrm{d}\theta)\left(
\Omega^{(q,q+\lambda)}_{\mathbb{W}}(a,b,x,z)\frac{\mathbb{W}_{q+\lambda}(c\wedge z-y)}{\mathbb{W}_{q+\lambda}(c\wedge z-a)}
-\Omega^{(q,q+\lambda)}_{\mathbb{W}}(y,b,x,z)\right)\nonumber\\
\hspace{-0.3cm}&&\hspace{-0.3cm}+\lambda\int_{a}^{u}\left(\Omega^{(q,q+\lambda)}_{\mathbb{W}}(a,b,x,z)\frac{\mathbb{W}_{q+\lambda}(c\wedge z-y)}
{\mathbb{W}_{q+\lambda}(c\wedge z-a)}-\Omega^{(q,q+\lambda)}_{\mathbb{W}}(y,b,x,z)\right)\mathbf{1}_{\{z>u\}}\mathrm{d}y\nonumber\\
\hspace{-0.3cm}&&\hspace{-0.3cm}+\lambda\int_{a}^{z}\left(\Omega^{(q,q+\lambda)}_{\mathbb{W}}(a,b,x,z)\frac{\mathbb{W}_{q+\lambda}(z-y)}
{\mathbb{W}_{q+\lambda}(z-a)}-\Omega^{(q,q+\lambda)}_{\mathbb{W}}(y,b,x,z)\right)\mathbf{1}_{\{z\leq u\}}\mathrm{d}y
\nonumber\\
\hspace{-0.3cm}&&\hspace{-0.3cm}
+\int_{b}^{z}\mathrm{d}y\int_{y-a}^{\infty}
\upsilon(\mathrm{d}\theta)\bigg(\frac{W_{q}(z-y)}{W_{q}(z-b)}W_{q}(x-b)
-W_{q}(x-y)\bigg)
\nonumber\\
\hspace{-0.3cm}&&\hspace{-0.3cm}
+\frac{\Omega^{(q,q+\lambda)}_{\mathbb{W}}(a,b,x,z)\mathbf{1}_{\{z>c\}}}{\mathbb{W}_{q+\lambda}(c-a)
-\Omega^{(q,q+\lambda)}_{\mathbb{W}}(a,b,c,z)}\left[\frac{\widetilde{\sigma}^2}{2}\Bigg[\Omega^{(q,q+\lambda)}_{\mathbb{W}'}(a,b,c,z)-
\Omega^{(q,q+\lambda)}_{\mathbb{W}}(a,b,c,z)\frac{\mathbb{W}_{q+\lambda}^{\prime}(c-a)}{\mathbb{W}_{q+\lambda}(c-a)}\Bigg]\right.\nonumber\\
\hspace{-0.3cm}&&\hspace{-0.3cm}\left.+\int_{a}^{c}\mathrm{d}y\int_{y-a}^{\infty}\widetilde{\upsilon}(\mathrm{d}\theta)\left(
\Omega^{(q,q+\lambda)}_{\mathbb{W}}(a,b,c,z)\frac{\mathbb{W}_{q+\lambda}(c-y)}{\mathbb{W}_{q+\lambda}(c-a)}
-\Omega^{(q,q+\lambda)}_{\mathbb{W}}(y,b,c,z)\right)\right.\nonumber\\
\hspace{-0.3cm}&&\hspace{-0.3cm}+\lambda\int_{a}^{u}\left(\Omega^{(q,q+\lambda)}_{\mathbb{W}}(a,b,c,z)\frac{\mathbb{W}_{q+\lambda}(c-y)}
{\mathbb{W}_{q+\lambda}(c-a)}-\Omega^{(q,q+\lambda)}_{\mathbb{W}}(y,b,c,z)\right)\mathrm{d}y
\nonumber\\
\hspace{-0.3cm}&&\hspace{-0.3cm}
\left.+\int_{b}^{z}\mathrm{d}y\int_{y-a}^{\infty}
\upsilon(\mathrm{d}\theta)\bigg(\frac{W_{q}(z-y)}{W_{q}(z-b)}W_{q}(c-b)
-W_{q}(c-y)\bigg)\right], \quad b<x\leq z.
\end{eqnarray}
Now, differentiating both sides of (\ref{12}) and (\ref{13}) first in $u$
and then in $z$ yields the desired results. The proof is complete.
\end{proof}

\medskip\medskip
The following Corollary \ref{theorem1} gives the Laplace transform of the liquidation time with an exponential rehabilitation delay given that liquidation occurs prior to its first up-crossing time of the level $z$.
\begin{cor} \label{theorem1}
For $q,\lambda\in(0,\infty)$, $a<b<c<z$ and $b<x\leq z$, we have
\begin{eqnarray}\label{lap.tran.}
\mathbb{E}_{x}\bigg[\mathrm{e}^{-qT}\mathbf{1}_{\{T<\zeta^{+}_{z}\}}\bigg]
\hspace{-0.3cm}&=&\hspace{-0.3cm}
\frac{q}{q+\lambda}\left(\Omega^{(q,q+\lambda)}_{\mathbb{Z}}(a,b,x,z)-\frac{\mathbb{Z}_{q+\lambda}(c-a)}
{\mathbb{W}_{q+\lambda}(c-a)}\Omega^{(q,q+\lambda)}_{\mathbb{W}}(a,b,x,z)\right)\nonumber\\
\hspace{-0.5cm}&&\hspace{-3cm}
+\frac{\lambda}{q+\lambda}\left(Z_{q}(x-b)-\frac{Z_{q}(z-b)}{W_{q}(z-b)}W_{q}
(x-b)-\frac{\Omega^{(q,q+\lambda)}_{\mathbb{W}}(a,b,x,z)}{\mathbb{W}_{q+\lambda}(c-a)}\right)
\nonumber\\
\hspace{-0.5cm}&&\hspace{-3cm}+\frac{\Omega^{(q,q+\lambda)}_{\mathbb{W}}(a,b,x,z)}{\mathbb{W}_{q+\lambda}(c-a)
-\Omega^{(q,q+\lambda)}_{\mathbb{W}}(a,b,c,z)}\left[\frac{q}{q+\lambda}
\left(\Omega^{(q,q+\lambda)}_{\mathbb{Z}}(a,b,c,z)-\frac{\mathbb{Z}_{q+\lambda}(c-a)}
{\mathbb{W}_{q+\lambda}(c-a)}\Omega^{(q,q+\lambda)}_{\mathbb{W}}(a,b,c,z)\right)\right.
\nonumber\\
\hspace{-0.5cm}&&\hspace{-3cm}\left.+\frac{\lambda}{q+\lambda}\left(Z_{q}(c-b)-\frac{Z_{q}(z-b)}{W_{q}(z-b)}W_{q}
(c-b)-\frac{\Omega^{(q,q+\lambda)}_{\mathbb{W}}(a,b,c,z)}{\mathbb{W}_{q+\lambda}(c-a)}\right)
\right].
\end{eqnarray}
\end{cor}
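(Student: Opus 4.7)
The plan is to apply Theorem \ref{theorem2} with $f\equiv 1$ and collapse the resulting right-hand side of \eqref{g.s.ff} into the stated compact form via three tools: (i) the two-sided downward exit identity \eqref{two.side.d.} applied to $\widetilde X$ on the strip $[a,c]$, (ii) the antiderivative relation $(q+\lambda)\int\mathbb W_{q+\lambda}=\mathbb Z_{q+\lambda}-1$, and (iii) the same identity \eqref{two.side.d.} applied to $X$ itself on $[b,z]$.

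The key structural observation is that, with $f\equiv 1$, the $\widetilde\sigma^{2}/2$-term together with the first $\widetilde\upsilon$-integral in \eqref{g.s.ff} equals the outer $\mathbb E_x[\mathrm{e}^{-q\tau_b^-}\mathbf{1}_{\{\tau_b^-<\tau_z^+,\,X_{\tau_b^-}\geq a\}}(\cdot)]$-average of
\[
\widetilde{\mathbb E}_y\bigl[\mathrm{e}^{-(q+\lambda)\tau_a^-}\mathbf{1}_{\{\tau_a^-<\tau_c^+\}}\bigr]=\mathbb Z_{q+\lambda}(y-a)-\tfrac{\mathbb Z_{q+\lambda}(c-a)}{\mathbb W_{q+\lambda}(c-a)}\mathbb W_{q+\lambda}(y-a),
\]
which is precisely \eqref{two.side.d.} for $\widetilde X$. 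Since $\mathbb Z_{q+\lambda}\equiv 1$ on $(-\infty,0)$, extending the outer average to the full event $\{\tau_b^-<\tau_z^+\}$ in order to recover the canonical symbol $\Omega^{(q,q+\lambda)}_{\mathbb Z}(a,b,x,z)$ overcounts by exactly $C(x):=\mathbb E_x[\mathrm{e}^{-q\tau_b^-}\mathbf{1}_{\{\tau_b^-<\tau_z^+,\,X_{\tau_b^-}<a\}}]$. This quantity is exactly the $\upsilon$-integral appearing as the penultimate line of \eqref{g.s.ff} with $f\equiv 1$, so the two contributions cancel, leaving $\Omega^{(q,q+\lambda)}_{\mathbb Z}(a,b,x,z)-\tfrac{\mathbb Z_{q+\lambda}(c-a)}{\mathbb W_{q+\lambda}(c-a)}\Omega^{(q,q+\lambda)}_{\mathbb W}(a,b,x,z)$.

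To treat the $\lambda$-integral in line 4 of \eqref{g.s.ff}, I would evaluate $\int_a^c\mathbb W_{q+\lambda}(c-y)\,\mathrm{d}y=\tfrac{\mathbb Z_{q+\lambda}(c-a)-1}{q+\lambda}$ directly, swap integration orders to compute $\int_a^c\Omega^{(q,q+\lambda)}_{\mathbb W}(y,b,x,z)\,\mathrm{d}y$ by inserting the definition \eqref{om.w.l} and applying the same antiderivative identity to $\mathbb W_{q+\lambda}(X_{\tau_b^-}-\cdot)$ on $\{X_{\tau_b^-}\geq a\}$, and finally invoke \eqref{two.side.d.} for $X$, $\mathbb E_x[\mathrm{e}^{-q\tau_b^-}\mathbf{1}_{\{\tau_b^-<\tau_z^+\}}]=Z_q(x-b)-\tfrac{Z_q(z-b)}{W_q(z-b)}W_q(x-b)$, to absorb the residual constant-$1$ contribution produced by the $\mathbb Z_{q+\lambda}\equiv 1$ convention. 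Combining with the previous step, the $\Omega^{(q,q+\lambda)}_{\mathbb Z}$ coefficients from lines 1--3 and line 4 recombine to $\tfrac{q}{q+\lambda}$, the $Z_q,W_q$ pieces carry $\tfrac{\lambda}{q+\lambda}$, and the extra summand $\tfrac{\Omega^{(q,q+\lambda)}_{\mathbb W}(a,b,x,z)}{\mathbb W_{q+\lambda}(c-a)}$ inherits $-\tfrac{\lambda}{q+\lambda}$, matching the first two lines of \eqref{lap.tran.}.

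The bracketed recursion block of \eqref{g.s.ff}, attached to the prefactor $\tfrac{\Omega^{(q,q+\lambda)}_{\mathbb W}(a,b,x,z)\mathbf{1}_{\{z>c\}}}{\mathbb W_{q+\lambda}(c-a)-\Omega^{(q,q+\lambda)}_{\mathbb W}(a,b,c,z)}$, is structurally identical to lines 1--5 of \eqref{g.s.ff} but with $x$ replaced by $c$ in the $\Omega^{(q,q+\lambda)}_\phi$ symbols and in the $W_q(x-b)$ factors. Running the same two simplifications inside the bracket therefore produces verbatim the bracketed right-hand side of \eqref{lap.tran.}, with the prefactor preserved. The main technical obstacle is the bookkeeping of the overshoot term $C(\cdot)$ on $\{X_{\tau_b^-}<a\}$: it surfaces implicitly both inside $\Omega^{(q,q+\lambda)}_{\mathbb Z}$ through the negative-half-line convention and explicitly inside the overshoot $\upsilon$-integrals in \eqref{g.s.ff}, and the compact form \eqref{lap.tran.} emerges precisely because these contributions cancel exactly instead of persisting as separate $\upsilon$-integrals.
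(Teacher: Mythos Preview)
Your proposal is correct and follows essentially the same route as the paper: both arguments set $f\equiv 1$ in Theorem~\ref{theorem2}, recognise the $\widetilde\sigma^2/2$-term plus $\widetilde\upsilon$-integral plus overshoot $\upsilon$-integral as $\Omega^{(q,q+\lambda)}_{\mathbb Z}(a,b,x,z)-\tfrac{\mathbb Z_{q+\lambda}(c-a)}{\mathbb W_{q+\lambda}(c-a)}\Omega^{(q,q+\lambda)}_{\mathbb W}(a,b,x,z)$ via the two-sided exit identity for $\widetilde X$ together with the convention $\mathbb Z_{q+\lambda}\equiv 1$ on $(-\infty,0)$, and then collapse the $\lambda$-integral using the antiderivative relation for $\mathbb W_{q+\lambda}$ and the exit identity \eqref{two.side.d.} for $X$. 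The only cosmetic difference is that the paper packages the $\lambda$-integral computation probabilistically through the resolvent identity \eqref{r.m.} (yielding $\tfrac{\lambda}{q+\lambda}[1-\widetilde{\mathbb E}_{X_{\tau_b^-}}(\mathrm{e}^{-(q+\lambda)\tau_c^+}\mathbf{1}_{\{\tau_c^+<\tau_a^-\}})-\widetilde{\mathbb E}_{X_{\tau_b^-}}(\mathrm{e}^{-(q+\lambda)\tau_a^-}\mathbf{1}_{\{\tau_a^-<\tau_c^+\}})]$), whereas you integrate $\mathbb W_{q+\lambda}$ directly; the resulting identity is the same.
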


\begin{proof}
By \eqref{two.side.d.}, \eqref{mar.}, \eqref{mar.1} and the extended definition that $\mathbb{Z}_{q+\lambda}=1$ and $\mathbb{W}_{q+\lambda}=0$ over the negative half line $(-\infty, 0)$, one has
\begin{eqnarray}\label{mar.0}
\hspace{-0.2cm}
\Bigg[\mathbb{Z}_{q+\lambda}(x-a)-\mathbb{Z}_{q+\lambda}(c-a)\frac{\mathbb{W}_{q+\lambda}(x-a)}{\mathbb{W}_{q+\lambda}(c-a)}\Bigg]\mathbf{1}_{\{x\leq b\}}
\hspace{-0.3cm}&=&\hspace{-0.3cm}
\Bigg[\frac{\widetilde{\sigma}^2}{2}\bigg[\mathbb{W}_{q+\lambda}^{\prime}(x-a)
-\mathbb{W}_{q+\lambda}(x-a)\frac{\mathbb{W}_{q+\lambda}^{\prime}(c-a)}{\mathbb{W}_{q+\lambda}(c-a)}\bigg]\nonumber\\
\hspace{-0.3cm}&&\hspace{-7.5cm}+\int_{0}^{c-a}\mathrm{d}y\int_{y}^{\infty}
\widetilde{\upsilon}(\mathrm{d}\theta)\bigg[\frac{\mathbb{W}_{q+\lambda}(c-a-y)}{\mathbb{W}_{q+\lambda}(c-a)}
\mathbb{W}_{q+\lambda}(x-a)-\mathbb{W}_{q+\lambda}(x-a-y)\bigg]\Bigg]\mathbf{1}_{\{a\leq x\leq b\}}
+\mathbf{1}_{\{x< a\}}.\nonumber
\end{eqnarray}
Replacing $x$ with $X_{\tau_{b}^{-}}$ in the above equation, multiplying the resulting equation with $\mathrm{e}^{-q\tau_{b}^{-}}\mathbf{1}_
{\{\tau_{b}^{-}<\tau_{z}^{+}\}}$, taking expectation under the probability measure $\mathbb{P}_{x}$ and then making use of \eqref{mar.} and \eqref{mar.1} once again, we have
\begin{eqnarray}\label{crt.1}
\hspace{-0.3cm}&&\hspace{-0.3cm}
\Omega^{(q,q+\lambda)}_{\mathbb{Z}}(a,b,x,z)-\frac{\mathbb{Z}_{q+\lambda}(c-a)}
{\mathbb{W}_{q+\lambda}(c-a)}\Omega^{(q,q+\lambda)}_{\mathbb{W}}(a,b,x,z)
\nonumber\\
\hspace{-0.3cm}&=&\hspace{-0.3cm}
\frac{\widetilde{\sigma}^2}{2}\left(\Omega^{(q,q+\lambda)}_{\mathbb{W}'}(a,b,x,z)-
\Omega^{(q,q+\lambda)}_{\mathbb{W}}(a,b,x,z)\frac{\mathbb{W}_{q+\lambda}^{\prime}(c-a)}{\mathbb{W}_{q+\lambda}(c-a)}\right)
\nonumber\\
\hspace{-0.3cm}&&\hspace{-0.3cm}+\int_{a}^{c}\mathrm{d}y\int_{y-a}^{\infty}\widetilde{\upsilon}(\mathrm{d}\theta)\left(
\Omega^{(q,q+\lambda)}_{\mathbb{W}}(a,b,x,z)\frac{\mathbb{W}_{q+\lambda}(c-y)}{\mathbb{W}_{q+\lambda}(c-a)}
-\Omega^{(q,q+\lambda)}_{\mathbb{W}}(y,b,x,z)\right)
\nonumber\\
\hspace{-0.3cm}&&\hspace{-0.3cm}
+\int_{b}^{z}\mathrm{d}y\int_{y-a}^{\infty}
\upsilon(\mathrm{d}\theta)\bigg(\frac{W_{q}(z-y)}{W_{q}(z-b)}W_{q}(x-b)
-W_{q}(x-y)\bigg).
\end{eqnarray}
By \eqref{r.m.} one has
\begin{eqnarray}\label{r.m.0}
\hspace{-0.3cm}&&\hspace{-0.3cm}
\lambda\int_{a}^{c}\bigg[\frac{\mathbb{W}_{q+\lambda}(x-a)\mathbb{W}_{q+\lambda}(c-y)}
{\mathbb{W}_{q+\lambda}(c-a)}-\mathbb{W}_{q+\lambda}(x-y)\bigg]\mathrm{d}y
\nonumber\\
\hspace{-0.3cm}&=&\hspace{-0.3cm}
\lambda\int_{0}^{\infty}\mathrm{e}^{-(q+\lambda)t}\int_{a}^{c}\widetilde{\mathbb{P}}_{x}(X_{t}\in\mathrm{d}y, t<\tau_{c}^{+}\wedge\tau_{a}^{-})\mathrm{d}t
\nonumber\\
\hspace{-0.3cm}&=&\hspace{-0.3cm}
\lambda\,\widetilde{\mathbb{E}}_{x}\bigg[\int_{0}^{\tau_{c}^{+}\wedge\tau_{a}^{-}}\mathrm{e}^{-(q+\lambda)t}\mathrm{d}t\bigg]
\nonumber\\
\hspace{-0.3cm}&=&\hspace{-0.3cm}
\frac{\lambda}{q+\lambda}\bigg[1-\widetilde{\mathbb{E}}_{x}\bigg[\mathrm{e}^{-(q+\lambda)\tau_{c}^{+}}\mathbf{1}_{\{\tau_{c}^{+}<\tau_{a}^{-}\}}\bigg]-
\widetilde{\mathbb{E}}_{x}\bigg[\mathrm{e}^{-(q+\lambda)\tau_{a}^{-}}\mathbf{1}_{\{\tau_{a}^{-}<\tau_{c}^{+}\}}\bigg]\bigg]
\nonumber\\
\hspace{-0.3cm}&=&\hspace{-0.3cm}
\frac{\lambda}{q+\lambda}\bigg[1-
\frac{\mathbb{W}_{q+\lambda}(x-a)}{\mathbb{W}_{q+\lambda}(c-a)}
-
\bigg[\mathbb{Z}_{q+\lambda}(x-a)-\mathbb{Z}_{q+\lambda}(c-a)\frac{\mathbb{W}_{q+\lambda}(x-a)}{\mathbb{W}_{q+\lambda}(c-a)}\bigg]\bigg].
\end{eqnarray}
Replacing $x$ with $X_{\tau_{b}^{-}}$ in \eqref{r.m.0}, multiplying the resulting equation with $\mathrm{e}^{-q\tau_{b}^{-}}\mathbf{1}_
{\{\tau_{b}^{-}<\tau_{z}^{+}\}}$ and then taking expectation under the measure $\mathbb{P}_{x}$ leads to
\begin{eqnarray}
\label{crt.2}
\hspace{-0.3cm}&&\hspace{-0.3cm}
\lambda\int_{a}^{c}\bigg[\Omega^{(q,q+\lambda)}_{\mathbb{W}}(a,b,x,z)\frac{\mathbb{W}_{q+\lambda}(c-y)}
{\mathbb{W}_{q+\lambda}(c-a)}-\Omega^{(q,q+\lambda)}_{\mathbb{W}}(y,b,x,z)\bigg]\mathrm{d}y
\nonumber\\
\hspace{-0.3cm}&=&\hspace{-0.3cm}
\frac{\lambda}{q+\lambda}\bigg[
Z_{q}(x-b)-Z_{q}(z-b)\frac{W_{q}(x-b)}{W_{q}(z-b)}\bigg]-\frac{\lambda}{q+\lambda}
\frac{\Omega^{(q,q+\lambda)}_{\mathbb{W}}(a,b,x,z)}
{\mathbb{W}_{q+\lambda}(c-a)}
\nonumber\\
\hspace{-0.3cm}&&\hspace{-0.3cm}
-\frac{\lambda}{q+\lambda}\bigg[\Omega^{(q,q+\lambda)}_{\mathbb{Z}}(a,b,x,z)-\frac{\mathbb{Z}_{q+\lambda}(c-a)}
{\mathbb{W}_{q+\lambda}(c-a)}\Omega^{(q,q+\lambda)}_{\mathbb{W}}(a,b,x,z)\bigg].
\end{eqnarray}
Combining \eqref{crt.1}, \eqref{crt.2} and \eqref{g.s.ff}, one obtains \eqref{lap.tran.}. The proof is complete.
\end{proof}

\begin{rem}
\label{two.sid.sol.}
By Corollary \ref{theorem1} one can find that
\begin{eqnarray}
\mathbb{P}_{x}\left(\zeta_{z}^{+}<T\right)
\hspace{-0.3cm}&=&\hspace{-0.3cm}\frac{W_{}(x-b)}{W_{}(z-b)}
+
\frac{\Omega^{(0,\lambda)}_{\mathbb{W}}(a,b,x,z)}{\mathbb{W}_{\lambda}(c-a)-
\Omega^{(0,\lambda)}_{\mathbb{W}}(a,b,c,z)}
\frac{W_{}(c-b)}{W_{}(z-b)},
\nonumber
\end{eqnarray}
from which one can get the solution to the two-side exit problem as
\begin{eqnarray}
\mathbb{E}_{x}\Big(\mathrm{e}^{-q\zeta_{z}^{+}}\mathbf{1}_{\{\zeta_{z}^{+}<T\}}\Big)
\hspace{-0.3cm}&=&\hspace{-0.3cm}
\mathbb{P}_{x}\Big(\zeta_{z}^{+}<T, \zeta_{z}^{+}<e_{q}\Big)
\nonumber\\
\hspace{-0.3cm}&=&\hspace{-0.3cm}
\frac{W_{q}(x-b)}{W_{q}(z-b)}
+
\frac{\Omega^{(q,q+\lambda)}_{\mathbb{W}}(a,b,x,z)}{\mathbb{W}_{q+\lambda}(c-a)-
\Omega^{(q,q+\lambda)}_{\mathbb{W}}(a,b,c,z)}
\frac{W_{q}(c-b)}{W_{q}(z-b)},\nonumber
\end{eqnarray}
where $e_{q}$ is a killing exponential random variable with rate $q$, and is independent of $X$ and $\widetilde{X}$.
\end{rem}

\medskip
By letting $q=0$ and $z\rightarrow \infty$ in Corollary \ref{theorem1}, the following Corollary \ref{cor1} gives the liquidation probability.
\begin{cor} \label{cor1}
For $\lambda>0$ and $a<b<c$ and $b<x$, then
\begin{eqnarray}\label{l.p.}
\mathbb{P}_{x}(T<\infty)\hspace{-0.3cm}&=&\hspace{-0.3cm}1-\psi^{\prime}(0+)\left(W(x-b)+W(c-b)\frac{\Omega^{(0,\lambda)}_{\mathbb{W}}(a,b,x)}{\mathbb{W}_{\lambda}(c-a)-
\Omega^{(0,\lambda)}_{\mathbb{W}}(a,b,c)}\right),
\end{eqnarray}
with $\Omega^{(0,\lambda)}_{\mathbb{W}}(a,b,x)$ being given by \eqref{om.w.}.
\end{cor}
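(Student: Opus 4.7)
The plan is to obtain \eqref{l.p.} directly from Corollary \ref{theorem1} by specialising to $q=0$ and then sending $z\to\infty$. Two separate tasks arise: first, identify the probabilistic limit of the left-hand side $\mathbb{P}_{x}(T<\zeta_{z}^{+})$; second, evaluate the termwise limit of the right-hand side of \eqref{lap.tran.} and verify the resulting algebraic simplification.

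For the probabilistic limit, the standing positive safety loading assumption $\psi'(0+)>0$ forces the underlying L\'evy process to drift to $+\infty$ almost surely. Consequently, on $\{T=\infty\}$ the surplus visits arbitrarily high levels, so $\zeta_{z}^{+}<\infty=T$ for every $z$; on $\{T<\infty\}$ we have $\zeta_{z}^{+}>T$ as soon as $z>\overline{U}_{T}$. Hence $\mathbf{1}_{\{T<\zeta_{z}^{+}\}}\uparrow \mathbf{1}_{\{T<\infty\}}$ as $z\to\infty$ and monotone convergence yields $\lim_{z\to\infty}\mathbb{P}_{x}(T<\zeta_{z}^{+}) = \mathbb{P}_{x}(T<\infty)$.

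For the analytic side, setting $q=0$ in \eqref{lap.tran.} annihilates the factors $q/(q+\lambda)$ and reduces $Z_{0}=Z$ to $1$. Combining the standard asymptotic $\lim_{z\to\infty}W(z)=1/\psi'(0+)$ (valid under positive safety loading) with the defining limit $\Omega^{(0,\lambda)}_{\mathbb{W}}(a,b,\cdot,z)\to \Omega^{(0,\lambda)}_{\mathbb{W}}(a,b,\cdot)$ from \eqref{om.w.}, and writing $A:=\Omega^{(0,\lambda)}_{\mathbb{W}}(a,b,x)$, $B:=\Omega^{(0,\lambda)}_{\mathbb{W}}(a,b,c)$, $M:=\mathbb{W}_{\lambda}(c-a)$, the termwise limit gives
$$\mathbb{P}_{x}(T<\infty) = 1 - \psi'(0+)W(x-b) - \frac{A}{M} + \frac{A}{M-B}\left(1 - \psi'(0+)W(c-b) - \frac{B}{M}\right).$$
A short manipulation exploiting $\frac{A}{M-B}\bigl(1-\frac{B}{M}\bigr)=\frac{A}{M}$ cancels the three rational terms involving $A$, $B$, $M$, leaving exactly \eqref{l.p.}. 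The only minor obstacle is justifying the termwise passage to the limit; this is routine since each factor is either bounded, monotone, or a continuous function with a known limit, and the denominator $M-\Omega^{(0,\lambda)}_{\mathbb{W}}(a,b,c,z)$ remains bounded away from zero in the limit.
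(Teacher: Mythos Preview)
Your proposal is correct and follows essentially the same route as the paper: set $q=0$ in Corollary~\ref{theorem1}, take $z\to\infty$, use $\lim_{z\to\infty}W(z)=1/\psi'(0+)$ and $\Omega^{(0,\lambda)}_{\mathbb{W}}(a,b,\cdot,z)\to\Omega^{(0,\lambda)}_{\mathbb{W}}(a,b,\cdot)$, then perform exactly the same cancellation $\frac{A}{M-B}\bigl(1-\frac{B}{M}\bigr)-\frac{A}{M}=0$. The only difference is that you explicitly justify the monotone convergence $\mathbf{1}_{\{T<\zeta_{z}^{+}\}}\uparrow\mathbf{1}_{\{T<\infty\}}$, which the paper leaves implicit.
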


\begin{proof}
Letting $q=0$, and $z\rightarrow \infty$ in Equation \eqref{lap.tran.}, we have
\begin{eqnarray}\label{lap.tran.0}
\mathbb{P}_{x}(T<\infty)
\hspace{-0.3cm}&=&\hspace{-0.3cm}
\lim_{z\rightarrow \infty}\left[1-\frac{W
(x-b)}{W(z-b)}-\frac{\Omega^{(0,\lambda)}_{\mathbb{W}}(a,b,x,z)}{\mathbb{W}_{\lambda}(c-a)}
\right.
\nonumber\\
\hspace{-0.3cm}&&\hspace{-0.3cm}
\left.+\frac{\Omega^{(0,\lambda)}_{\mathbb{W}}(a,b,x,z)}{\mathbb{W}_{\lambda}(c-a)
-\Omega^{(0,\lambda)}_{\mathbb{W}}(a,b,c,z)}
\left(1-\frac{W
(c-b)}{W(z-b)}-\frac{\Omega^{(0,\lambda)}_{\mathbb{W}}(a,b,c,z)}{\mathbb{W}_{\lambda}(c-a)}\right)\right]
\nonumber\\
\hspace{-0.3cm}&=&\hspace{-0.3cm}
\lim_{z\rightarrow \infty}\left[1-\frac{W
(x-b)}{W(z-b)}
-\frac{\Omega^{(0,\lambda)}_{\mathbb{W}}(a,b,x,z)}{\mathbb{W}_{\lambda}(c-a)
-\Omega^{(0,\lambda)}_{\mathbb{W}}(a,b,c,z)}
\frac{W(c-b)}{W(z-b)}\right]
\nonumber\\
\hspace{-0.3cm}&=&\hspace{-0.3cm}
1-\psi^{\prime}(0+)\left(W(x-b)+W(c-b)\frac{\Omega^{(0,\lambda)}_{\mathbb{W}}(a,b,x)}{\mathbb{W}_{\lambda}(c-a)-
\Omega^{(0,\lambda)}_{\mathbb{W}}(a,b,c)}\right),
\end{eqnarray}
where in the last identity we have used $\lim\limits_{x\rightarrow \infty}W(x)=\frac{1}{\psi'(0+)}$ given that $\psi'(0+)>0$.
\end{proof}

%%%%%%%%%

\section{Application of the results to the case $X\equiv\widetilde{X}$}
\setcounter{section}{4} \setcounter{equation}{0}

In this section, we recover several existing results on Parisian ruin for a spectrally negative L\'evy risk process with a constant lower barrier in the literature and obtain some new results. To this end, let $X\equiv \widetilde{X}$, hence $\gamma=\widetilde{\gamma}$, $\sigma=\widetilde{\sigma}$, $\upsilon=\widetilde{\upsilon}$, $W_{q}=\mathbb{W}_{q}$ and $Z_{q}=\mathbb{Z}_{q}$ for all $q\geq0$.
It is interesting to see that in this case all results can be expressed by the scale functions $W_{q}$ and $Z_{q}$, while the L\'evy triplet is not needed.

\medskip
The following Corollary \ref{theorem11} gives the Laplace transform of the liquidation time with exponential rehabilitation delay conditioning on that liquidation occurs prior to its first up-crossing of the level $z$.
\begin{cor} \label{theorem11}
For $q,\lambda\in(0,\infty)$, $a<b<c<z$ and $b<x\leq z$, we have
\begin{eqnarray}\label{lap.t.1.}
\mathbb{E}_{x}\bigg[\mathrm{e}^{-qT}\mathbf{1}_{\{T<\zeta^{+}_{z}\}}\bigg]
\hspace{-0.3cm}&=&\hspace{-0.3cm}
K^{(q,q+\lambda)}(a,b,x,z)+\frac{\Omega^{(q,q+\lambda)}_{W}(a,b,x,z)}{W_{q+\lambda}(c-a)
-\Omega^{(q,q+\lambda)}_{W}(a,b,c,z)} K^{(q,q+\lambda)}(a,b,c,z),\nonumber
\end{eqnarray}
where $\Omega^{(q,q+\lambda)}_{W}(a,b,x,z)$ %and $\Omega^{(q,q+\lambda)}_{Z}(a,b,x,z)$
%are
is given by \eqref{lap.t.11.},
and
\begin{eqnarray}
K^{(q,q+\lambda)}(a,b,x,z)\hspace{-0.3cm}&=&\hspace{-0.3cm}\frac{q}{q+\lambda}\left[Z_{q+\lambda}(x-a)-\lambda\int_{b}^{x}W_{q}(x-y)Z_{q+\lambda}(y-a)\mathrm{d}y\right.\nonumber\\
\hspace{-0.3cm}&&\hspace{-2cm}
\left.-\frac{W_{q}(x-b)}{W_{q}(z-b)}\left(Z_{q+\lambda}(z-a)-\lambda\int_{b}^{z}W_{q}(z-y)Z_{q+\lambda}(y-a)\mathrm{d}y\right)
\right.\nonumber\\
\hspace{-0.3cm}&&\hspace{-2cm}
\left.-\frac{Z_{q+\lambda}(c-a)}
{W_{q+\lambda}(c-a)}\left[W_{q+\lambda}(x-a)-\lambda\int_{b}^{x}W_{q}(x-y)W_{q+\lambda}(y-a)\mathrm{d}y\right.\right.\nonumber\\
\hspace{-0.3cm}&&\hspace{-2cm}
\left.\left.-\frac{W_{q}(x-b)}{W_{q}(z-b)}\left(W_{q+\lambda}(z-a)-\lambda\int_{b}^{z}W_{q}(z-y)W_{q+\lambda}(y-a)\mathrm{d}y\right)\right]
\right]
\nonumber\\
\hspace{-0.3cm}&&\hspace{-2cm}
\,\,+\frac{\lambda}{q+\lambda}\left[Z_{q}(x-b)-\frac{Z_{q}(z-b)}{W_{q}(z-b)}W_{q}(x-b)\right.
\nonumber\\
\hspace{-0.3cm}&&\hspace{-2cm}
\left.\,-\frac{1}{W_{q+\lambda}(c-a)}\left[W_{q+\lambda}(x-a)-\lambda\int_{b}^{x}W_{q}(x-y)W_{q+\lambda}(y-a)\mathrm{d}y\right.\right.\nonumber\\
\hspace{-0.3cm}&&\hspace{-2cm}
\left.\left.-\frac{W_{q}(x-b)}{W_{q}(z-b)}\left(W_{q+\lambda}(z-a)-\lambda\int_{b}^{z}W_{q}(z-y)W_{q+\lambda}(y-a)\mathrm{d}y
\right)\right]\right].\nonumber
\end{eqnarray}
\end{cor}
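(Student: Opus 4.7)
My plan is to derive Corollary \ref{theorem11} as a direct specialization of Corollary \ref{theorem1} to the case $X\equiv\widetilde{X}$. Under this identification we have $\mathbb{W}_{q}=W_{q}$ and $\mathbb{Z}_{q}=Z_{q}$ for every $q\ge 0$, so both $\phi=W$ and $\phi=Z$ become scale functions of a single L\'evy process, which is precisely the setting in which the compact identity \eqref{lap.t.11.} (borrowed from Lemma~2.2 and equation~(19) of Loeffen et al.~(2014)) applies. The overall strategy is therefore to quote Corollary~\ref{theorem1}, identify its right-hand side at the structural level with the announced $K^{(q,q+\lambda)}$-formula, and finally invoke \eqref{lap.t.11.} to unfold $\Omega^{(q,q+\lambda)}_{W}$ and $\Omega^{(q,q+\lambda)}_{Z}$ into expressions involving only the scale functions.

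Concretely, reading \eqref{lap.tran.} I would observe that the right-hand side consists of two blocks having exactly the same shape, namely
$$\frac{q}{q+\lambda}\left(\Omega^{(q,q+\lambda)}_{Z}(a,b,\cdot,z)-\frac{Z_{q+\lambda}(c-a)}{W_{q+\lambda}(c-a)}\,\Omega^{(q,q+\lambda)}_{W}(a,b,\cdot,z)\right)+\frac{\lambda}{q+\lambda}\left(Z_{q}(\cdot-b)-\frac{Z_{q}(z-b)}{W_{q}(z-b)}W_{q}(\cdot-b)-\frac{\Omega^{(q,q+\lambda)}_{W}(a,b,\cdot,z)}{W_{q+\lambda}(c-a)}\right),$$
evaluated at the slot $\cdot=x$ in the first block and at $\cdot=c$ in the second block, the latter carrying the prefactor $\Omega^{(q,q+\lambda)}_{W}(a,b,x,z)/(W_{q+\lambda}(c-a)-\Omega^{(q,q+\lambda)}_{W}(a,b,c,z))$. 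Denoting this common expression by $K^{(q,q+\lambda)}(a,b,\cdot,z)$ immediately rewrites \eqref{lap.tran.} as
$$K^{(q,q+\lambda)}(a,b,x,z)+\frac{\Omega^{(q,q+\lambda)}_{W}(a,b,x,z)}{W_{q+\lambda}(c-a)-\Omega^{(q,q+\lambda)}_{W}(a,b,c,z)}\,K^{(q,q+\lambda)}(a,b,c,z),$$
which already matches the outer skeleton of the target formula.

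The last step is to substitute \eqref{lap.t.11.} (once with $\phi=W$ and once with $\phi=Z$) into the above definition of $K^{(q,q+\lambda)}$. This replaces each $\Omega^{(q,q+\lambda)}_{\phi}(a,b,\cdot,z)$ by an explicit combination of $W_{q}$, $W_{q+\lambda}$, $Z_{q+\lambda}$ and the convolutions $\int_{b}^{\cdot}W_{q}(\cdot-y)\phi_{q+\lambda}(y-a)\,\mathrm{d}y$, and after collecting terms one recovers verbatim the expression for $K^{(q,q+\lambda)}(a,b,x,z)$ stated in the corollary. No genuinely new probabilistic input is needed; the only delicate point is careful bookkeeping, i.e.~tracking the $W_{q+\lambda}$- versus $Z_{q+\lambda}$-terms and confirming that the $\lambda/(q+\lambda)$-scaled copy of $\Omega^{(q,q+\lambda)}_{W}(a,b,x,z)/W_{q+\lambda}(c-a)$ aligns with the last bracket appearing inside $K^{(q,q+\lambda)}$.
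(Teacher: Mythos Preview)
Your proposal is correct and matches the paper's own proof, which is the one-line observation that since $W_{q}\equiv\mathbb{W}_{q}$ and $Z_{q}\equiv\mathbb{Z}_{q}$, the result is a direct consequence of Corollary~\ref{theorem1} and \eqref{lap.t.11.}. Your write-up simply spells out the bookkeeping that the paper leaves implicit.
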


\begin{proof}
Note that $W_{q}\equiv \mathbb{W}_{q}$ and $Z_{q}\equiv \mathbb{Z}_{q}$ for all $q\geq0$. The result is a direct consequence of Corollary \ref{theorem1} and \eqref{lap.t.11.}.
\end{proof}

\medskip
In the sequel, we further let $b=0$ and $c\downarrow0$, hence
$$T\rightarrow T_{\lambda}(a)=\tau_{a}^{-}\wedge \tau_{\lambda},
\,\,\,\,
\tau_{\lambda}:=\inf\{t>0; t-g_{t}>e_{\lambda}^{g_{t}},X_{t}<0\},$$
where $g_{t}:=\sup\{s\leq t; X_{s}>0\}$ and $e_{\lambda}^{g_{t}}$ is an exponentially distributed random variable (with parameter $\lambda$) which is independent of $X$. Here, $\tau_{\lambda}$ is the Parisian ruin time with exponential implementation delays, see for example in Landriault et al. (2014) and Baurdoux et al. (2016), while $T_{\lambda}(a)$ is the Parisian ruin time with exponential implementation delays and a lower barrier, see for example in Frostig and Keren-Pinhasik (2019).

\medskip
The following Corollary \ref{KPcor.} gives the two-side  exit identities involving the Parisian ruin time with exponential rehabilitation delay and a lower barrier for the spectrally negative L\'evy processes with either bounded or unbounded variation. It generalizes Proposition 3.1 and Theorems 3.1-3.2 in Frostig and Keren-Pinhasik (2019) as well as Proposition 2.1 of Landriault et al. (2014) where a spectrally negative L\'evy process with bounded variation was considered.
\begin{cor}\label{KPcor.}
For $q,\lambda\in(0,\infty)$, $a\in(-\infty,0)$ and $z\in(0,\infty)$ we have
\begin{eqnarray}\label{11.00}
\hspace{-0.3cm}\mathbb{E}_{x}\left[\mathrm{e}^{-qT_{\lambda}(a)};T_{\lambda}(a)<\tau^{+}_{z}\right]
\hspace{-0.3cm}&=&\hspace{-0.3cm}
\frac{q}{q+\lambda}\left(\ell^{(q,q+\lambda)}(-a,x)-\frac{\ell^{(q,q+\lambda)}(-a,z)}
{\omega^{(q,q+\lambda)}(-a,z)}\omega^{(q,q+\lambda)}(-a,x)\right)\nonumber\\
\hspace{-0.3cm}&&\hspace{-0.3cm}
+\frac{\lambda}{q+\lambda}\left(Z_{q}(x)-\frac{Z_{q}(z)}{\omega^{(q,q+\lambda)}(-a,z)}
\omega^{(q,q+\lambda)}(-a,x)\right), \quad x\in(0,z],
\end{eqnarray}
and
\begin{eqnarray}\label{1x.00}
\mathbb{E}_{x}\Big(\mathrm{e}^{-q\tau_{z}^{+}}\mathbf{1}_{\{\tau_{z}^{+}<T_{\lambda}(a)\}}\Big)
\hspace{-0.3cm}&=&\hspace{-0.3cm}
\frac{\omega^{(q,q+\lambda)}(-a,x)}{\omega^{(q,q+\lambda)}(-a,z)}, \quad x\in(0,z].
\end{eqnarray}
\end{cor}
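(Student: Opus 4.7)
The plan is to derive Corollary \ref{KPcor.} as a degenerate limit of Corollary \ref{theorem11} and of the two--sided exit identity in Remark \ref{two.sid.sol.}, obtained by setting $b=0$ and letting $c\downarrow 0$. Under this limit, the safety and rehabilitation barriers collapse, so the state of \emph{insolvent} becomes ``the surplus is below $0$'' and the grace clock starts afresh at every downward excursion below zero, exactly matching the definition of the Parisian ruin time $\tau_\lambda$ with exponential delays, while the lower barrier $a$ remains; hence $T\to T_\lambda(a)=\tau_a^-\wedge\tau_\lambda$. Since $X\equiv\widetilde X$ already gives $\mathbb{W}_q=W_q$ and $\mathbb{Z}_q=Z_q$, every quantity in the formulas of Corollary \ref{theorem11} is a rational combination of scale functions.

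The main algebraic observation is that \eqref{lap.t.11.} with $b=0$ rewrites the auxiliary function as a difference of $\omega^{(q,q+\lambda)}$'s and $\ell^{(q,q+\lambda)}$'s. Explicitly, using the second equality in \eqref{equi.def.w}, one checks
\begin{eqnarray*}
\Omega^{(q,q+\lambda)}_{W}(a,0,x,z)
&=&\omega^{(q,q+\lambda)}(-a,x)-\frac{W_q(x)}{W_q(z)}\,\omega^{(q,q+\lambda)}(-a,z),\\
\Omega^{(q,q+\lambda)}_{Z}(a,0,x,z)
&=&\ell^{(q,q+\lambda)}(-a,x)-\frac{W_q(x)}{W_q(z)}\,\ell^{(q,q+\lambda)}(-a,z).
\end{eqnarray*}
Plugging these into Corollary \ref{theorem11} at $b=0$ and passing to the limit $c\downarrow 0$ in the prefactor is the delicate step: both $W_{q+\lambda}(c-a)-\Omega^{(q,q+\lambda)}_{W}(a,0,c,z)$ and the numerator $\Omega^{(q,q+\lambda)}_{W}(a,0,c,z)$-dependent correction contain a factor $W_q(c)\to W_q(0+)$, which is $0$ in the unbounded--variation case and $1/c_X>0$ in the bounded--variation case. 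A direct substitution $c=0$ in the identities above shows that $W_{q+\lambda}(-a)-\Omega^{(q,q+\lambda)}_{W}(a,0,0,z)=\tfrac{W_q(0+)}{W_q(z)}\,\omega^{(q,q+\lambda)}(-a,z)$, so the two copies of $W_q(0+)$ cancel and the limit is well defined \emph{in both variation regimes simultaneously}. After this cancellation the $K^{(q,q+\lambda)}$--kernel collapses to $K^{(q,q+\lambda)}(0,0,c,z)=0$ at $c=0$, and only the direct term $K^{(q,q+\lambda)}(a,0,x,z)$ survives; its $\omega,\ell$--rewrite produces precisely \eqref{11.00}.

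For the second identity, the same $b=0$, $c\downarrow 0$ limit applied to the formula of Remark \ref{two.sid.sol.} gives
\[
\mathbb{E}_x\!\left(\mathrm e^{-q\tau_z^+}\mathbf 1_{\{\tau_z^+<T_\lambda(a)\}}\right)
=\frac{W_q(x)}{W_q(z)}+\frac{\omega^{(q,q+\lambda)}(-a,x)-\tfrac{W_q(x)}{W_q(z)}\omega^{(q,q+\lambda)}(-a,z)}{\omega^{(q,q+\lambda)}(-a,z)},
\]
after the same cancellation of the $W_q(0+)$ factors, which collapses telescopically to $\omega^{(q,q+\lambda)}(-a,x)/\omega^{(q,q+\lambda)}(-a,z)$, proving \eqref{1x.00}.

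The routine but unavoidable part of the work is the verification of the identifications of $\Omega^{(q,q+\lambda)}_{W}$ and $\Omega^{(q,q+\lambda)}_{Z}$ with $\omega,\ell$ at $b=0$, together with a convergence argument showing $T\to T_\lambda(a)$ almost surely as $b\downarrow 0$, $c\downarrow 0$; the hard part is the $c\downarrow 0$ analysis of the prefactor, because one must argue uniformly for bounded and unbounded variation so that the $W_q(0+)$ factors cancel before being set to their respective limits. Once this uniform cancellation is in place, both \eqref{11.00} and \eqref{1x.00} follow from straightforward simplification.
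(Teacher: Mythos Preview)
Your overall strategy---set $b=0$ and pass to the limit $c\downarrow 0$ in Corollary~\ref{theorem11} and Remark~\ref{two.sid.sol.}, after rewriting $\Omega^{(q,q+\lambda)}_{W}$ and $\Omega^{(q,q+\lambda)}_{Z}$ in terms of $\omega^{(q,q+\lambda)}$ and $\ell^{(q,q+\lambda)}$---is exactly the paper's approach, and your derivation of \eqref{1x.00} is correct once the $c\downarrow 0$ estimates are in place.

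However, your argument for \eqref{11.00} contains a genuine error. You claim that after the $W_q(0+)$ cancellation ``the $K^{(q,q+\lambda)}$--kernel collapses to $0$ at $c=0$, and only the direct term $K^{(q,q+\lambda)}(a,0,x,z)$ survives''. This is false: the second summand
\[
\frac{\Omega^{(q,q+\lambda)}_{W}(a,0,x,z)}{W_{q+\lambda}(c-a)-\Omega^{(q,q+\lambda)}_{W}(a,0,c,z)}\,K^{(q,q+\lambda)}(a,0,c,z)
\]
does \emph{not} vanish in the limit. At $c=0$ one has $K^{(q,q+\lambda)}(a,0,0,z)=\tfrac{W_q(0+)}{W_q(z)}\bigl[\cdots\bigr]$ with the bracket nonzero, so after dividing out the common $W_q(0+)$ (or, rigorously, $W_q(c)$) from numerator and denominator the limit of this whole term is
\[
\Bigl(\tfrac{W_q(z)\,\omega^{(q,q+\lambda)}(-a,x)}{\omega^{(q,q+\lambda)}(-a,z)}-W_q(x)\Bigr)\times\Bigl[\tfrac{q}{q+\lambda}\bigl(-\tfrac{\ell^{(q,q+\lambda)}(-a,z)}{W_q(z)}+\tfrac{Z_{q+\lambda}(-a)}{W_{q+\lambda}(-a)}\tfrac{\omega^{(q,q+\lambda)}(-a,z)}{W_q(z)}\bigr)+\cdots\Bigr],
\]
which is generically nonzero. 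Moreover, the ``direct term'' $\lim_{c\downarrow 0}K^{(q,q+\lambda)}(a,0,x,z)$ by itself contains spurious factors $Z_{q+\lambda}(-a)/W_{q+\lambda}(-a)$ and $1/W_{q+\lambda}(-a)$ that do not appear in \eqref{11.00}; these cancel only after one \emph{adds} the nonzero contribution of the second summand. The paper carries out exactly this addition and shows the telescoping.

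A secondary point: the ``direct substitution $c=0$'' argument you describe is only a heuristic for the $0/0$ form in the unbounded-variation case. The paper makes this rigorous by dividing numerator and denominator by $W_q(c)$ and proving bounds such as
\[
\Bigl|\tfrac{W_{q+\lambda}(c-a)-\omega^{(q,q+\lambda)}(-a,c)}{W_q(c)}\Bigr|\le \lambda\int_0^c W_{q+\lambda}(y-a)\,\mathrm{d}y\to 0,
\]
together with analogous estimates for the combinations $\ell^{(q,q+\lambda)}(-a,c)-\tfrac{Z_{q+\lambda}(c-a)}{W_{q+\lambda}(c-a)}\omega^{(q,q+\lambda)}(-a,c)$ and $Z_q(c)-\tfrac{\omega^{(q,q+\lambda)}(-a,c)}{W_{q+\lambda}(c-a)}$, all divided by $W_q(c)$. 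You correctly flag this as ``the hard part'', but you still need these estimates (or an equivalent argument) to justify the limit; the formal factorisation of $W_q(0+)$ is not a proof when $W_q(0+)=0$.
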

\begin{proof}
Note that $\mathbb{W}\equiv W$ when $X\equiv \widetilde{X}$. By Lemma 2.2 and (19) of Loeffen (2014) we have
\begin{eqnarray}\label{ome}
\Omega^{(q,q+\lambda)}_{\mathbb{W}}(a,0,x,z)\hspace{-0.3cm}&=&\hspace{-0.3cm}\mathbb{E}_{x}\left[\mathrm{e}^{-q\tau^{-}_{0}}W_{q+\lambda}(X_{\tau^{-}_{0}}-a)
\mathbf{1}_{\{\tau^{-}_{0}<\tau^{+}_{z}\}}\right]\nonumber\\
\hspace{-0.3cm}&=&\hspace{-0.3cm}
\omega^{(q,q+\lambda)}(-a,x)-\frac{W_{q}(x)}{W_{q}(z)}\omega^{(q,q+\lambda)}(-a,z).
\end{eqnarray}
By the definition of $\ell^{(q,q+\lambda)}$ and $\omega^{(q,q+\lambda)}$ it is seen that
\begin{eqnarray}\label{kx}
%\lim_{c\downarrow0}
K^{(q,q+\lambda)}(a,0,x,z)
\hspace{-0.3cm}&=&\hspace{-0.3cm}\frac{q}{q+\lambda}\left[\ell^{(q,q+\lambda)}(-a,x)-\frac{W_{q}(x)}{W_{q}(z)}
\ell^{(q,q+\lambda)}(-a,z)\right.\nonumber\\
\hspace{-0.3cm}&&\hspace{-0.3cm}\left.-\frac{Z_{q+\lambda}(c-a)}{W_{q+\lambda}(c-a)}\left(\omega^{(q,q+\lambda)}(-a,x)-\frac{W_{q}(x)}{W_{q}(z)}
\omega^{(q,q+\lambda)}(-a,z)\right)\right]\nonumber\\
\hspace{-0.3cm}&&\hspace{-0.3cm}
+\frac{\lambda}{q+\lambda}\left[Z_{q}(x)-\frac{Z_{q}(z)}{W_{q}(z)}W_{q}(x)-\frac{1}{W_{q+\lambda}(c-a)}\left(\omega^{(q,q+\lambda)}(-a,x)\right.\right.\nonumber\\
\hspace{-0.3cm}&&\hspace{-0.3cm}\left.\left.-\frac{W_{q}(x)}{W_{q}(z)}
\omega^{(q,q+\lambda)}(-a,z)\right)\right], \quad x\in(0,\infty).\nonumber
\end{eqnarray}
Hence
\begin{eqnarray}\label{lim.c0}
\hspace{-0.3cm}&&\hspace{-0.3cm}
\lim_{c\downarrow0}\frac{\Omega^{(q,q+\lambda)}_{W}(a,0,x,z)}{W_{q+\lambda}(c-a)
-\Omega^{(q,q+\lambda)}_{W}(a,0,c,z)}K^{(q,q+\lambda)}(a,0,c,z)\nonumber\\
\hspace{-0.3cm}&=&\hspace{-0.3cm}
\lim_{c\downarrow0}\frac{\omega^{(q,q+\lambda)}(-a,x)-\frac{W_{q}(x)}{W_{q}(z)}
\omega^{(q,q+\lambda)}(-a,z)}{\frac{W_{q+\lambda}(c-a)-\omega^{(q,q+\lambda)}(-a,c)}{W_{q}(c)}+\frac{\omega^{(q,q+\lambda)}(-a,z)}{W_{q}(z)}
}\left[\frac{q}{q+\lambda}\left(\frac{\ell^{(q,q+\lambda)}(-a,c)}{W_{q}(c)}-\frac{\ell^{(q,q+\lambda)}(-a,z)}{W_{q}(z)}\right.\right.\nonumber\\
\hspace{-0.3cm}&&\hspace{-0.3cm}
\left.\left.-\frac{Z_{q+\lambda}(c-a)}{W_{q+\lambda}(c-a)}\left(\frac{\omega^{(q,q+\lambda)}(-a,c)}{W_{q}(c)}-\frac{\omega^{(q,q+\lambda)}(-a,z)}{W_{q}(z)}
\right)\right)\right.\nonumber\\
\hspace{-0.3cm}&&\hspace{-0.3cm}
\left.+\frac{\lambda}{q+\lambda}\left(\frac{Z_{q}(c)}{W_{q}(c)}-\frac{Z_{q}(z)}{W_{q}(z)}-\frac{1}{W_{q+\lambda}(c-a)}
\left(\frac{\omega^{(q,q+\lambda)}(-a,c)}{W_{q}(c)}-\frac{\omega^{(q,q+\lambda)}(-a,z)}{W_{q}(z)}
\right)\right)\right],\nonumber
\end{eqnarray}
which together with the facts that
\begin{eqnarray}\label{lim.c1}
\hspace{-0.3cm}&&\hspace{-0.3cm}
\left|\frac{\ell^{(q,q+\lambda)}(-a,c)-\frac{Z_{q+\lambda}(c-a)}{W_{q+\lambda}(c-a)}\omega^{(q,q+\lambda)}(-a,c)}{W_{q}(c)}\right|
\nonumber\\
\hspace{-0.3cm}&\leq&\hspace{-0.3cm}
\frac{\lambda\int_{0}^{c}Z_{q+\lambda}(z-a)W_{q}(c-z)\mathrm{d}z}{W_{q}(c)}
+\frac{\lambda Z_{q+\lambda}(c-a)}{W_{q+\lambda}(c-a)}\frac{\int_{0}^{c}W_{q+\lambda}(z-a)W_{q}(c-z)\mathrm{d}z}{W_{q}(c)}
\nonumber\\
\hspace{-0.3cm}&\leq&\hspace{-0.3cm}
\lambda\int_{0}^{c}Z_{q+\lambda}(z-a)\mathrm{d}z
+\frac{\lambda Z_{q+\lambda}(c-a)}{W_{q+\lambda}(c-a)}\int_{0}^{c}W_{q+\lambda}(z-a)\mathrm{d}z
\rightarrow0,\quad c\rightarrow 0^{+},\nonumber
\end{eqnarray}
and
\begin{eqnarray}\label{lim.c2}
\hspace{-0.3cm}&&\hspace{-0.3cm}
\left|\frac{Z_{q}(c)-\frac{\omega^{(q,q+\lambda)}(-a,c)}{W_{q+\lambda}(c-a)}}{W_{q}(c)}\right|
\leq
\frac{q\int_{0}^{c}W_{q}(z)\mathrm{d}z}{W_{q}(c)}
+\frac{\lambda}{W_{q+\lambda}(c-a)}\int_{0}^{c}W_{q+\lambda}(z-a)\mathrm{d}z
\rightarrow0,\quad c\rightarrow 0^{+},\nonumber
\end{eqnarray}
and
\begin{eqnarray}\label{c.0.lim.}
\left|\frac{W_{q+\lambda}(c-a)-\omega^{(q,q+\lambda)}(-a,c)}{W_{q}(c)}\right|
\hspace{-0.3cm}&=&\hspace{-0.3cm}\frac{\lambda\int_{0}^{c}W_{q+\lambda}(z-a)W_{q}(c-z)\mathrm{d}z}{W_{q}(c)}
\nonumber\\
\hspace{-0.3cm}&\leq&\hspace{-0.3cm}
\lambda\int_{0}^{c}W_{q+\lambda}(z-a)\mathrm{d}z
\rightarrow 0,\quad  c\rightarrow 0^{+},
\end{eqnarray}
yields
\begin{eqnarray}\label{lim.c00}
\hspace{-0.3cm}&&\hspace{-0.3cm}
\lim_{c\downarrow0}\frac{\Omega^{(q,q+\lambda)}_{W}(a,0,x,z)}{W_{q+\lambda}(c-a)-\Omega^{(q,q+\lambda)}_{W}(a,0,c,z)}K^{(q,q+\lambda)}(a,0,c,z)\nonumber\\
\hspace{-0.3cm}&=&\hspace{-0.3cm}\left(\frac{W_{q}(z)\omega^{(q,q+\lambda)}(-a,x)}{\omega^{(q,q+\lambda)}(-a,z)}-W_{q}(x)\right)
\left[\frac{q}{q+\lambda}\left(-\frac{\ell^{(q,q+\lambda)}(-a,z)}{W_{q}(z)}+\frac{Z_{q+\lambda}(-a)}{W_{q+\lambda}(-a)}
\frac{\omega^{(q,q+\lambda)}(-a,z)}{W_{q}(z)}\right)\right.\nonumber\\
\hspace{-0.3cm}&&\hspace{-0.3cm}
\left.+\frac{\lambda}{q+\lambda}\left(-\frac{Z_{q}(z)}{W_{q}(z)}+\frac{1}{W_{q+\lambda}(-a)}\frac{\omega^{(q,q+\lambda)}(-a,z)}{W_{q}(z)}
\right)\right]\nonumber\\
\hspace{-0.3cm}&=&\hspace{-0.3cm}
\frac{q}{q+\lambda}\left(-\frac{\ell^{(q,q+\lambda)}(-a,z)}{\omega^{(q,q+\lambda)}(-a,z)}\omega^{(q,q+\lambda)}(-a,x)
+\frac{\ell^{(q,q+\lambda)}(-a,z)}{W_{q}(z)}W_{q}(x)\right.\nonumber\\
\hspace{-0.3cm}&&\hspace{-0.3cm}
\left.+\frac{Z_{q+\lambda}(-a)}{W_{q+\lambda}(-a)}
\omega^{(q,q+\lambda)}(-a,x)-\frac{Z_{q+\lambda}(-a)}{W_{q+\lambda}(-a)}\frac{W_{q}(x)}{W_{q}(z)}
\omega^{(q,q+\lambda)}(-a,z)\right)\nonumber\\
\hspace{-0.3cm}&&\hspace{-0.3cm}
+\frac{\lambda}{q+\lambda}\left(-\frac{Z_{q}(z)}{\omega^{(q,q+\lambda)}(-a,z)}\omega^{(q,q+\lambda)}(-a,x)
+\frac{Z_{q}(z)}{W_{q}(z)}W_{q}(x)\right.\nonumber\\
\hspace{-0.3cm}&&\hspace{-0.3cm}
\left.+\frac{\omega^{(q,q+\lambda)}(-a,x)}{W_{q+\lambda}(-a)}
-\frac{\omega^{(q,q+\lambda)}(-a,z)}{W_{q+\lambda}(-a)}\frac{W_{q}(x)}{W_{q}(z)}
\right),\nonumber
\end{eqnarray}
which implies \eqref{11.00}.
The fluctuation identity \eqref{1x.00} can be recovered as follows. By Remark \ref{two.sid.sol.} and \eqref{ome} one has
\begin{eqnarray}
\mathbb{E}_{x}\Big(\mathrm{e}^{-q\tau_{z}^{+}}\mathbf{1}_{\{\tau_{z}^{+}<T_{\lambda}(a)\}}\Big)
\hspace{-0.3cm}&=&\hspace{-0.3cm}
\frac{W_{q}(x)}{W_{q}(z)}+
\lim_{c\downarrow0}\frac{\omega^{(q,q+\lambda)}(-a,x)-\frac{W_{q}(x)}{W_{q}(z)}\omega^{(q,q+\lambda)}(-a,z)}{W_{q+\lambda}(c-a)
-\omega^{(q,q+\lambda)}(-a,c)+\frac{W_{q}(c)}{W_{q}(z)}\omega^{(q,q+\lambda)}(-a,z)}\frac{W_{q}(c)}{W_{q}(z)}\nonumber\\
\hspace{-0.3cm}&=&\hspace{-0.3cm}
\frac{W_{q}(x)}{W_{q}(z)}+\lim_{c\downarrow0}
\frac{\omega^{(q,q+\lambda)}(-a,x)-\frac{W_{q}(x)}{W_{q}(z)}\omega^{(q,q+\lambda)}(-a,z)}{\frac{W_{q+\lambda}(c-a)
-\omega^{(q,q+\lambda)}(-a,c)}{W_{q}(c)}W_{q}(z)+\omega^{(q,q+\lambda)}(-a,z)}\nonumber\\
\hspace{-0.3cm}&=&\hspace{-0.3cm}
\frac{\omega^{(q,q+\lambda)}(-a,x)}{\omega^{(q,q+\lambda)}(-a,z)},\nonumber
\end{eqnarray}
where in the last identity we have used \eqref{c.0.lim.}.
The proof is complete.
\end{proof}

The following Corollary \ref{cor.4.}, which characterizes the Gerber-Shiu distribution at Parisian ruin time with exponential delay given that the Parisian ruin occurs prior to the first up-crossing time of the level $z$ and the first down-crossing time of the level $a$, recovers Theorem 1.2 of Baurdoux et al. (2016).
\begin{cor}\label{cor.4.}
For $q\geq 0$, $\lambda>0$, $a<0$ and $x\in(0,\infty)$, we have
\begin{eqnarray}\label{l.d.}
\hspace{-0.3cm}&&\hspace{-0.3cm}
\mathbb{E}_{x}\left[\mathrm{e}^{-q\tau_{\lambda}}; X_{\tau_{\lambda}}\in\mathrm{d}u, \tau_{\lambda}<\tau^{-}_{a}\wedge\tau^{+}_{z}\right]\nonumber\\
\hspace{-0.3cm}&=&\hspace{-0.3cm}\lambda\left[\frac{\omega^{(q,q+\lambda)}(-a,x)}{\omega^{(q,q+\lambda)}(-a,z)}\omega^{(q,q+\lambda)}(-u,z)-\omega^{(q,q+\lambda)}
(-u,x)\right]\mathrm{d}u,\quad u\in(a,0],\, z\in(x,\infty).
\end{eqnarray}

\end{cor}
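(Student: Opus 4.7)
The plan is to derive Corollary \ref{cor.4.} as a limiting case of Theorem \ref{theorem2}. Specialize the three-barrier model to $X\equiv\widetilde{X}$ (so $\mathbb{W}_q\equiv W_q$, $\widetilde{\upsilon}\equiv\upsilon$, $\widetilde{\sigma}=\sigma$), set the rehabilitation barrier $b=0$, and let the safety barrier $c\downarrow 0$. In this limit $T=\zeta_a^-\wedge\zeta_{b,c,e_\lambda}$ degenerates to $T_\lambda(a)=\tau_a^-\wedge\tau_\lambda$, and on the event $\{U_T\in(a,0]\}\cap\{T<\zeta_z^+\}$ one has $T=\tau_\lambda<\tau_a^-\wedge\tau_z^+$ and $U_T=X_{\tau_\lambda}$. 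Accordingly, \eqref{l.d.} will emerge as the $\mathrm{d}u$-density of $\mathbb{E}_x[\mathrm{e}^{-qT}f(U_T)\mathbf{1}_{\{T<\zeta_z^+\}}]$ computed from \eqref{g.s.ff} with a test function $f$ supported in $(a,0]$.

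Concretely I would take $f=\mathbf{1}_{(a,u_0]}$ for fixed $u_0\in(a,0)$ and apply \eqref{g.s.ff}. Because $f(a)=0$, all $\tfrac{\widetilde{\sigma}^2}{2}f(a)(\cdots)$ terms drop out. In every term of the form $\int\mathrm{d}y\int_{y-a}^\infty f(y-\theta)\,\nu(\mathrm{d}\theta)$ with $\nu\in\{\upsilon,\widetilde{\upsilon}\}$, the constraint $\theta\geq y-a$ forces $y-\theta\leq a$ and hence $f(y-\theta)=0$; these L\'evy-measure terms vanish identically. What survives is exactly the pair of integrals $\lambda\int_a^{c\wedge z}f(y)(\cdots)\mathrm{d}y$, one in the main body of \eqref{g.s.ff} and one inside the square bracket carrying the prefactor $\Omega_{\mathbb{W}}^{(q,q+\lambda)}(a,b,x,z)/[\mathbb{W}_{q+\lambda}(c-a)-\Omega_{\mathbb{W}}^{(q,q+\lambda)}(a,b,c,z)]\,\mathbf{1}_{\{z>c\}}$. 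Since $u_0<0<c\wedge z$ for all sufficiently small $c>0$, both integrals reduce to $\lambda\int_a^{u_0}(\cdots)\mathrm{d}y$.

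Next I would substitute the identity $\Omega_W^{(q,q+\lambda)}(y,0,x,z)=\omega^{(q,q+\lambda)}(-y,x)-\frac{W_q(x)}{W_q(z)}\omega^{(q,q+\lambda)}(-y,z)$, established as \eqref{ome} in the proof of Corollary \ref{KPcor.}, into every occurrence of $\Omega$, and pass to $c\downarrow 0$. The main-body integral converges directly to $\lambda\int_a^{u_0}\big[\Omega_W^{(q,q+\lambda)}(a,0,x,z)\tfrac{W_{q+\lambda}(-y)}{W_{q+\lambda}(-a)}-\Omega_W^{(q,q+\lambda)}(y,0,x,z)\big]\mathrm{d}y$. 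For the correction piece, the vanishing estimate \eqref{c.0.lim.} gives $\mathbb{W}_{q+\lambda}(c-a)-\Omega_W^{(q,q+\lambda)}(a,0,c,z)\sim W_q(c)\,\omega^{(q,q+\lambda)}(-a,z)/W_q(z)$ as $c\downarrow 0$, while the bracket integrand is of order $W_q(c)/W_q(z)$ once the $W_q(c)/W_q(z)$ contribution inside each $\Omega(\cdot,0,c,z)$ is isolated, so the ratio has a finite nonzero limit.

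Assembling the two contributions and cancelling the $W_q(x)/W_q(z)$-terms and the $W_{q+\lambda}(-y)/W_{q+\lambda}(-a)$-terms against each other leaves the integrand $\lambda\big[\tfrac{\omega^{(q,q+\lambda)}(-a,x)}{\omega^{(q,q+\lambda)}(-a,z)}\omega^{(q,q+\lambda)}(-y,z)-\omega^{(q,q+\lambda)}(-y,x)\big]$, and differentiating in $u_0$ at $u_0=u$ recovers \eqref{l.d.}. The principal technical hurdle is precisely the $c\downarrow 0$ handling of the correction term, since neither its numerator $\Omega_W^{(q,q+\lambda)}(a,0,x,z)$ nor its denominator vanishes in isolation and a finite limit must be teased out from a balanced $W_q(c)$-scale cancellation. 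This calculation is directly analogous to the one carried out for \eqref{11.00} in Corollary \ref{KPcor.}; once it is performed, the remaining simplifications are routine algebra on scale functions.
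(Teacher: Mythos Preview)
Your proposal is correct and is essentially the paper's own argument: the paper differentiates \eqref{13} in $u$ (which amounts to applying \eqref{g.s.ff} with $f=\mathbf{1}_{(-\infty,u]}$ and discarding the $u$-independent pieces), then sets $b=0$, takes $c\downarrow 0$ using \eqref{ome} and \eqref{c.0.lim.}, and carries out the very cancellations you describe. Your choice $f=\mathbf{1}_{(a,u_0]}$ is a minor cosmetic variant that kills the $f(a)$ and L\'evy-measure terms from the outset, but the substance---the $\Omega\to\omega$ substitution via \eqref{ome}, the balanced $W_q(c)$-scale limit for the correction ratio, and the final algebraic cancellation of the $W_q(x)/W_q(z)$ and $W_{q+\lambda}(-y)/W_{q+\lambda}(-a)$ cross terms---is identical to the paper's proof.
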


\begin{proof}
Differentiating both sides of \eqref{13} with respect to $u$ and then letting $b=0$ and $c\downarrow0$ in the resulting equations leads to
\begin{eqnarray}\label{11.0}
\hspace{-0.3cm}&&\hspace{-0.3cm}
\mathbb{E}_{x}\left[\mathrm{e}^{-q\tau_{\lambda}}\mathbf{1}_{\{X_{\tau_{\lambda}}\in\mathrm{d}u\}}\mathbf{1}_{\{\tau_{\lambda}<\tau^{-}_{a}\wedge\tau^{+}_{z}\}}\right]\mathbf{1}_{\{a<u\leq 0\}}
\nonumber\\
\hspace{-0.3cm}&=&\hspace{-0.3cm}
\mathbb{E}_{x}\left[\mathrm{e}^{-q\,(\tau_{a}^{-}\wedge \tau_{\lambda})}\mathbf{1}_{\{X_{\tau_{a}^{-}\wedge \tau_{\lambda}}\in \mathrm{d}u\}}\mathbf{1}_{\{\tau_{\lambda}<\tau_{a}^{-}\}}\mathbf{1}_{\{\tau_{a}^{-}\wedge \tau_{\lambda}<\tau^{+}_{z}\}}\right]
\nonumber\\
\hspace{-0.3cm}&=&\hspace{-0.3cm}
\mathbb{E}_{x}\left[\mathrm{e}^{-q\,(\tau_{a}^{-}\wedge \tau_{\lambda})}\mathbf{1}_{\{X_{\tau_{a}^{-}\wedge \tau_{\lambda}}\in \mathrm{d}u\}}\mathbf{1}_{\{a<u\leq 0\}}\mathbf{1}_{\{\tau_{a}^{-}\wedge \tau_{\lambda}<\tau^{+}_{z}\}}\right]
\nonumber\\
\hspace{-0.3cm}&=&\hspace{-0.3cm}
\mathbb{E}_{x}\left[\mathrm{e}^{-qT_{\lambda}(a)}; U_{T_{\lambda}(a)}\in \mathrm{d}u, T_{\lambda}(a)<\tau^{+}_{z}\right]\mathbf{1}_{\{a<u\leq 0\}}
\nonumber\\
\hspace{-0.3cm}&=&\hspace{-0.3cm}\lim_{c\downarrow 0}\lambda\left(\Omega^{(q,q+\lambda)}_{W}(a,0,x,z)\frac{W_{q+\lambda}(c-u)}
{W_{q+\lambda}(c-a)}-\Omega^{(q,q+\lambda)}_{W}(u,0,x,z)\right)
\mathbf{1}_{\{a<u\leq 0\}}\nonumber\\
\hspace{-0.3cm}&&\hspace{-0.3cm}+\lim_{c\downarrow 0}\lambda\frac{\Omega^{(q,q+\lambda)}_{W}(a,0,x,z)}{W_{q+\lambda}(c-a)
-\Omega^{(q,q+\lambda)}_{W}(a,0,c,z)}\left(\Omega^{(q,q+\lambda)}_{W}(a,0,c,z)\right.\nonumber\\
\hspace{-0.3cm}&&\hspace{-0.3cm}
\left.\times\frac{W_{q+\lambda}(c-u)}
{W_{q+\lambda}(c-a)}-\Omega^{(q,q+\lambda)}_{W}(u,0,c,z)\right)\mathbf{1}_{\{a<u\leq 0\}}.
\end{eqnarray}
By \eqref{ome} and the fact that $\omega^{(q,q+\lambda)}(-a,0)=W_{q+\lambda}(-a)$, we rewrite the first term on the right hand side of \eqref{11.0} as
\begin{eqnarray}\label{d.1.}
\hspace{-0.3cm}&&\hspace{-0.3cm}
\lambda\left[\frac{\left[\omega^{(q,q+\lambda)}(-a,x)-\frac{W_{q}(x)}{W_{q}(z)}\omega^{(q,q+\lambda)}(-a,z)\right]
W_{q+\lambda}(-u)}
{W_{q+\lambda}(-a)}\right.\nonumber\\
\hspace{-0.3cm}&&\hspace{0.3cm}
-\omega^{(q,q+\lambda)}(-u,x)+\frac{W_{q}(x)}{W_{q}(z)}\omega^{(q,q+\lambda)}(-u,z)\Bigg]\mathbf{1}_{\{a<u\leq 0\}},
\end{eqnarray}
and the second term on the right hand side of \eqref{11.0} as
\begin{eqnarray}\label{d.2.}
\hspace{-0.3cm}&&\hspace{-0.3cm}
\lambda \mathbf{1}_{\{a<u\leq 0\}} \lim_{c\downarrow 0}\frac{\omega^{(q,q+\lambda)}(-a,x)-\frac{W_{q}(x)}{W_{q}(z)}\omega^{(q,q+\lambda)}(-a,z)}{\frac{W_{q+\lambda}(c-a)-\omega^{(q,q+\lambda)}(-a,c)}{W_{q}(c)}+
\frac{\omega^{(q,q+\lambda)}(-a,z)}{W_{q}(z)}}
\nonumber\\
\hspace{-0.3cm}&&\hspace{-0.3cm}
\times
\left(\frac{\frac{\omega^{(q,q+\lambda)}(-a,c)}{W_{q+\lambda}(c-a)}W_{q+\lambda}(c-u)-\omega^{(q,q+\lambda)}(-u,c)}{W_{q}(c)}\right.\nonumber\\
\hspace{-0.3cm}&&\hspace{-0.3cm}
-\frac{\omega^{(q,q+\lambda)}(-a,z)W_{q+\lambda}(c-u)}{W_{q}(z)W_{q+\lambda}(c-a)}
+\frac{\omega^{(q,q+\lambda)}(-u,z)}{W_{q}(z)}\Bigg)\mathbf{1}_{\{a<u\leq 0\}}\nonumber\\
\hspace{-0.3cm}&=&\hspace{-0.3cm}
\lambda\Bigg[\frac{W_{q+\lambda}(-u)W_{q}(x)}{W_{q+\lambda}(-a)W_{q}(z)}\omega^{(q,q+\lambda)}(-a,z)
-\frac{W_{q+\lambda}(-u)}{W_{q+\lambda}(-a)}\omega^{(q,q+\lambda)}(-a,x)\nonumber\\
\hspace{-0.3cm}&&\hspace{-0.3cm}
\left.
+\left(\frac{\omega^{(q,q+\lambda)}(-a,x)}{\omega^{(q,q+\lambda)}(-a,z)}-\frac{W_{q}(x)}{W_{q}(z)}\right)\omega^{(q,q+\lambda)}(-u,z)\right],
\end{eqnarray}
where we have used \eqref{c.0.lim.} and the fact that
\begin{eqnarray}
\hspace{-0.3cm}&&\hspace{-0.3cm}
\left|\frac{\frac{\omega^{(q,q+\lambda)}(-a,c)}{W_{q+\lambda}(c-a)}W_{q+\lambda}(c-u)-\omega^{(q,q+\lambda)}(-u,c)}{W_{q}(c)}\right|
\nonumber\\
\hspace{-0.3cm}&\leq&\hspace{-0.3cm}
\frac{W_{q+\lambda}(c-u)}{W_{q+\lambda}(c-a)}\frac{\lambda\int_{0}^{c}W_{q+\lambda}(z-a)W_{q}(c-z)\mathrm{d}z}{W_{q}(c)}+
\frac{\lambda\int_{0}^{c}W_{q+\lambda}(z-u)W_{q}(c-z)\mathrm{d}z}{W_{q}(c)}
\nonumber\\
\hspace{-0.3cm}&\leq&\hspace{-0.3cm}
\frac{W_{q+\lambda}(c-u)}{W_{q+\lambda}(c-a)}\frac{\lambda\int_{0}^{c}W_{q+\lambda}(z-a)W_{q}(c)\mathrm{d}z}{W_{q}(c)}+
\frac{\lambda\int_{0}^{c}W_{q+\lambda}(z-u)W_{q}(c)\mathrm{d}z}{W_{q}(c)}
\nonumber\\
\hspace{-0.3cm}&\rightarrow&\hspace{-0.3cm}0,\quad c\rightarrow 0^{+},
\nonumber
\end{eqnarray}
by the definition of $\omega^{(q,q+\lambda)}$ and the monotonicity of $W_{q}$.
Now, substituting \eqref{d.1.} and \eqref{d.2.} in \eqref{11.0} leads to \eqref{l.d.}. The proof is complete.
\end{proof}

\begin{rem}
In Baurdoux et al. (2016), a methodology heavily rely on the geometric distributional property of the number of certain $\varepsilon$-excursions of the spectrally negative L\'evy processes was adopted to prove \eqref{l.d.}. Here we provide an alternative method to obtain the same conclusion. By the same arguments as that right below Corollary 1.1 in Baurdoux et al. (2016), Proposition 2.1 of Landriault et al. (2014) can also be recovered from our result \eqref{l.d.}.
\end{rem}

\begin{rem}
When $a\downarrow-\infty$ in Corollary \ref{cor.4.}, then for $u\in(-\infty,0)$ and $z\in(x,\infty)$ we have
\begin{eqnarray}
\mathbb{E}_{x}\big[\mathrm{e}^{-q\tau_{\lambda}}; X_{\tau_{\lambda}}\in\mathrm{d}u, \tau_{\lambda}<\tau^{+}_{z}\big]
=
\lambda\Bigg[\frac{\mathcal{H}^{(q+\lambda,-\lambda)}(x)}{\mathcal{H}^{(q+\lambda,-\lambda)}(z)}\omega^{(q,q+\lambda)}(-u,z)-\omega^{(q,q+\lambda)}(-u,x)\Bigg]\mathrm{d}u,\nonumber
\end{eqnarray}
which recovers (1.8) of Baurdoux et al. (2016), and $\mathcal{H}^{(q+\lambda,-\lambda)}(x):=\mathrm{e}^{\Phi_{q+\lambda}x}[1-\lambda\int_{0}^{x}\mathrm{e}^{-\Phi_{q+\lambda}y}
W_{q}(y)\mathrm{d}y]$.
\end{rem}

%\begin{eqnarray}
%\omega^{(q,q+\lambda)}(-u,x):=W_{q+\lambda}(x-u)-(q+\lambda-q)\int_{0}^{x}W_{q+\lambda}(z-u)W_{q}(x-z)\mathrm{d}z. \nonumber
%\%end{eqnarray}

%\begin{eqnarray}
%\int_{-\infty}^{0}\omega^{(q,q+\lambda)}(-u,x)\mathrm{d}u
%\hspace{-0.3cm}&=&\hspace{-0.3cm}
%\int_{-\infty}^{0}W_{q+\lambda}(x-u)\mathrm{d}u-\lambda\int_{0}^{x}\int_{-\i%nfty}^{0}W_{q+\lambda}(z-u)\mathrm{d}uW_{q}(x-z)\mathrm{d}z
%\nonumber\\
%\hspace{-0.3cm}&=&\hspace{-0.3cm}.
%\nonumber
%\end{eqnarray}

The following Corollary \ref{cor11} gives the the Parisian ruin
with exponential implementation delays and a lower barrier $a$, which recovers (3.30) in Frostig and Keren-Pinhasik (2019).
\begin{cor} \label{cor11}
For $a<0$ and $x>0$, then
\begin{eqnarray}\label{p.l.1}
\mathbb{P}_{x}(T_{\lambda}(a)<\infty)=1-\psi'(0+)\frac{\omega^{(0,\lambda)}(-a,x)}{
Z_{\lambda}(-a)}.
\end{eqnarray}
\end{cor}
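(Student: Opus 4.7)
The plan is to derive \eqref{p.l.1} by specialising Corollary \ref{KPcor.} and then passing to the limit $z\to\infty$. Specifically, I would start from the two-sided identity \eqref{11.00}, set $q=0$, and use $Z_{0}\equiv 1$ so that the first bracket vanishes and the second bracket collapses to
\begin{eqnarray*}
\mathbb{P}_{x}\bigl(T_{\lambda}(a)<\tau_{z}^{+}\bigr)
= 1 - \frac{\omega^{(0,\lambda)}(-a,x)}{\omega^{(0,\lambda)}(-a,z)}, \qquad x\in(0,z].
\end{eqnarray*}
Since the events $\{T_{\lambda}(a)<\tau_{z}^{+}\}$ increase to $\{T_{\lambda}(a)<\infty\}$ as $z\uparrow\infty$ (recall $\psi^{\prime}(0+)>0$ so $\tau_{z}^{+}\to\infty$ $\mathbb{P}_{x}$-a.s.\ on the survival set), monotone convergence reduces the problem to computing $\lim_{z\to\infty}\omega^{(0,\lambda)}(-a,z)$.

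For this limit I would use the second representation of $\omega^{(q,p)}$ in \eqref{equi.def.w}, which with $q=0$, $p=\lambda$, $w=-a$ and $x=z$ gives
\begin{eqnarray*}
\omega^{(0,\lambda)}(-a,z) = W(z-a) + \lambda\int_{0}^{-a} W(z-a-y)\,W_{\lambda}(y)\,\mathrm{d}y.
\end{eqnarray*}
The integration is over the bounded interval $[0,-a]$ and $W$ is monotone, so dominated convergence together with the classical fact $\lim_{x\to\infty}W(x)=1/\psi^{\prime}(0+)$ (valid under the positive safety loading) yields
\begin{eqnarray*}
\lim_{z\to\infty}\omega^{(0,\lambda)}(-a,z)
= \frac{1}{\psi^{\prime}(0+)}\left[1+\lambda\int_{0}^{-a}W_{\lambda}(y)\,\mathrm{d}y\right]
= \frac{Z_{\lambda}(-a)}{\psi^{\prime}(0+)},
\end{eqnarray*}
by the definition of $Z_{\lambda}$. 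Inserting this into the displayed probability produces \eqref{p.l.1} directly.

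The only subtlety is justifying that Corollary \ref{KPcor.} (stated for $q,\lambda\in(0,\infty)$) may be used at $q=0$. This is not a genuine obstacle: inspection of \eqref{11.00} shows no singular $1/q$ factor appears, and the derivation of the corollary via \eqref{mar.}, \eqref{mar.1}, \eqref{r.m.} and \eqref{two.side.d.} is valid for $q=0$ as well. Alternatively, one can keep $q>0$, take $z\to\infty$ first (invoking $\lim_{z\to\infty}W_{q}(x)/W_{q}(z)=0$ under $\Phi_{q}>0$), and then send $q\downarrow 0$ using monotone convergence on $\mathrm{e}^{-qT_{\lambda}(a)}\mathbf{1}_{\{T_{\lambda}(a)<\infty\}}$. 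Either route is routine; the substantive content is the elementary limit computation above.
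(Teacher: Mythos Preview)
Your argument is correct. The route, however, differs from the paper's. The paper does not start from Corollary~\ref{KPcor.}; instead it returns to the general liquidation probability \eqref{l.p.} in Corollary~\ref{cor1}, sets $b=0$, and carries out the limit $c\downarrow 0$ explicitly. The main work there is to identify $\Omega^{(0,\lambda)}_{W}(a,0,x)=\omega^{(0,\lambda)}(-a,x)-W(x)Z_{\lambda}(-a)$ via \eqref{equi.def.w}--\eqref{s.f.l.}, and then to control the quotient $\bigl(W_{\lambda}(c-a)-\omega^{(0,\lambda)}(-a,c)\bigr)/W(c)$ using the estimate \eqref{c.0.lim.}. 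Your approach is shorter because it leverages Corollary~\ref{KPcor.}, where the delicate $c\downarrow 0$ passage has already been performed; what remains is only the elementary $z\to\infty$ limit of $\omega^{(0,\lambda)}(-a,z)$, which you compute cleanly from the second representation in \eqref{equi.def.w} and $\lim_{x\to\infty}W(x)=1/\psi'(0+)$. The paper's route has the minor advantage of exhibiting \eqref{p.l.1} as a direct specialisation of the three-barrier formula \eqref{l.p.}, whereas yours treats it as a $z\to\infty$ limit of the two-sided Parisian identity; both are legitimate, and your handling of the $q=0$ technicality is adequate.
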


\begin{proof}
Note that $\mathbb{W}_{q}\equiv W_{q}$ for all $q\geq0$. Letting $b=0$ and $c\downarrow0$ in (\ref{l.p.}) gives
\begin{eqnarray}\label{l.p.1.}
\mathbb{P}_{x}(T_{\lambda}(a)<\infty)=1-\psi'(0+)\lim_{c\downarrow0}\left(W(x)+W(c)\frac{\Omega^{(0,\lambda)}_{W}(a,0,x)}{W_{\lambda}(c-a)-
\Omega^{(0,\lambda)}_{W}(a,0,c)}\right).
\end{eqnarray}
It can be verified from \eqref{equi.def.w}, \eqref{s.f.l.}, \eqref{lap.t.11.} and the definition of $\Omega^{(0,\lambda)}_{W}(a,0,x)$ in Corollary \ref{cor1} that
\begin{eqnarray}\label{l.p.o.}
\Omega^{(0,\lambda)}_{W}(a,0,x)\hspace{-0.3cm}&=&\hspace{-0.3cm}\omega^{(0,\lambda)}(-a,x)-W(x)\lim_{z\rightarrow\infty}\left(\frac{W(z-a)}{W(z)}+\lambda\int_{0}^{-a}
\frac{W(z-a-y)}{W(z)}W_{\lambda}(y)\mathrm{d}y\right)\nonumber\\
\hspace{-0.3cm}&=&\hspace{-0.3cm}
\omega^{(0,\lambda)}(-a,x)-W(x)\left(\mathrm{e}^{-\Phi(0)a}+\lambda\int_{0}^{-a}\mathrm{e}^{-\Phi(0)(y+a)}
W_{\lambda}(y)\mathrm{d}y\right)\nonumber\\
\hspace{-0.3cm}&=&\hspace{-0.3cm}
\omega^{(0,\lambda)}(-a,x)-W(x)\left(1+\lambda\int_{0}^{-a}
W_{\lambda}(y)\mathrm{d}y\right)\nonumber\\
\hspace{-0.3cm}&=&\hspace{-0.3cm}
\omega^{(0,\lambda)}(-a,x)-W(x)Z_{\lambda}(-a),\nonumber
\end{eqnarray}
which substituted into (\ref{l.p.1.}) yields
\begin{eqnarray}\label{l.p.11.}
\mathbb{P}_{x}(T_{\lambda}(a)<\infty)\hspace{-0.3cm}&=&\hspace{-0.3cm}1-\psi'(0+)\lim_{c\downarrow0}\left(W(x)+W(c)\frac{\omega^{(0,\lambda)}(-a,x)-W(x)Z_{\lambda}(-a)}{W_{\lambda}(c-a)-
\omega^{(0,\lambda)}(-a,c)+W(c)Z_{\lambda}(-a)}\right)
\nonumber\\
\hspace{-0.3cm}&=&\hspace{-0.3cm}1-\psi'(0+)\lim_{c\downarrow0}\left(W(x)+\frac{\omega^{(0,\lambda)}(-a,x)-W(x)Z_{\lambda}(-a)}{\frac{W_{\lambda}(c-a)-
\omega^{(0,\lambda)}(-a,c)}{W(c)}
+Z_{\lambda}(-a)}\right)\nonumber\\
\hspace{-0.3cm}&=&\hspace{-0.3cm}
1-\psi'(0+)\frac{\omega^{(0,\lambda)}(-a,x)}{
Z_{\lambda}(-a)},
\end{eqnarray}
where in the third identity we have used \eqref{c.0.lim.}.
The proof is complete.
\end{proof}

\begin{rem} \label{rem.7}
Due to \eqref{s.f.l.} one can deduce that
\begin{eqnarray}
\lim\limits_{a\downarrow-\infty}\frac{\omega^{(0,\lambda)}(-a,x)}{Z_{\lambda}(-a)}\hspace{-0.3cm}&=&\hspace{-0.3cm}
\lim\limits_{a\downarrow-\infty}\frac{W_{\lambda}(x-a)}{Z_{\lambda}(-a)}-\lambda\int_{0}^{x}
\bigg[\lim\limits_{a\downarrow-\infty}\frac{W_{\lambda}(z-a)}{Z_{\lambda}(-a)}\bigg]W_{}(x-z)\mathrm{d}z
\nonumber\\
\hspace{-0.3cm}&=&\hspace{-0.3cm}
\frac{\Phi_{\lambda}}{\lambda}\bigg[\mathrm{e}^{\Phi_{\lambda}x}-\lambda\int_{0}^{x}
\mathrm{e}^{\Phi_{\lambda}z}\,W_{}(x-z)\mathrm{d}z\bigg],\nonumber
\end{eqnarray}
which together with Corollary \ref{cor11} gives the following expression of the Parisian ruin probability
\begin{eqnarray}\label{p.p.}
\mathbb{P}_{x}(\tau_{\lambda}<\infty)=\lim\limits_{a\downarrow-\infty}\mathbb{P}_{x}(T_{\lambda}(a)<\infty)=1-\frac{\psi^{\prime}(0+)\Phi_{\lambda}}{\lambda}\bigg[\mathrm{e}^{\Phi_{\lambda}x}-\lambda\int_{0}^{x}
\mathrm{e}^{\Phi_{\lambda}z}\,W_{}(x-z)\mathrm{d}z\bigg],
\end{eqnarray}
which recovers (1.13) of Baurdoux et al. (2016).
\end{rem}

\section{Illustrative Examples}
\setcounter{section}{5} \setcounter{equation}{0}

This section aims to illustrate the the results derived in the previous sections. One quantity of
interest is the probability of liquidation, which includes the probability of Parison ruin as a special case.
The other item is the discounted joint probability distribution of the liquidation time with an exponential rehabilitation delay, the surplus at and the running supreme of the surplus until the liquidation time with exponential rehabilitation delay.

\subsection{A comparison of probabilities of Parisian ruin and Liquidation ruin}

Parisian ruin occurs once the surplus stays continuously below zero for a given period. This is to amend the classical risk theory in the sense that usually companies are not bankrupt when down-crossing a predetermined
low level (e.g. zero). That is, Parisian ruin describe a scenario while the surplus down-crosses zero, but bankruptcy may occur later. In the last decade there is a growing literature on this topic, see Yang et al. (2020), etc.

The liquidation ruin introduced in this paper also carries a similar mission to imitate the real-world process of bankruptcy. This subsection intends to make a visualized comparison between Parisian ruin and Liquidation ruin. To make the comparison fair and meaningful, we need to calculate the two scenarios of ruin under the same parameter setting. As mentioned in Section 4, Parisian only makes sense when $X=\widetilde{X}$, which implies the same underlying surplus process for the states of {\it solvent} and {\it insolvent} in the sense of Liquidation ruin. Hence we do not include the case of dividend distribution in the original surplus process. In this subsection, the calculation of Liquidation ruin is based on the results in Corollary 3, while Parisian ruin is based on Remark \ref{rem.7}.

In the setting of the three-barrier system, when $a>= 0$, the down-crossing of level $a$ must occur before the down-crossing of level 0, hence the liquidation ruin time must occur before the Parisian ruin. When $a<0$ and $b>= 0$, the prolonged sinking duration below level 0 implies a prolonged sinking duration below level $c$, then the liquidation ruin still occurs earlier. To avoid trivial cases, we need to let $a<0$ and $b<0$.

In the setting of the underlying surplus process, we use the same jump-diffusion process $X(t)$ in both the {\it solvent} and {\it insolvent} states,
\[{X(t)} = x + c_0 t + \sigma_0 {B_t} - \sum\limits_{i = 1}^{{N(t)}} {{Y_i}} {\rm{ }},\quad t\geq 0,\]
where $\sigma_0 {\rm{ > 0}}$, $ \left\{ {{N(t)},t \ge 0} \right\}$  is a Poisson process with arrival rate $\lambda_0$, and $Y_{i}$'s  are a sequence of i.i.d. random variables distributed with Erlang $ \left( {2,\alpha } \right).$ The Laplace exponent of $X(t)$ is
$$\psi(\theta)=c_0\theta-\lambda_0+\frac{\lambda_0\alpha^2}{(\alpha+\theta)^2}+\frac{1}{2}\sigma_0^2\theta^2,$$ and hence $\psi'(0+)=c_0-\frac{2\lambda_0}{\alpha}$.
The scale function associated with $X$ can be derived as (cf., Loeffen (2008))
\begin{align}\label{expressionsofWqin2}
{W_{q}}\left( x \right) = \sum\limits_{j = 1}^4 {{D_j}\left( q \right){\mathrm{e}^{{\theta _j}\left( q \right)x}}} ,{\kern 1pt} {\kern 1pt} {\kern 1pt} {\kern 1pt} {\kern 1pt} {\kern 1pt} {\kern 1pt} {\kern 1pt} {\kern 1pt} {\kern 1pt} {\kern 1pt} x \ge 0,
 \end{align}
where
\begin{align}\label{expressionsofdj}
{D_j}\left( q \right) = \frac{{{{\left( {\alpha  + {\theta _j}\left( q \right)} \right)}^2}}}{{\frac{\sigma_0 ^2}{2}\prod\limits_{i = 1,i \ne j}^4 {\left( {{\theta _j}\left( q \right) - {\theta _i}\left( q \right)} \right)} }},
\nonumber
 \end{align}
and ${\theta _j}\left( q \right)\left( {j = 1, \cdots 4} \right)$ are the (distinct) zeros of the polynomial
\begin{align}
\nonumber &\quad\Big(c_0\theta-\lambda_0+\frac{\lambda_0\alpha^2}{(\alpha+\theta)^2}+\frac{1}{2}\sigma_0^2\theta^2-q\Big)(\alpha+\theta)^2\\
\nonumber &=\frac{1}{2}\sigma_0^2\theta^4+(\alpha\sigma_0^2+c_0)\theta^3+\big(\frac{1}{2}\sigma_0^2\alpha^2-\lambda_0-q+2c_0\alpha\big)\theta^2+\big[c_0\alpha^2-2(\lambda_0+q)\alpha\big]\theta-q\alpha^2.
\end{align}

Then the Liquidation ruin probability discussed in this subsection is given in equation \eqref{l.p.} and the Parisian ruin probability is given in \eqref{p.p.} with $\upsilon(\mathrm{d}z)=\lambda_0 \alpha^{2} z \mathrm{e}^{-\alpha z} \mathrm{d}z$. The parameter values in this example are: $c_0=5$, $\sigma_0=0.5$,  $\lambda_0=\alpha=2$, $\lambda=0.1$ and $x=1$. Under this setting, the Parisian ruin is calculated as 0.0138, which is depicted as the blue surface in Figure 2. On the side of liquidation ruin probability, the value depends on the setting of the three-barrier system. We would like to find the impact of the structure of three barriers on the value of liquidation ruin probability.

From mathematical point of view, Parisian ruin is the limit of liquidation ruin when $a=-\infty$ and $b=c$. By varying the levels of three barriers $a<b<c$, the results of liquidation ruin probability is given as the red surface in Figure 2. As the liquidation barrier $a$ is lower, the chance of direct liquidation by hitting $a$ is smaller, then the liquidation ruin probability is lower. From (III) to (I) in Figure 2, the value of $a$ decreases from -3 to -100, we can see the red surface is getting lower and closer to the blue surface.

Another obvious trend in Figure 2 is the impacts of $b$ and $c$. The higher level of $b$, the more likely to be in the state of {\it insolvent}, the higher chance to be liquidated. On the other hand, the higher level of $c$, the less likely to climb back to the safety level, the higher probability of liquidation ruin. We can observe this trend easily in the red surface by varying the values of $b$ and $c$.

By implementing a stricter condition of returning to the status of financially healthy (higher than $c$ rather than $b$), the liquidation ruin is usually believed to be more likely to occur, comparing to Parisian ruin. This refers to the majority cases in Figure 2 with the red surface is higher. Only when the levels of $a$, $b$ and $c$ are low enough, we have a lower liquidation ruin probability.

We are also interested in the case when the probabilities of liquidation ruin and Parisian ruin are equal. From (I) and (II) in Figure 2, the intersection of the red and blue surfaces defines a curve in the coordinate system of $b$ and $c$. This curve is closer to the $b$ axis which indicates the liquidation ruin probability is more sensitive to the change of $c$ than the change of $b$. Besides, as the $a$ level becomes lower from $a=-5$ in (II) to $a=-100$ in (I), the level of the red surface keeps lowering leading to the intersection curve moves to the right hand side.

%the intersection curve is getting closer to the line of $b=c$, which is consistent to the limit interpretation of liquidation ruin to Parisian ruin from mathematical point of view.

\begin{figure}[!htp]
\centering{}\includegraphics[width=6.5in,height=4in]{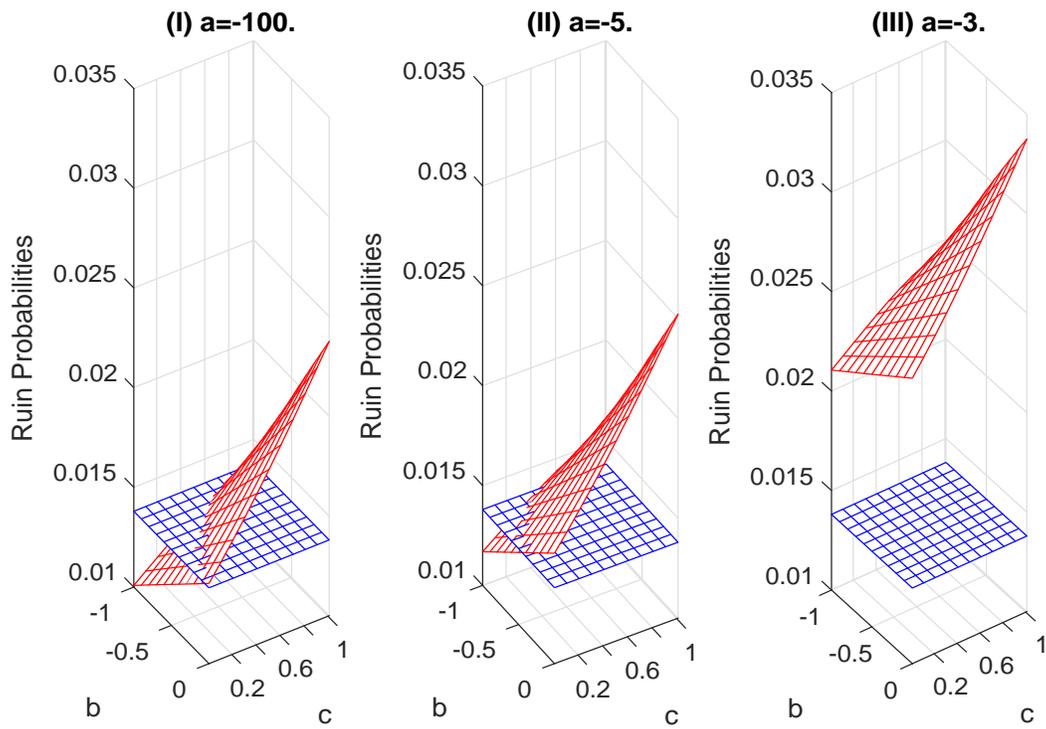}
\caption{A comparison of Parisian ruin probability and Liquidation ruin probability with different barriers.}
\end{figure}

\subsection{The effects of parameters on the Liquidation ruin probability}

The previous example focuses mainly on the impacts of the three-barrier system on the ruin probability. In this subsection, we proceed with the effects of other model parameters on the liquidation ruin probability. We fix the three-barrier system as $a=0$, $b=1$, $c=2$, and $\lambda=0.2$ for the grace period. We no longer consider Parisian ruin in this example and hence we are able to use two different surplus processes in the state of {\it solvent} and {\it insolvent}. Let $X$ be a jump-diffusion processes, i.e.,
\[X_{t} = x + p t + \sigma {B_t} - \sum\limits_{i = 1}^{{N_{1}(t)}} {{Y_i}} - D_{t},\quad t\geq 0,\]
where $B_t$ is a Brownian motion, $p,\sigma>0$, $ \left\{ {{N_{1}(t)};t \ge 0} \right\}$ is a Poisson processes with arrival rate $\lambda_{1}$, $\{Y_{i}; i\geq 1\}$ is a sequence of i.i.d. random variables with Erlang$(2,\beta)$ distribution law $F(\mathrm{d}x)=\beta^{2}x\mathrm{e}^{-\beta x}\mathrm{d}x$ for $x>0$ and $\beta>0$, and
$$D_{t}=\alpha t+\sum\limits_{i = 1}^{{N_{2}(t)}} {Z_i} ,\quad t\geq 0,$$
represents the process of accumulated dividends over the time period $[0,t]$, where $0<\alpha<p$, $ \left\{ {{N_{2}(t)};t \ge 0} \right\}$ is a Poisson processes with arrival rate $\lambda_{2}$, $\{Z_{i}; i\geq 1\}$ is a sequence of i.i.d. random variables with exponential distribution law $G(\mathrm{d}x)=\gamma\mathrm{e}^{-\gamma x}\mathrm{d}x$ for $x>0$ and $\gamma>0$. Here $D_t$ represents the dividend payment whenever the company is in a solvent state. The first term $\alpha t$ refers to the part distributed by the threshold strategy, where we take $c$ as the threshold, then when the company is in the state of solvent, dividends are paid $\alpha$ per unit time. The second term $\sum\limits_{i = 1}^{{N_{2}(t)}} {Z_i}$ refers to the dividend strategy with random observation times described by an exponential distribution. The expression $D_t$ can be used to describe a mixture of the threshold dividend strategy and the impulse dividend strategy by adjusting the parameter $\alpha$ and $\lambda_2$. For literature on risk models embedded with other well known dividend strategies, we are referred to Lin et al. (2003), Loeffen (2008), Avanzi et al. (2013), Yin and Wen (2013), Cheung and Wong (2017), Xu and Woo (2020)  and the references therein.

When the company is in the state of {\it insolvent}, the distribution of dividend is suspended temperately and will not retrieve until the company returns to the state of {\it solvent}. Besides, the premium income, the revenue volatility and the claims frequency will also be affected. Therefore, let
\[{\widetilde{X}(t)} = x + \tilde{p}t + \tilde{\sigma} \tilde{B}_{t} - \sum\limits_{i = 1}^{{\tilde{N}_{1}(t)}} {{\widetilde{Y}_i}},\quad t\geq 0,\]
where $\tilde{p},\tilde{\sigma}>0$, $\tilde{B}_t$ is a Brownian motion, $\{ {{\tilde{N}_{1}(t)};t \ge 0}\}$ is a Poisson processes with arrival rate $\tilde{\lambda}_{1}$, $\{\tilde{Y}_{i}; i\geq 1\}$ is a sequence of i.i.d. random variables with Erlang$(2,\tilde{\beta})$ distribution law $\tilde{F}(\mathrm{d}x)=\tilde{\beta}^{2}x\mathrm{e}^{-\tilde{\beta} x}\mathrm{d}x$ for $x>0$ and $\tilde{\beta}>0$.
The L\'{e}vy measures for $X$ and $\widetilde{X}$ are given by $\upsilon(\mathrm{d}z)=\lambda_1 F(\mathrm{d}z)+\lambda_2 G(\mathrm{d}z)$ and $\widetilde{\upsilon}(\mathrm{d}z)=\tilde{\lambda}_1 \tilde{F}(\mathrm{d}z)$.
The scale functions associated with $X$ and $\widetilde{X}$ can be derived as
\begin{align}\label{expressionW1}
{W_{q}}\left( x \right) = \sum\limits_{j = 1}^5 {{C_j}\left( q \right){\mathrm{e}^{{\theta _j}\left( q \right)x}}} {\kern 1pt} {\kern 1pt} {\kern 1pt} {\kern 1pt} {\kern 1pt} \mathrm{and} {\kern 1pt} {\kern 1pt}{\kern 1pt} {\kern 1pt} {\kern 1pt} {\kern 1pt} {\kern 1pt} {\mathbb{W}_{q}}\left( x \right) = \sum\limits_{j = 1}^4 {{D_j}\left( q \right){\mathrm{e}^{{\eta _j}\left( q \right)x}}} ,{\kern 1pt} {\kern 1pt} {\kern 1pt} {\kern 1pt} {\kern 1pt} {\kern 1pt} {\kern 1pt} {\kern 1pt} {\kern 1pt} x \ge 0,
 \end{align}
and
\begin{align}\label{expressionZ1}
Z_{q}(x)=1+q\sum_{j=1}^{5}\frac{C_{j}(q)}{\theta_{j}(q)}\left(\mathrm{e}^{{\theta _j}\left( q \right)x}-1\right) {\kern 1pt} {\kern 1pt} {\kern 1pt} {\kern 1pt} {\kern 1pt} \mathrm{and} {\kern 1pt} {\kern 1pt}{\kern 1pt} {\kern 1pt} {\kern 1pt}  {\kern 1pt} {\kern 1pt} \mathbb{Z}_{q}(x)=1+q\sum_{j=1}^{4}\frac{D_{j}(q)}{\eta_{j}(q)}\left(\mathrm{e}^{{\eta _j}\left( q \right)x}-1\right),
 \end{align}
where
\begin{align}\label{expressionsofdj}
{C_j}\left( q \right) = \frac{{{{\left( {\beta  + {\theta _j}\left( q \right)} \right)}^2}}\left( {\gamma  + {\theta _j}\left( q \right)} \right)}{{\frac{\sigma ^2}{2}\prod\limits_{i = 1,i \ne j}^5 {\left( {{\theta _j}\left( q \right) - {\theta _i}\left( q \right)} \right)} }}  {\kern 1pt} {\kern 1pt} {\kern 1pt} {\kern 1pt} {\kern 1pt}
\mathrm{and}
{\kern 1pt} {\kern 1pt}{\kern 1pt} {\kern 1pt} {\kern 1pt} {D_j}\left( q \right) = \frac{{{{\left( {\tilde{\beta}  + {\eta _j}\left( q \right)} \right)}^2}}}{{\frac{\tilde{\sigma} ^2}{2}\prod\limits_{i = 1,i \ne j}^4 {\left( {{\eta _j}\left( q \right) - {\eta _i}\left( q \right)} \right)} }},
\nonumber
 \end{align}
and ${\theta _j}\left( q \right) \left(j\leq 5\right) $ and $ {\eta _j}\left( q \right)\left( {j \leq 4} \right)$ are, respectively, the (distinct) zeros of the polynomials
\begin{align}
\nonumber &(\psi(\theta)-q)(\beta+\theta)^2(\gamma+\theta)=\Big(\frac{1}{2}\sigma^2\theta^2+(p-\alpha)\theta-\lambda_1+\frac{\lambda_1\beta^2}{(\beta+\theta)^2}-\frac{\lambda_{2}\theta}{\gamma+\theta}-q\Big)(\beta+\theta)^2(\gamma+\theta),
\end{align}
and
\begin{align}
\nonumber &(\widetilde{\psi}(\theta)-q)(\tilde{\beta}+\theta)^2=\Big(\frac{1}{2}\tilde{\sigma}^2\theta^2+\tilde{p}\theta-\tilde{\lambda}_1+\frac{\tilde{\lambda}_1\tilde{\beta}^2}{(\tilde{\beta}+\theta)^2}-q\Big)(\tilde{\beta}+\theta)^2.
\end{align}
Besides,
\begin{align}
\nonumber &\psi^{\prime}(0)=\left.\psi^{\prime}(\theta)\right|_{\theta=0}
=
p-\alpha-\frac{2\lambda_1}{\beta}-\frac{\lambda_{2}}{\gamma}, \text{ and } \widetilde{\psi}^{\prime}(0)=\left.\widetilde{\psi}^{\prime}(\theta)\right|_{\theta=0}
=
\tilde{p}-\frac{2\tilde{\lambda}_1}{\tilde{\beta}}.
\end{align}

The base parameters in this subsection is given as: $x=5$, $p=10$, $\sigma=2$, $\lambda_1=\tilde{\lambda_1}=3$, $\beta=\tilde{\beta}=2$, $\alpha=2$, $\lambda_2=2$, $\gamma=1$, $\tilde{p}=6$, and $\tilde{\sigma}=1.5$. Then the liquidation ruin probability can be calculated by equation \eqref{l.p.} in Corollary 3. Figure 3 gives the trends of liquidation ruin probability when we vary the initial wealth $x$, the premium rate $p$ in the {\it solvent} state, the parameter $\lambda$ in the exponential distributed grace period, and the dividend parameters $\alpha$, $\lambda_2$ and $\gamma$.

In Figure 3, the impacts of $x$ and $p$ are obvious in (I) and (II). The higher value of $x$ or $p$, the level of surplus of the insurer is higher, which leads to a smaller chance of liquidation ruin. The parameter $\lambda$, however, does not directly affect the level of wealth, but controls the length of the grace period. The higher $\lambda$, the shorter of the grace period, the more pressure for the surplus process to climb up to the safety barrier, then the more likely the insurer be concluded to liquidation for staying overtime at the state of {\it insolvent}. This explains the increasing trend in (III). Regarding to the dividend parameters, higher $\alpha$ gives more dividend distributed to the shareholders, which brings down the surplus level of the insurer and hence an increasing ruin probability in (IV). The parameter $\lambda_2$ affects the frequency to pay dividend: the higher $\lambda_2$ the more frequent the insurer to pay out dividend. Then the surplus level jumps down more frequently which makes it harder to climb up to the safety barrier, and hence a higher ruin probability as seen in (V). The parameter $\gamma$ affects the dividend amount which is smaller as $\gamma$ increases. With more profit retained in the surplus process, the ruin probability of the insurer reduces as shown in (VI). These numerical illustrations indicate that our theoretical results  fits the intuitive explanations well.

\begin{figure}[!htp]

\centering{}\includegraphics[width=6.5in,height=5in]{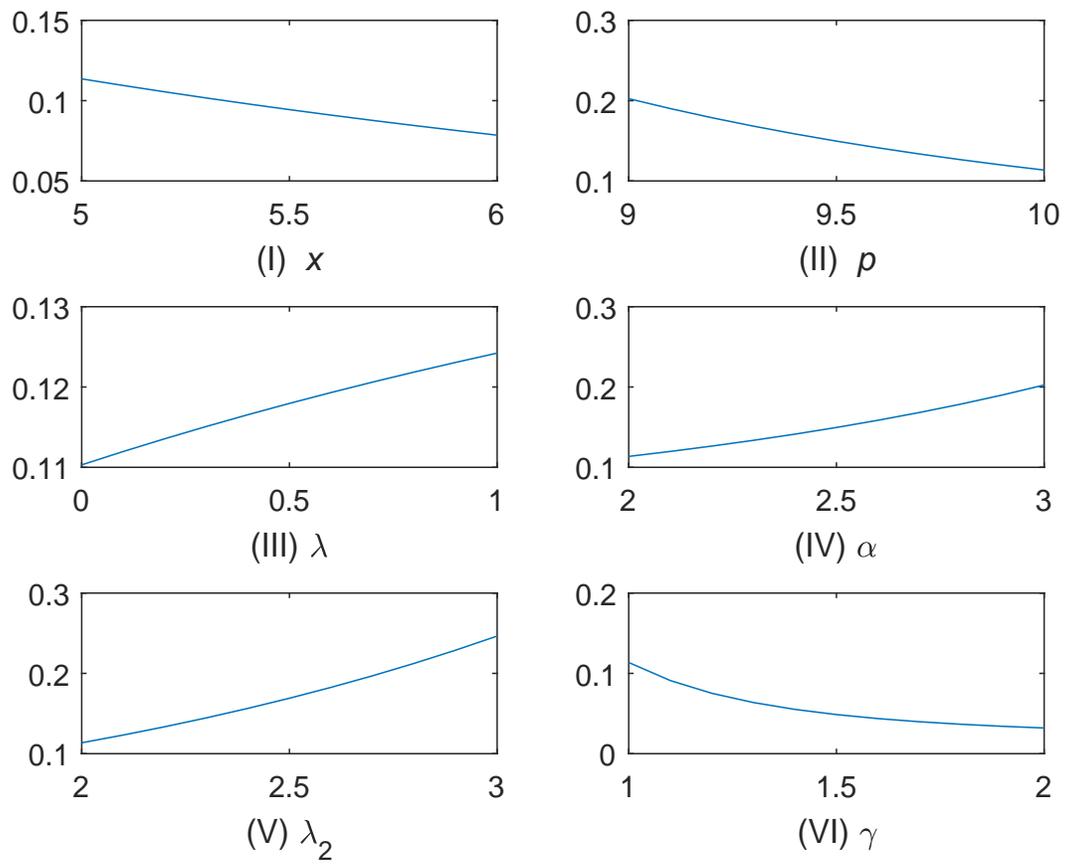}
\caption{The effects of model parameters on the liquidation ruin.}
\end{figure}

\subsection{The joint distribution of the liquidation time, the surplus at liquidation and the historical high of the surplus until liquidation}

For computational simplicity, we use the same model setting as in the previous subsection except $\lambda=0.1$ in this example. The value of the initial wealth $x$ and the discount factor $q$ in (\ref{12}) and (\ref{13}) will also be varied. The three-barrier system remains unchanged as $a=0$, $b=1$ and $c=2$. Instead of ruin probability, in this subsection we study the joint distribution of the liquidation time, the surplus at liquidation and the historical high of the surplus until liquidation. According to Corollary 1, the joint distribution function can be calculated based on (\ref{12}) and (\ref{13}) piece-wisely.

Figure 4 plots the surfaces of the joint distribution in four scenarios of $q$ and $x$. For the convenience of comparison, we adopt the same axis of ordinates [0,0.31]. This range is chosen to reflect the value range in (\ref{12}) and (\ref{13}), where $u\in (-\infty,a)\cup [a,c]$ and $z\in[x, \infty)$. Note that the joint distribution for $z<x$ is always 0 indicating that the historical high of the surplus should be greater than or equal to the initial wealth $x$, and hence there is no need to illustrate the case of $z<x$. Moreover, the lower bound $u=-1$ and the upper bound $z=12$ in our variable ranges are obtained by trail and error until a steady behavior of the surface is clearly shown. The values of the joint distribution based on (\ref{12}) for $u\in[-1,0)$ are plotted in red, while (\ref{13}) for $u\in[0,2]$ leads to the blue surfaces in Figure 4.

Figure 4(I) plots the surface of the distribution for $q=0$ and $x=2$. One notable feature of this surface is the discontinuity at $u=a=0$. In fact, this behavior is the direct result of Corollary \ref{cor.2.} with $q=0$, from which one can see that
\begin{eqnarray}\label{obs1}
\hspace{-0.3cm}&&\hspace{-0.3cm}
\mathbb{P}_{x}\left(U_{T}=a,\overline{U}_{T}< z\right)
=
\frac{\widetilde{\sigma}^2}{2}
\left[\Omega^{(q,q+\lambda)}_{\mathbb{W}'}(a,b,x,z)-
\Omega^{(q,q+\lambda)}_{\mathbb{W}}(a,b,x,z)\frac{\mathbb{W}_{q+\lambda}^{\prime}(c-a)}{\mathbb{W}_{q+\lambda}(c-a)}\right.
\nonumber\\
\hspace{-0.3cm}&&\hspace{-0.3cm}
\quad
\left.+\frac{\Omega^{(q,q+\lambda)}_{\mathbb{W}}(a,b,x,z)}{\mathbb{W}_{q+\lambda}(c-a)
-\Omega^{(q,q+\lambda)}_{\mathbb{W}}(a,b,c,z)}
\left[\Omega^{(q,q+\lambda)}_{\mathbb{W}'}(a,b,c,z)-
\Omega^{(q,q+\lambda)}_{\mathbb{W}}(a,b,c,z)\frac{\mathbb{W}_{q+\lambda}^{\prime}(c-a)}{\mathbb{W}_{q+\lambda}(c-a)}\right]\right],
\end{eqnarray}
which accounts for the jump from the red surface to the blue surface. Theoretically speaking, the jump amount given by \eqref{obs1} in the bi-variate distribution function of $U_{T}$ and $\overline{U}_{T}$ can be explained by the creeping behavior of the Brownian motion part $\widetilde{B}$ in $\widetilde{X}$. That is, the liquidation barrier $a$ can be hit with a positive probability when $\widetilde{X}$ down-crosses $a$ which implies a direct liquidation. Then at the time of liquidation $T$, there is a positive probability that, the insurer's surplus hits $a$, i.e., $U_{T}=a$. On the other hand, if $\widetilde{\sigma}=0$ which brings no Brownian motion part in the process of $\widetilde{X}$, then the jump between the two surfaces no longer exists.

Another interesting feature of Figure 4(I) is that when $z=2$, both the values of the red and blue surfaces are 0, which means that the bi-variate distribution function of $U_{T}$ and $\overline{U}_{T}$ at $z=2$ vanishes for all $u$. Note that we have set the initial wealth $x=2=c>b>a$ in this scenario, which implies the liquidation time $T$ must occur later than time 0. Starting from $x=2$, we have the surplus process enters $(2, \infty)$ immediately with probability one. In fact, when $X$ has paths of unbounded variation, this result can be seen from Theorem 6.5 of Kyprianou (2014, Second Edition) and the associated remarks. In the case when $X$ has paths of bounded variation, $X$ can be decomposed as a positive linear drift minus a pure jump subordinator with bounded variation (see, Lemma 2.12 and Lemma 2.14 of Kyprianou (2014, Second Edition)). Due to the positive linear drift, $X$ enters $(2,\infty)$ immediately, see Page 157 of Kyprianou (2014, Second Edition). Therefore, as long as the liquidation time $T>0$, the running supreme of the surplus process between 0 and $T$ must be greater than 2 almost surely under $\mathbb{P}_{2}$, that is
$$\mathbb{P}_{2}\left(U_{T}\leq u,\overline{U}_{T}< 2\right)\leq \mathbb{P}_{2}\left(\overline{U}_{T}< 2\right)=0,\quad u\in(-\infty,\infty),$$
and hence the boundary value of the surface when $z=2$ in Figure 4(I) is 0.

%Actually, this is because it is set that $x=2$, in which case one has $T>0$ and that the surplus process $U$ starts from $x=2$ and enters $(2, \infty)$ immediately with probability one, implying that $\overline{U}_{T}=\sup_{t\in[0,T]}U_{t}>2$ almost surely under $\mathbb{P}_{2}$, that is
%$$\mathbb{P}_{2}\left(U_{T}\leq u,\overline{U}_{T}< 2\right)\leq \mathbb{P}_{2}\left(\overline{U}_{T}< 2\right)=0,\quad u\in(-\infty,\infty).$$
%We should additionally explain why $U$ starting from $x=2$ enters $(2, \infty)$ immediately with probability one. In fact, when $X$ has paths of unbounded variation, it is seen from Theorem 6.5 of Kyprianoiou (2014, Second Edition) and the remarks right below that theorem that  the point  $2$ is
%regular for $(2,\infty)$, $X$ (hence, $U$) enters $(2,\infty)$ immediately.
%On the other hand, when $X$ has paths of bounded variation, in this case $X$ can be decomposed as a positive linear drift minus a pure jump subordinator with bounded variation (see, Lemma 2.12 and Lemma 2.14 of Kyprianoiou (2014)), hence by the last but one in Page 157 of Kyprianoiou (2014) one also knows that $X$ (hence, $U$) enters $(2,\infty)$ immediately due to the positive linear drift.

Besides, the shape of the surface in Figure 4(I) is consistent to its nature as a bi-variate distribution function. For example, as $u$ or $z$ increases, the level of the surface is higher. As $u$ goes to $-\infty$, the level of the surface is lower and approaching 0. When both $u$ and $z$ increase to $+\infty$, the surface level presents a trend of convergence to a constant which is less than 1 in this case. In fact, this is the so-called defective distribution function which can be seen in $\mathbb{E}_{x}\left[\mathbf{1}_{\{U_{T}\leq u\}}\mathbf{1}_{\{\overline{U}_{T}< z\}}\right]$, where the limit is the probability of liquidation with a value of 0.1998 (by Corollary 3) in this scenario.

\begin{figure}[!htp]
\centering{}\includegraphics[width=6in,height=5.5in]{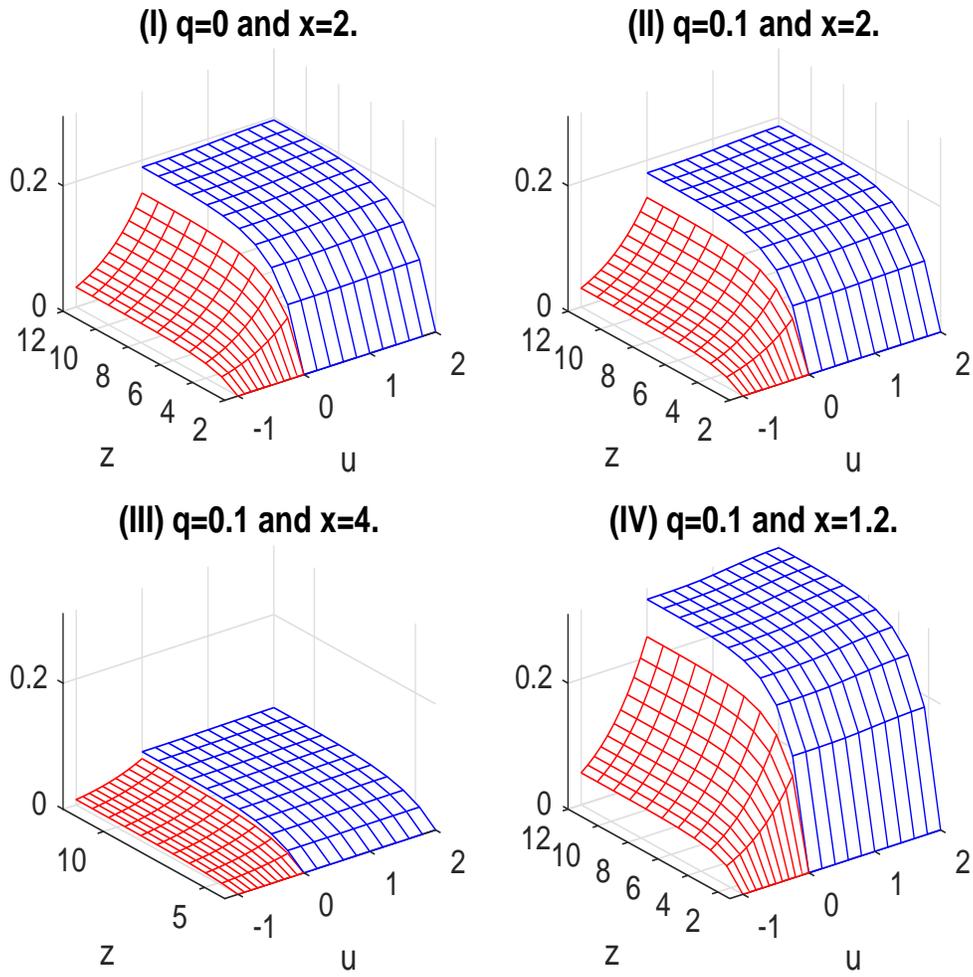}
\caption{The joint distribution of the liquidation time, the surplus at liquidation and historical high of the surplus at liquidation.}
\end{figure}

When the value of $q$ increases from 0 to 0.1 as shown in Figure 4(II), the value of the joint distribution will be compressed. We can observe a little bit lower level of the surfaces in (II) comparing to (I). When the initial wealth increases from $2$ to $4$, the time of liquidation will be deferred, then the value of the joint distribution is even lower, which can be easily seen in (III). In (IV), the initial wealth $x$ decreases to $1.2\in (b,c)$, which allows the insurer stay in the state of {\it solvent} but is already lower than the safety barrier $c$, then the financial stress of the insurer is heavier, which may lead to a much earlier time of liquidation, then the value of the joint distribution is increased as observed in (IV). These figures indicate that our theoretical results are consistent with the intuitive understanding of the model. The impacts of the three-barrier model on the extended expected discounted penalty function are also well studied.

%Another notable difference in (IV) is the \lq\lq bite\rq\rq of the blue surface on the bottom right. %In fact, $z$ refers to the possible value of the surplus at the historical high until liquidation, which should be greater than the surplus at liquidation and the initial wealth. When $x=1.2$ and $u\in [0,2]$, the range of $z$ belongs to $[x\vee u, 12]$, then $z$ equal to the larger value $u$ when $u>x=1.2$.

\section{Conclusions}

This paper uses a three-barrier system to imitate the real-world liquidation process of an insurance company. According to the surplus level of the insurer, the financial status is divided into three states: {\it solvent}, {\it insolvent} and {\it liquidated}. The main feature of the current system is to allow the insurer to reorganize or rehabilitate its financial structure to return to the safety barrier within a granted grace period, rather than a direct bankruptcy in a single-barrier model. To describe the underlying surplus process, we use spectrally negative L\'{e}vy processes, which have been taken as good candidates to model insurance risks. Based on the three barriers, we provide a rigorous definition of the time of liquidation ruin. By adopting the techniques of excursions in the fluctuation theory, we study the joint distribution of the time of liquidation, the surplus at liquidation and the historical high of the surplus until liquidation, which generalizes the known results on the classical expected discounted penalty function. The results have semi-explicit expressions in terms of the scale functions and the L\'{e}vy triplet associated with the L\'{e}vy process. By studying the special cases in our setting, our result show its good consistency with the existing literature. Besides, several numerical examples are provided to compare the Parisian ruin and liquidation ruin, the impacts of model parameters on the liquidation ruin probability, and the joint distribution of the liquidation time, the surplus at liquidation and the historical high of the surplus until liquidation, which are consistent with both the intuitive and theoretical explanations.

\end{document}